\crefname{figure}{Figure}{Figure}
\renewcommand{\smallskip}{}
\begin{document}

\title{Sound Dynamic Deadlock Prediction in Linear Time}


\author{H\"{u}nkar Can Tun\c{c}}
\orcid{0000-0001-9125-8506}
\affiliation{
	\institution{Aarhus University}            
	\country{Denmark}                    
}
\email{tunc@cs.au.dk}          

\author{Umang Mathur}
\orcid{0000-0002-7610-0660}
\affiliation{
  \institution{National University of Singapore}            
  \country{Singapore}                    
}
\email{umathur@comp.nus.edu.sg}          

\author{Andreas Pavlogiannis}
\orcid{0000-0002-8943-0722}
\affiliation{
	\institution{Aarhus University}            
	\country{Denmark}                    
}
\email{pavlogiannis@cs.au.dk}          

\author{Mahesh Viswanathan}
\orcid{0000-0001-7977-0080}
\affiliation{
	\institution{University of Illinois, Urbana Champaign}            
	\country{USA}                    
}
\email{vmahesh@illinois.edu}

\begin{abstract}
Deadlocks are one of the most notorious concurrency bugs, 
and significant research has focused on detecting them efficiently.
\emph{Dynamic predictive analyses} work by observing concurrent executions,
and reason about alternative interleavings that can witness concurrency bugs.
Such techniques offer scalability and sound bug reports, and have emerged as an
effective approach for concurrency bug detection, such as data races.
Effective dynamic deadlock prediction, however, has proven a challenging task,
as no deadlock predictor currently meets the requirements of 
\emph{soundness}, \emph{high-precision}, and \emph{efficiency}.

In this paper, we first formally establish that this tradeoff is unavoidable,
by showing that 
(a) sound and complete deadlock prediction is intractable, in general,
and (b) even the seemingly simpler task of determining the presence of \emph{potential} deadlocks,
which often serve as unsound witnesses for actual predictable deadlocks, is intractable.
The main contribution of this work is a new class of predictable deadlocks, 
called \emph{sync(hronization)-preserving} deadlocks.
Informally, these are deadlocks
that can be predicted by reordering the observed execution while preserving the relative order of conflicting critical sections.
We present two algorithms for \emph{sound} deadlock prediction based on this notion.
Our first algorithm~\SyncPDOffline detects all sync-preserving deadlocks, with running time that is linear per \emph{abstract deadlock pattern}, a novel notion also introduced in this work.
Our second algorithm~\SyncPDOnline predicts all sync-preserving deadlocks
that involve two threads in a \emph{strictly online} fashion, runs in overall linear time, 
and is better suited for a runtime monitoring setting.
%
%

We implemented both our algorithms and evaluated their ability to perform offline and online deadlock-prediction on a large dataset of standard benchmarks.
Our results indicate that our new notion of sync-preserving deadlocks is highly effective, as
(i)~it can characterize the vast majority of deadlocks and
(ii)~it can be detected using an online, sound, complete and highly efficient algorithm.
\end{abstract}




\begin{CCSXML}
<ccs2012>
<concept>
<concept_id>10011007.10011074.10011099</concept_id>
<concept_desc>Software and its engineering~Software verification and validation</concept_desc>
<concept_significance>500</concept_significance>
</concept>
<concept>
<concept_id>10003752.10010070</concept_id>
<concept_desc>Theory of computation~Theory and algorithms for application domains</concept_desc>
<concept_significance>300</concept_significance>
</concept>
<concept>
<concept_id>10003752.10010124.10010138.10010143</concept_id>
<concept_desc>Theory of computation~Program analysis</concept_desc>
<concept_significance>300</concept_significance>
</concept>
</ccs2012>
\end{CCSXML}

\ccsdesc[500]{Software and its engineering~Software verification and validation}
\ccsdesc[300]{Theory of computation~Theory and algorithms for application domains}
\ccsdesc[300]{Theory of computation~Program analysis}

\keywords{concurrency, runtime analyses, predictive analyses}  


\maketitle


\section{Introduction}
\seclabel{intro}


The verification of concurrent programs is a major challenge due to the non-deterministic behavior 
intrinsic to them.
Certain scheduling patterns may be unanticipated by the programmers,
which may then lead to introducing concurrency bugs.
Such bugs are easy to introduce during development but can be very hard to
reproduce during in-house testing, and have been notoriously called \emph{heisenbugs}~\cite{Musuvathi2008}.
Among the most notorious concurrency bugs are deadlocks, occurring when the system blocks its execution because each thread is waiting for another thread to finish a task in a circular fashion.
Deadlocks account for a large fraction of concurrency bugs in the wild across various programming languages~\cite{Lu08,Tu2019}
while they are often introduced accidentally when fixing other concurrency bugs~\cite{Yin2011}.

Deadlock-detection techniques can be broadly classified into static and dynamic techniques.
As usual, static techniques analyze source code and have the potential to prove the absence of 
deadlocks~\cite{Naik2009,Ng2016,Liu2021}.
However, as static analyses face simultaneously two dimensions of non-determinism, namely in inputs and scheduling, they  lead to poor performance in terms of scalability and false positives,
and are less suitable when the task at hand is to help software developers proactively find bugs.
Dynamic analyses, on the other hand, have the more modest goal of discovering deadlocks by analyzing program executions, allowing for better scalability and few (or no) false positives.
Although dynamic analyses cannot prove the absence of bugs, 
they offer \emph{statistical} and \emph{coverage} guarantees.
These advantages have rendered dynamic techniques a standard practice in 
principled testing for various bugs,
such as data races, atomicity violations, deadlocks, and 
others~\cite{Flanagan09,threadsanitizer,Bensalem2005,Flanagan2008,Mathur2020,Biswas14,Savage97,Pozniansky03}.
A recent trend in this direction advocates for 
\emph{predictive analysis}~\cite{Smaragdakis12,Huang14,Kini2017,Flanagan2008,Huang2018,Kalhauge2018,Genc19},
where the goal is to enhance coverage by additionally reasoning about alternative
reorderings of the observed execution trace that \emph{could} have taken place and also manifest the bug.

Due to the difficulty of the problem, many dynamic deadlock analyses focus on detecting \emph{deadlock patterns}, broadly defined as cyclic lock-acquisition patterns in the observed execution trace.
One of the earliest works in this direction is the Goodlock algorithm~\cite{Havelund2000}.
As deadlock patterns are necessary but insufficient conditions for the presence of deadlocks, subsequent work has focused on refining this notion in order to reduce false-positives~\cite{Bensalem2005,Agarwal2005}.
Further techniques reduce the size of the lock graph to improve scalability~\cite{Cai2012,Cai2020}.
To further address the unsoundness (false positives) problem, various works propose controlled-scheduling techniques that attempt to realize deadlock warnings via program re-execution~\cite{Bensalem2006,Joshi2009,Samak2014,Samak2014b,Sorrentino2015}
and exhaustive exploration of all reorderings~\cite{Joshi2010,Koushik05}.

Fully sound deadlock prediction has traditionally relied
on explicitly~\cite{Joshi2010,Koushik05} or symbolically (SMT-based)~\cite{Eslamimehr2014,Kalhauge2018} producing all sound witness reorderings.
The heavyweight nature of such techniques
limits their applicability to executions of realistic size, 
which is often in the order of millions of events.
The first steps for sound, polynomial-time deadlock prediction were made recently with 
\seqc~\cite{Cai2021}, an extension of M2~\cite{Pavlogiannis2020} that targets data races.

This line of work highlights the need for a most-efficient sound deadlock predictor, approaching the golden standard of \emph{linear time}.
Moreover, dynamic analyses are often employed as runtime monitors, and must thus operate \emph{online}, reporting bugs as soon as they occur. Unfortunately, most existing online algorithms only report \emph{deadlock patterns}, 
thus suffering false positives.
The lack of such a deadlock predictor is even more pronounced when contrasted to 
the problem of dynamic race prediction, which has 
seen a recent surge of sound, online, \emph{linear-time} predictors~(e.g.,~\cite{Kini2017,Roemer20}), and highlights the bigger challenges that deadlocks entail.
We address these challenges in this work, 
by presenting the first high-precision, sound dynamic deadlock-prediction algorithm 
that works online and in linear time.


The task of checking if a potential deadlock is a real predictable
deadlock, in general, involves searching for the reordering of the original execution 
that witnesses the deadlock.
The first ingredient towards our technique is the notion of 
\emph{synchronization-preserving reorderings}~\cite{Mathur2021} that help systematize this search space.
\emph{Synchronization-preserving deadlocks} are then those 
predictable deadlocks that can be witnessed in some synchronization-preserving reordering.
We illustrate synchronization-preserving deadlocks using an example in~\cref{subsec:spd_intro}.

This notion of synchronization-preservation, by itself, is not sufficient
when it comes to deadlock detection as the prerequisite step towards
predicting deadlocks also involves identifying  \emph{potential deadlock patterns}.
Unlike data races, where \emph{potential races}
can be identified in polynomial-time, the identification of deadlock patterns
is in general, intractable; we prove this in~\secref{lower-bounds}.
As a result, an approach that works by explicitly enumerating
cycles in a \emph{lock graph} and then checking if any of these cycles is realizable 
to a deadlock is likely to be not scalable.
To tackle this, we propose the novel notion of \emph{abstract deadlock patterns}
which, informally, represent clusters of deadlock patterns of the same signature. 
Intuitively, a set of deadlock patterns have the same signature 
if the threads and locks that participate in the patterns are the same.
Our next \emph{key observation} is that a single abstract deadlock pattern
can be checked for sync-preserving deadlocks in \emph{linear total time} in the length of the execution, 
\emph{regardless} of how many concrete deadlock patterns it represents.
Our first deadlock prediction algorithm \SyncPDOffline builds upon this --- 
it enumerates all abstract deadlock patterns
in a first phase and then checks their realizability in a second phase,
while running in linear time per abstract deadlock pattern.
Since the number of abstract deadlock patterns
is typically \emph{far smaller} than the number of (concrete)
deadlock patterns (see \cref{tab:expr-results} in \secref{experiments}), this approach achieves high scalability.
Our second algorithm \SyncPDOnline works in a single streaming pass --- it computes abstract deadlock patterns
that involve only two threads and checks their realizability \emph{on-the-fly} simultaneously in overall linear time in the length of the execution.


\subsection{Synchronization-Preserving Deadlocks}
\label{subsec:spd_intro}



\begin{figure}[t]
\begin{subfigure}[t]{0.22\textwidth}
\scalebox{0.855}{
\execution{2}{
\figev{1}{$\acq(\LockColorOne{\lk_1})$}
\figev{1}{$\Bacq(\LockColorTwo{\lk_2})$}
\figev{1}{\underline{$\wt(x)$}}
\figev{1}{$\rel(\LockColorTwo{\lk_2})$}
\figev{1}{$\rel(\LockColorOne{\lk_1})$}
\figev{2}{$\acq(\LockColorTwo{\lk_2})$}
\figev{2}{\underline{$\rd(x)$}}
\figev{2}{$\Bacq(\LockColorOne{\lk_1})$}
\figev{2}{$\rel(\LockColorOne{\lk_1})$}
\figev{2}{$\rel(\LockColorTwo{\lk_2})$}
}
}
\caption{
A trace $\tr_1$ with no\\ predictable deadlock.
} 
\label{fig:motivating-no-dl}
\end{subfigure}
~~\hspace{-0.1cm}
\begin{subfigure}[t]{0.75\textwidth}
\scalebox{0.855}{
\execution{4}{
\figev{1}{$\acq(\LockColorOne{\lk_1})$}
\figev{1}{$\rel(\LockColorOne{\lk_1})$}
\figev{2}{\underline{$\acq(\LockColorTwo{\lk_2})$}}
\figev{2}{$\mathbf{\Bacq(\LockColorThree{\lk_3})}$}
\figev{2}{$\wt(z)$}
\figev{2}{$\rel(\LockColorThree{\lk_3})$}
\figev{2}{$\rel(\LockColorTwo{\lk_2})$}
\figev{4}{$\acq(\LockColorOne{\lk_1})$}
\figev{4}{\underline{$\wt(y)$}}
\figev{4}{$\rd(z)$}
}
\execution{4}{
	\figevoffset{10}{4}{$\rel(\LockColorOne{\lk_1})$}
	\figevoffset{10}{1}{$\acq(\LockColorThree{\lk_3})$}
	\figevoffset{10}{1}{$\wt(x)$}
	\figevoffset{10}{1}{$\rd(y)$}
	\figevoffset{10}{1}{\underline{$\rel(\LockColorThree{\lk_3})$}}
	\figevoffset{10}{3}{$\acq(\LockColorThree{\lk_3})$}
	\figevoffset{10}{3}{\underline{$\rd(x)$}}
	\figevoffset{10}{3}{$\Bacq(\LockColorTwo{\lk_2})$}
	\figevoffset{10}{3}{$\rel(\LockColorTwo{\lk_2})$}
	\figevoffset{10}{3}{$\rel(\LockColorThree{\lk_3})$}
}
}
\caption{
A  trace $\tr_2$ with a sync-preserving deadlock, stalling $t_2$ on $e_4$ and $t_3$ on $e_{18}$.
}
\label{fig:motivating-dl}
\end{subfigure}
\caption{
Traces with no predictable deadlock (\subref{fig:motivating-no-dl}), and with a sync-preserving deadlock (\subref{fig:motivating-dl}).
} 
\label{fig:motivating}
\end{figure}


Consider the trace $\tr_1$ in \cref{fig:motivating-no-dl} consisting of 
$10$ events and two threads.
We use $e_i$ to denote the $i$-th event of $\tr_1$.
The events $e_2$ and $e_{8}$ form a \emph{deadlock pattern}:~
they respectively acquire the locks $\LockColorTwo{\lk_2}$ and $\LockColorOne{\lk_1}$ while holding the locks $\LockColorOne{\lk_1}$ and $\LockColorTwo{\lk_2}$,
and no common lock protects these operations.

A deadlock pattern is a necessary but insufficient condition for an actual deadlock:~a sound algorithm must examine whether it can be realized to a deadlock via a witness.
A witness is a reordering $\rho$ of (a slice of) $\tr_1$ 
that is also a valid trace, and such that $e_2$ and $e_{8}$ are locally enabled in their respective threads at the end of $\rho$.
In general, the problem of checking if a deadlock pattern
can be realized is intractable
(\thmref{w1-hardness-pattern}).
In this work we focus on checking whether a given deadlock pattern forms a
\emph{sync-preserving deadlock}, which is a subclass of the class of
all predictable deadlocks.

A deadlock pattern is said to be sync-preserving deadlock
if it can be witnessed in a \emph{sync-preserving reordering}.
A reordering $\rho^{\sf SP}$ of a trace $\tr$ is said to be sync-preserving if 
it preserves the control
flow taken by the original observed trace $\tr$, and further
it preserves the mutual order of any two critical sections (on the same lock)
that appear in the reordering $\rho^{\sf SP}$.
Consider, for example, the sequence $\rho_1 = e_1..e_3\,e_6..e_7$ where $e_i..e_j$ denote the contiguous sequence of events
that starts from $e_i$ and ends at $e_j$. 
We call $\rho_1$ a \emph{correct reordering} of $\tr_1$, being a slice of $\tr_1$ closed under the thread order and preserving the writer of each read in $\tr_1$;
the precise definition is presented in \secref{prelim}.
In this case, however, $\rho_1$ does not witness the deadlock as the event $e_2$ is not \emph{enabled} in $\rho_1$.
In fact, due to the dependency between the events $e_3$ and $e_7$, there are no correct reorderings of $\tr_1$ which make both $e_2$ and $e_8$ enabled.
This makes the deadlock pattern $\pattern{e_2, e_{8}}$ non-predictable.
Consider now $\tr_2$ in \cref{fig:motivating-dl}, and the sequence $\rho_2 = e_3..e_7\,e_8..e_{11}\,e_1 e_2$.
Observe that $\rho_2$ is also a correct reordering.
However, $\rho_2$ is not sync-preserving as the order of 
the two critical sections
on lock $\lk_1$ in $\rho_2$ is different from their original order in $\tr_2$.
On the other hand, $\rho_3 = e_1 e_2 e_3 e_8 e_9 \,e_{12}..e_{15}\,e_{16} e_{17}$
is a correct reordering that is also sync-preserving --- all pairs of critical 
sections on the same lock appear in the same order in $\rho_3$ as they did in $\tr_2$.
Further, $\rho_3$ also witnesses the deadlock as the events $e_4$ and $e_{18}$
are both \emph{enabled} in $\rho_3$.
This makes the deadlock pattern $\pattern{e_4, e_{18}}$ a sync-preserving deadlock.

In this work we show that sync-preserving deadlocks enjoy two remarkable properties.
First, all sync-preserving deadlocks 
of a given \emph{abstract} deadlock pattern can be checked in linear time.
Second, our extensive experimental evaluation on standard benchmarks indicates that 
sync-preservation captures 
a vast majority of deadlocks in practice.
In combination, these two benefits suggest that sync-preservation is the 
right notion of deadlocks to be targeted by dynamic deadlock predictors.


\subsection{Our Contributions}\seclabel{contributions}
In detail, the contributions of this work are as follows.
\begin{enumerate}[label=(\arabic*)]
\item {\bf Complexity of Deadlock Prediction.} Perhaps surprisingly, the complexity of detecting deadlock patterns, as well as predicting deadlocks has remained elusive.
Our first contribution resolves such questions. 
Given a trace $\tr$ of size $\NumEvents$ and $\NumThreads$ threads,
we first show that detecting even one deadlock \emph{pattern} of length $k$ is $\W{1}$-hard in $k$.
This establishes that the problem is $\NP$-hard, and 
further rules out algorithms that are fixed-parameter-tractable in $k$, i.e.,
with running time of the form $f(k)\cdot \poly{\NumEvents}$, for some function $f$.
We next show
that even with just $\NumThreads=2$ threads, the problem of 
detecting a single deadlock \emph{pattern} (of size $k=2$) admits a
quadratic lower bound, i.e., it cannot be solved in time $O(\NumEvents^{2-\epsilon})$,
no matter what $\epsilon > 0$ we choose.
These two results shed light on the difficulty in identifying deadlock patterns --- a task that might otherwise appear easier than the core task of prediction. 
These hardness results, in particular the fine-grained lower bound result,
are based on novel constructions, and results from fine-grained complexity~\cite{williams2018}.
Our third result is about confirming predictable deadlocks ---
even for a deadlock pattern of size $k=2$,
checking whether it yields a \emph{predictable deadlock} is $\W{1}$-hard
in the number of threads $\NumThreads$ (and thus again $\NP$-hard), 
and is inspired from an analogous result in the context of data race prediction~\cite{Mathur2020b}.
These results capture the intractability of deadlock prediction in general, 
even for the class of parametrized algorithms.

\item {\bf Sync-preserving Deadlock Prediction and Abstract Deadlock Patterns.}
Given the above hardness of predicting arbitrary deadlocks,
we define a novel notion of sync(hronization)-preserving deadlocks, illustrated in \cref{subsec:spd_intro}.
We develop $\SyncPDOnline$, an \emph{online, sound} deadlock predictor that takes as input a trace and reports \emph{all} sync-preserving deadlocks of size $2$ in \emph{linear time} $\Otilde(\NumEvents)$\footnote{We use $\Otilde$ to ignore \emph{polynomial} appearance of trace parameters typically much smaller than $\NumEvents$ (e.g., number of threads).}.
As most deadlocks in practice involve only two threads~\cite{Lu08}, restricting $\SyncPDOnline$ to size $2$ deadlocks leads to linear-time deadlock prediction with small impact on its coverage.
We also develop our more general algorithm, $\SyncPDOffline$,
that detects \emph{all} sync-preserving deadlocks of all sizes.
$\SyncPDOffline$ operates in two phases.
In the first phase, it detects all
\emph{abstract deadlock patterns}.
An abstract deadlock pattern is a novel notion that serves as a succinct representation
of the class of deadlock patterns having the same signature.
%
In the second phase, $\SyncPDOffline$ executes 
$\SyncPDOnline$ on each abstract pattern to decide whether a deadlock is formed.
The running time of $\SyncPDOffline$ remains linear in $\NumEvents$, 
but increases by a factor proportional to the number 
of abstract deadlock patterns in the lock graph.

\item {\bf Implementation and Evaluation.}
We have evaluated $\SyncPDOnline$ and $\SyncPDOffline$ in terms of performance and predictive power on a large dataset of standard benchmarks.
In the offline setting, $\SyncPDOffline$ finds the same number of deadlocks as the recently introduced \seqc,
 while achieving a speedup of $>200\times$ on the most demanding benchmarks, and $21\times$ overall.
 In the online setting, $\SyncPDOnline$ achieved a significant improvement in deadlock discovery and deadlock-hit-rate compared to the random scheduling based controlled concurrency testing technique of \dlfuzzer~\cite{Joshi2009}.
Our experiments thus support that the notion of sync-preserving deadlocks is suitable:
(i)~it captures the vast majority of the 
deadlocks in practice, and
(ii)~sync-preserving deadlocks can be detected online and optimally --- that is, soundly, completely and in linear time,
(iii)~it can enhance the deadlock detection capability of controlled concurrency testing techniques, 
(iv)~with reasonable runtime overhead.
\end{enumerate}


\section{Preliminaries}
\seclabel{prelim}

Here we set up our model and develop relevant notation,
following related work in predictive analyses of concurrent programs~\cite{Smaragdakis12,Kini2017,Roemer20}.

\myparagraph{Execution traces}{
	A dynamic analysis observes traces
	generated by a concurrent program,
	and analyzes them to determine the presence of a bug.
	Each such trace $\tr$ is a linear arrangement of events $\events{\tr}$.
	An event $e \in \events{\tr}$ is tuple $e = \ev{i, t, o}$, 
	where $i$ is a unique identifier of $e$,
	$t$ is the unique identifier of the thread performing $e$, and $o$
	is either a read or write ($o = \rd(x)$ or $o = \wt(x)$)
	operation to some variable $x$,
	or an acquire or release ($o = \acq(\lk)$ or $o = \rel(\lk)$)
	operation on some lock $\lk$.
	For the sake of simplicity, we often omit $i$ when referring to an event.
	We use $\ThreadOf{e}$ and $\OpOf{e}$ to respectively denote
	the thread identifier and the operation performed in the event $e$.
	We use $\threads{\tr}$, $\vars{\tr}$ and $\locks{\tr}$
	to denote the set of thread, variable and lock identifiers
	 in $\tr$.

	We restrict our attention to \emph{well-formed} traces $\tr$, that abide to shared-memory semantics.
	That is, if a lock $\lk$
	is acquired at an event $e$ by thread $t$, then any later acquisition
	event $e'$ of the same lock $\lk$
	must be preceded by an 
	event $e''$ that releases lock $\lk$ in thread $t$
	in between the occurrence of $e$ and $e'$.
	Taking $e''$ to be the earliest such release event,
	we say that $e$ and $e''$ are matching acquire and 
	release events, and denote this by $e = \match{\tr}(e'')$ and $e'' = \match{\tr}(e)$.
	Moreover, every read event has at least one preceding write event on the same location, that it reads its value from.
}

\myparagraph{Functions and relations on traces}{
	A trace $\tr$ implicitly defines some relations.
	The \emph{trace-order} $\trord{\tr} \subseteq \events{\tr}\times\events{\tr}$ 
	orders the events of $\tr$ in a total order based
	on their order of occurrence in the sequence $\tr$.
	The \emph{thread-order} $\tho{\tr}$
	is the unique partial order over $\events{\tr}$ 
	such that $e \tho{\tr} e'$
	iff $\ThreadOf{e} = \ThreadOf{e'}$ and $e \trord{\tr} e'$.
	We say $e \stricttho{\tr} e'$ if $e \tho{\tr} e'$ but $e \neq e'$. 
	The \emph{reads-from} function $\rf{\tr}$ is a map
	from the read events to the write events in $\tr$.
	Under sequential consistency, for 
	a read event $e$ on variable $x$, we have that $e' = \rf{\tr}(e)$
	be the latest write event on the same variable $x$
	such that $e' \trord{\tr} e$.
	We say that a lock $\lk \in \locks{\tr}$ is held at an event
	$e \in \events{\tr}$ if there is an event $e'$
	such that (i)~$\OpOf{e'} = \acq(\lk)$, 
	(ii)~$e' \stricttho{\tr} e$, and
	(iii)~either $\match{\tr}(e')$ does not exist in $\tr$,
	or $e \tho{\tr} \match{\tr}(e')$.
	We use $\lheld{\tr}(e)$ to denote the set of all the locks
	that are held by $\ThreadOf{e}$ right before $e$.
	The lock nesting depth of $\tr$ is $\max\limits_{e \in \events{\tr}} |\lheld{\tr}(e)|+1$ where $\OpOf{e} = \acq(\lk)$.
}

\myparagraph{Deadlock patterns}{
	A deadlock pattern\footnote{Similar notions have been used in the literature, sometimes under the term \emph{deadlock potential~\cite{Havelund2000}.}} of size $k$ in a trace $\tr$ is a sequence 
	$D = \pattern{e_0, e_1, \ldots, e_{k-1}}$,
	with $k$ distinct threads $t_0, \ldots, t_{k-1}$
	and $k$ distinct locks $\lk_0, \ldots, \lk_{k-1}$ such that
	$\ThreadOf{e_i} = t_i$, $\OpOf{e_i} = \acq(\lk_i)$,
	$\lk_i \in \lheld{\tr}(e_{(i+1)\%k})$,
	and further, $\lheld{\tr}(e_i) \cap \lheld{\tr}(e_j) = \emptyset$ for every 
	$i, j$ such that $i \neq j$ and $0 \leq i, j < k$.
	A deadlock pattern is a necessary but insufficient condition of an actual deadlock,
	due to subtle synchronization or control and data flow in the underlying program.
}

\myparagraph{Dynamic predictive analysis and correct reorderings}{
	Dynamic analyses aim to expose bugs by observing traces $\tr$ of a concurrent program, often without accessing the source code.
	While such purely dynamic approaches enjoy the benefits of scalability,
	simply detecting bugs that manifest on $\tr$ offer poor coverage and are bound to miss bugs that appear in select thread interleavings~\cite{Musuvathi2008}.
	Therefore, for better coverage, \emph{predictive} dynamic techniques are developed. 
	Such techniques predict the occurrence of bugs in alternate executions that can be \emph{inferred} from $\tr$,
	irrespective of the program that produced $\tr$.
	The notion of such inferred executions is formalized by the notion
	of correct reorderings~\cite{Smaragdakis12,serbanuta2013,Koushik05}.
	
	A trace $\rho$ is a \emph{correct reordering} of a trace $\tr$ if 
	\begin{enumerate*}[label=(\arabic*)]
		\item $\events{\rho} \subseteq \events{\tr}$,
		\item for every $e, f \in \events{\tr}$ with $e \tho{\tr} f$,
		if $f \in \events{\rho}$, then $e \in \events{\rho}$
		and $e \tho{\rho} f$, and
		\item for every read event $r \in \events{\rho}$, we have
		$\rf{\tr}(r) \in \events{\rho}$ and
		$\rf{\rho}(r)=\rf{\tr}(r)$.
	\end{enumerate*}
	Intuitively, a correct reordering $\rho$ of $\tr$ is a permutation of $\tr$
	that respects the thread order and preserves the values of each read and write
	that occur in $\rho$.
	This ensures a key property --- every program that generated 
	$\tr$ is also capable of generating $\rho$
	(possibly under a different thread schedule),
	and thus $\rho$ serves as a true witness of a bug.
}

\myparagraph{Predictable deadlocks}{
	We say that an event $e$
	is $\tr$-enabled in a correct reordering $\rho$ of $\tr$
	if $e \in \events{\tr}$, $e \not\in \events{\rho}$
	and for every $f \in \events{\tr}$ if $f \stricttho{\tr} e$,
	then $f \in \events{\rho}$.
	A deadlock pattern $D = \pattern{e_0, e_1 \ldots e_{k-1}}$
	of size $k$ in trace $\tr$ is said to be a predictable deadlock
	if there is a correct reordering $\rho$ of $\tr$
	such that each of $e_0, \ldots, e_{k-1}$
	are $\tr$-enabled in $\rho$.
	This notion guarantees that the witness $\rho$ is a valid execution of \emph{any}
	concurrent program that produced $\tr$.
	Analogous definitions have also been widely used for other predictable bugs~\cite{Huang14,Smaragdakis12}.
	We call a deadlock-prediction algorithm \emph{sound} if for every input trace $\tr$, 
	all deadlock reports on $\tr$ are predictable deadlocks of $\tr$ (i.e., no false positives), 
	and \emph{complete} if all predictable deadlocks of $\tr$ are reported by the algorithm (i.e., no false negatives).
	This is in line with the previous works on the topic of predictive analyses~\cite{Cai2021, Kalhauge2018, Mathur2021, Pavlogiannis2020}.
	We remark that other domains sometimes use this terminology reversed.
}
%
%
%
\begin{example}
\exlabel{prelim-defs}
Let us illustrate these definitions on the trace $\tr_2$
in \figref{motivating-dl}, with $e_i$ denoting the $i^\text{th}$ event in the figure.
The set of events,
threads, variables and locks of $\tr_2$ are respectively $\events{\tr_2} = \set{e_i}_{i=1}^{20}$,
$\threads{\tr_2} = \set{t_1, t_2, t_3, t_4}$,
$\vars{\tr_2} = \set{x, y, z}$ and 
$\locks{\tr_2} = \set{\lk_1, \lk_2, \lk_3}$.
The trace order yields
$e_i \trord{\tr_2} e_j$ iff $i \leq j$,
and some examples of thread-ordered events are
$e_1 \stricttho{\tr_2} e_2 \stricttho{\tr_2} e_{15}$ and $e_{16} \stricttho{\tr_2} e_{18} \stricttho{\tr_2} e_{20}$.
The reads-from function is as follows: 
$\rf{\tr_2}(e_{10}) = e_5$, $\rf{\tr_2}(e_{14}) = e_9$ and $\rf{\tr_2}(e_{17}) = e_{13}$.
The lock nesting depth of $\tr_2$ is $2$.
The sequence $D = \pattern{e_4, e_{18}}$ forms a deadlock pattern
because of the cyclic acquisition of locks $\lk_2$ and $\lk_3$
without simultaneously holding a common lock.
The trace
$\rho_4 = e_3..e_7\,e_8..e_{11}\,e_1 e_2 e_{12}..e_{15}\,e_{16} e_{17}$
is a correct reordering of $\tr_2$;
even though it differs from $\tr_2$ in the relative order of the critical sections of lock $\lk_1$, and
contains only a prefix of thread $t_3$, it is consistent with $\rf{\tr_2}$ and $\stricttho{\tr_2}$.
However, $\rho_4$ does not witness $\pattern{e_4, e_{18}}$ as a deadlock, as
only $e_{18}$ is $\tr_2$-enabled in $\rho_4$.
On the other hand, the trace $\rho_3 = e_1 e_2 e_3 e_8 e_9 \,e_{12}..e_{15}\,e_{16} e_{17}$
is a correct reordering of $\tr_2$ in which $e_4$ and $e_{18}$ are $\tr_2$-enabled,
witnessing $D$ as a predictable deadlock of $\tr_2$. 
\end{example}

\section{The Complexity of Dynamic Deadlock Prediction}
\seclabel{lower-bounds}

Detecting deadlock patterns and predictable deadlocks is clearly a problem in $\NP$,
as any witness for either problem can be verified in polynomial time.
However, little has been known about the hardness of the problem in terms of rigorous lower bounds.
Here we settle these questions, by proving strong intractability results.
Due to space constraints, we state and explain the main results here, and refer to\begin{pldi}~our technical report \cite{arxiv}\end{pldi}\begin{arxiv}~\cref{sec:sec:app_proofs_lower_bounds}\end{arxiv} for the full proofs.

\myparagraph{Parametrized hardness for detecting deadlock patterns}{
We show that the basic problem of checking the existence of a deadlock \emph{pattern} is itself
hard parameterized by the size $k$
of the pattern.

\begin{restatable}{theorem}{patternwonehardness}
\thmlabel{pattern-w1-hardness-pattern}
Checking if a trace $\tr$ contains a deadlock pattern of size $k$
is $\W{1}$-hard in the parameter $k$.
Moreover, the problem remains $\NP$-hard even when the lock-nesting depth of $\tr$ is constant.
\end{restatable}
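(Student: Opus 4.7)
The plan is to prove the two parts by separate reductions from well-studied hard problems.

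For the $\W{1}$-hardness parameterized by $k$, I will reduce from $k$-Multicolored Clique. Given a graph $G$ with partition $V_1 \cup \cdots \cup V_k$, I will build a trace with $k$ threads $t_1, \ldots, t_k$ (one per color class), $k$ ``cyclic'' locks $\lk_1, \ldots, \lk_k$ that form the intended target locks of the pattern, and a ``non-edge'' lock $\lk_{\{u,v\}}^{\text{ne}}$ for every non-adjacent pair $u \in V_i$, $v \in V_j$ with $i\neq j$. Each thread $t_i$ exposes, for every vertex $v \in V_i$, a \emph{choice} critical section in which it first acquires $\lk_{(i-1) \bmod k}$, then (in some fixed order) acquires all non-edge locks $\lk_{\{v,w\}}^{\text{ne}}$ with $w$ a non-neighbor of $v$ in another color class, and finally acquires $\lk_i$ (which will play the role of the candidate event $e_i$), before releasing everything in reverse. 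The cyclic condition of the deadlock pattern is satisfied because $\lk_i$ is indeed held by $t_{(i+1) \bmod k}$ inside its choice section, while the pairwise disjointness of held-lock sets prevents any pair of chosen vertices $(v_i, v_j)$ from having a missing edge, as otherwise $\lk_{\{v_i, v_j\}}^{\text{ne}}$ would appear simultaneously in the held-lock sets at both $e_i$ and $e_j$. Hence deadlock patterns of size $k$ correspond to multicolored cliques of size $k$.

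For the $\NP$-hardness with constant nesting depth, I will reduce from Directed $k$-Cycle, which is $\NP$-hard since it generalizes Directed Hamiltonian Cycle (take $k = |V|$). Given a directed graph $G = (V,E)$, create one thread $t_v$ and one lock $\lk_v$ per vertex, and for every edge $(u,v) \in E$ insert in $t_u$ the nested critical section $\acq(\lk_v)\,\acq(\lk_u)\,\rel(\lk_u)\,\rel(\lk_v)$. Then at every acquire of $\lk_u$ the held-lock set is the singleton $\{\lk_v\}$, simple directed $k$-cycles in $G$ are in bijection with deadlock patterns of size $k$ in the trace (disjointness of singletons is immediate, as distinct vertices give distinct singletons), and the lock-nesting depth is uniformly $2$.

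The main obstacle, in the first reduction, is to rule out \emph{spurious} deadlock patterns --- for instance, ones that use a non-edge lock (or an auxiliary lock) as some target $\lk_i$ instead of a cyclic lock, which could in principle form unintended cycles in the lock graph. I plan to handle this by sequencing the choice critical sections of each thread so that each non-edge lock is fully released before the next critical section of that thread begins, and by padding the trace with harmless ``clearing'' events that ensure each non-edge lock is held in at most one thread at a time except inside the intended choice sections. With this structural invariant in place, a case analysis on which locks can serve as the $k$ distinct targets shows that any size-$k$ deadlock pattern must pick the cyclic locks $\lk_1, \ldots, \lk_k$, at which point the correspondence above yields the reduction and completes the hardness proof.
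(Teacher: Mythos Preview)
Your second reduction (Directed $k$-Cycle $\Rightarrow$ constant-depth $\NP$-hardness) is correct and is a clean alternative to the paper's argument. The paper establishes both claims via a single reduction from Independent Set: for each vertex-choice block it places the \emph{edge} locks outermost and the two cyclic locks innermost; the depth is $2+\deg(G)$, and constant-depth $\NP$-hardness then follows by restricting to bounded-degree Independent Set. Your Directed $k$-Cycle reduction achieves depth exactly~$2$ with no such restriction, which is arguably nicer.

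Your first reduction, however, has a genuine gap at exactly the point you flag. With the non-edge locks nested \emph{between} the two cyclic locks, spurious deadlock patterns arise that do not correspond to cliques, and no padding or sequencing between choice sections can remove them. Take $k=2$, $V_1=\{a\}$, $V_2=\{b\}$, with no edge. Your construction gives in $t_1$ the block $\acq(\lk_2)\,\acq(n_{ab})\,\acq(\lk_1)\cdots$ and in $t_2$ the block $\acq(\lk_1)\,\acq(n_{ab})\,\acq(\lk_2)\cdots$. Then $\pattern{e_1,e_2}$ with $e_1=\acq(n_{ab})$ in $t_1$ (held set $\{\lk_2\}$) and $e_2=\acq(\lk_2)$ in $t_2$ (held set $\{\lk_1,n_{ab}\}$) is a valid size-$2$ deadlock pattern: distinct threads, distinct target locks $n_{ab}$ and $\lk_2$, cyclic containment $n_{ab}\in\lheld{}(e_2)$ and $\lk_2\in\lheld{}(e_1)$, and disjoint held sets. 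Yet there is no $2$-clique. Both events lie inside the intended choice sections, so your proposed ``clearing'' invariant does not touch this case, and the announced case analysis cannot conclude that all targets are cyclic.

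The repair is precisely the paper's nesting order: put the gadget locks (your non-edge locks, the paper's edge locks) \emph{outermost}, then the two cyclic locks. With that order, at any acquire of a gadget lock the held set contains only earlier gadget locks and no cyclic lock; hence if one target in a deadlock pattern is a gadget lock, chasing the cycle forces all targets to be gadget locks, and the fixed global order on gadget locks then forbids a cycle. This forces all $k$ targets to be cyclic, after which your clique correspondence goes through verbatim. With this change your Multicolored-Clique reduction becomes the complement-graph dual of the paper's Independent-Set reduction.
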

}

\begin{proof}
	We show that there is a polynomial-time fixed parameter tractable reduction from
	INDEPENDENT-SET(c) to the problem of checking the existence of deadlock-patterns
	of size $c$.
	Our reduction takes as input an undirected graph $G$ and outputs a trace $\tr$
	such that $G$ has an independent set of size $c$ iff $\tr$
	has a deadlock pattern of size $c$.
	
	\begin{figure}[]
		\input{figures/construction-w1-hardness}
		\hfill ~~ 
\begin{subfigure}[b]{0.475\textwidth}
	\newcommand{\xdisposition}{0}
	\newcommand{\ydisposition}{0}
	\newcommand{\xtstep}{0.75}
	\newcommand{\ytstep}{1}
	\newcommand{\ybias}{-0.3 }
	\newcommand{\xstep}{2.5}
	\newcommand{\ystep}{-0.475}
	\newcommand{\xtscale}{0.8}
	\def \numevents{9.5}
	\newcommand{\eventA}[4]{
		\node[event, draw=black, fill=white] (A#1) at (#1*\xstep, #2*\ystep) {\footnotesize $#2(x_{#3})$};
	}
	\scalebox{0.9}{
		\begin{tikzpicture}[thick,
			pre/.style={<-,shorten >= 2pt, shorten <=2pt, very thick},
			post/.style={->,shorten >= 3pt, shorten <=3pt,   thick},
			seqtrace/.style={line width=2},
			und/.style={very thick, draw=gray},
			event/.style={rectangle, minimum height=0.8mm, minimum width=15mm,  line width=1pt, inner sep=0.5,},
			virt/.style={circle,draw=black!50,fill=black!20, opacity=0}]
			\footnotesize


			\begin{scope}[shift={(0,-3*\ystep)}]
				
				\draw[dashed] (-0.9*\xstep,6.5*\ystep) rectangle (-0.5*\xstep,8.5*\ystep);
				\node (A1) at (-0.7*\xstep,7*\ystep) {\normalsize [1, 1]};
				\node (A2) at (-0.7*\xstep,8*\ystep) {\normalsize [1, 0]};
				\node (A) at (-0.7*\xstep,9*\ystep) {\large  $A$};
				
				\draw[dashed] (1.5*\xstep,6.5*\ystep) rectangle (1.9*\xstep,8.5*\ystep);
				\node (B1) at (1.7*\xstep,7*\ystep) {\normalsize [1, 0]};
				\node (B2) at (1.7*\xstep,8*\ystep) {\normalsize [0, 1]};
				\node (B) at (1.7*\xstep,9*\ystep) {\large  $B$};

			\end{scope}

			\node[] (S11) at (0*\xstep,0.15) {\normalsize $t_A$};
			\node[] (S12) at (0*\xstep,\numevents * \ystep) {};
			\node[] (S21) at (1*\xstep,0.15) {\normalsize $t_B$};
			\node[] (S22) at (1*\xstep,\numevents * \ystep) {};
			
			\draw[seqtrace] (S11) to (S12);
			\draw[seqtrace] (S21) to (S22);

			\node[event, draw=black, fill=white] (11) at (0*\xstep, 1*\ystep + 0*\ybias) {$\acq(\LockColorTwo{\lk_2})$};
			\node[event, draw=black, fill=white] (12) at (0*\xstep, 2*\ystep + 0*\ybias) {$\acq(\LockColorOne{\lk_1})$};
			\node[event, draw=black, fill=white, dotted] (13) at (0*\xstep, 3*\ystep + 0*\ybias) {$\cs(m_0,m_1)$};
			\node[event, draw=black, fill=white] (17) at (0*\xstep, 4*\ystep + 0*\ybias) {$\rel(\LockColorOne{\lk_1})$};
			\node[event, draw=black, fill=white] (18) at (0*\xstep, 5*\ystep + 0*\ybias) {$\rel(\LockColorTwo{\lk_2})$};
			
			\node[event, draw=black, fill=white] (19) at (0*\xstep, 6*\ystep + 1*\ybias) {$\acq(\LockColorOne{\lk_1})$};
			\node[event, draw=black, fill=white, dotted] (110) at (0*\xstep, 7*\ystep + 1*\ybias) {$\cs(m_0,m_1)$};
			\node[event, draw=black, fill=white] (114) at (0*\xstep, 8*\ystep + 1*\ybias) {$\rel(\LockColorOne{\lk_1})$};
			
			
			\node[event, draw=black, fill=white] (21) at (1*\xstep, 2*\ystep + 0*\ybias) {$\acq(\LockColorOne{\lk_1})$};
			\node[event, draw=black, fill=white, dotted] (22) at (1*\xstep, 3*\ystep + 0*\ybias) {$\cs(m_1,m_0)$};
			\node[event, draw=black, fill=white] (26) at (1*\xstep, 4*\ystep + 0*\ybias) {$\rel(\LockColorOne{\lk_1})$};
			
			\node[event, draw=black, fill=white] (27) at (1*\xstep, 5*\ystep + 1*\ybias) {$\acq_1(\LockColorTwo{\lk_2})$};
			\node[event, draw=black, fill=white, dotted] (28) at (1*\xstep, 6*\ystep + 1*\ybias) {$\cs(m_1,m)$};
			\node[event, draw=black, fill=white] (212) at (1*\xstep, 7*\ystep + 1*\ybias) {$\rel_1(\LockColorTwo{\lk_2})$};
			
			\begin{scope}[]
				\node[below left=of 212] (212b) {};
			\end{scope}
	
		\end{tikzpicture}
	}
	\caption{
		Reduction for OV-hardness proof from an instance of size $n = 2$ and $d = 2$.
	}
	\figlabel{ov-hardness}
\end{subfigure}
		\caption{
		Construction of $\W{1}$-hardness (\subref{fig:w1-hardness}) and OV-hardness (\subref{fig:ov-hardness}) results. 
		We use the shortcut $\cs(\lk,\lk')$ to denote two nested critical sections on $\lk$ and $\lk'$. That is,
	$\cs(\lk,\lk')=\acq(\lk) \cdot\acq(\lk') \cdot\rel(\lk') \cdot \rel(\lk)$.
		} 
	\end{figure}

	\myparagraph{Construction}{
		Let $V = \set{v_1, v_2, \ldots, v_n}$.
		We assume a total ordering $<_E$ on the set of edges $E$.
		The trace $\tr$ we construct is a concatenation of $c$ sub-traces: 
		$
		\tr = \tr^{(1)} \cdot \tr^{(2)} \cdots \tr^{(c)}
		$
		and uses $c$ threads $\set{t_1, t_2, \ldots t_c}$ and
		$|E| + c$ locks $\set{\lk_{\set{u, v}}}_{\set{u, v} \in E} \uplus \set{\lk_0, \lk_1 \ldots, \lk_{c-1}}$.
		The $i^\text{th}$ sub-trace $\tr^{(i)}$ is a sequence of events performed by thread $t_i$, and
		is obtained by concatenation of $n = |V|$ sub-traces:
		$
		\tr^{(i)} = \tr^{(i)}_1 \cdot \tr^{(i)}_2 \cdots \tr^{(i)}_n
		$.
		Each sub-trace $\tr^{(i)}_j$ with $(i \leq c, j \leq n)$ 
		comprises of nested critical sections over locks of the 
		form $\lk_{\set{v_j, u}}$, where $u$ is a neighbor of $v_j$.
		Inside the nested block we have critical 
		sections on locks $\lk_{i \% c}$ and $\lk_{(i+1) \% c}$.
		Formally, let $\set{v_j, v_{k_1}}, \ldots, \set{v_j, v_{k_d}}$
		be the neighboring edges of $v_j$ (ordered according to $<_E$).
		Then, $\tr^{(i)}_j$ is the unique string generated by the
		grammar having $d+1$ non-terminals $S_0, S_1, \ldots, S_d$, start symbol $S_d$
		and the following production rules:
		\begin{itemize}
			\item $S_0 \to \ev{t_i, \acq(\lk_{i \% c})} \cdot \ev{t_i, \acq(\lk_{(i+1) \% c})} \cdot \ev{t_i, \rel(\lk_{(i+1) \% c})} \cdot \ev{t_i, \rel(\lk_{i \% c})}$.
			\item for each $1 \leq r \leq d$, $S_r \to \ev{t_i, \acq(\lk_{\set{v_j, v_{k_r}}})} \cdot S_{r-1} \cdot \ev{t_i, \rel(\lk_{\set{v_j, v_{k_r}}})}$.
		\end{itemize}
		\figref{w1-hardness} illustrates this construction for a graph with $3$ nodes
		and parameter $c = 3$.
		Finally, observe that the lock-nesting depth in $\tr$ is bounded by 2 + the degree of $G$.
	}
\end{proof}

\thmref{pattern-w1-hardness-pattern} implies that the problem is not only $\NP$-hard, 
but also unlikely to be \emph{fixed parameter tractable} in the size $k$ of the deadlock pattern.
In fact, under the well-believed Exponential Time Hypothesis (ETH), the parametrized
problem INDEPENDENT-SET(c) cannot be solved in time $f(c) \cdot n^{o(c)}$~\cite{chen2006strong}.
The above reduction preserves the parameter $k=c$, thus under ETH,
detecting deadlock patterns of size $k$ is unlikely to be solvable 
in time complexity $f(k) \cdot \NumEvents^{g(k)}$, where 
$g(k)$ is $o(k)$ (such as $g(k) = \sqrt{k}$ or even $g(k) = k/\log(k)$).
The problem of checking the existence of deadlock patterns 
is, intuitively, a precursor to the deadlock prediction problem. 
Thus, an approach for 
deadlock prediction that first identifies the existence of arbitrary deadlock patterns and 
then verifying their feasibility is unlikely to be tractable. 
In practice, the synchronization patterns corresponding to the hard instances are uncommon in executions, and our proposed algorithms (\secref{syncp} and \secref{otf}) can effectively expose predictable deadlocks (\secref{experiments}).

\myparagraph{Fine-grained hardness for deadlock pattern detection}{
We next consider the problem of
detecting deadlock patterns of size $2$, as these form the most common case in practice~\cite{Lu08}.
Observe that \thmref{pattern-w1-hardness-pattern} has no implications on this case, as here $k$ is fixed.
The problem admits a folklore $O(\NumEvents^2)$ time algorithm, by iterating over all pairs of lock-acquisition events of the input trace, and checking whether any such pair forms a deadlock pattern.
Perhaps surprisingly, here we show that, despite its simplicity, this algorithm is optimal, i.e., we cannot hope to improve over this quadratic bound.
This result is based on a reduction from the popular Orthogonal Vectors (OV) problem.
Given two sets of $d$-dimensional vectors
$A, B \subseteq \set{0, 1}^d$
of cardinality $|A| = |B| = n$, the OV problem asks if there are $a \in A, b \in B$
such that $a \cdot b = \sum_i a[i]\cdot b[i]= 0$.
The OV hypothesis states that for any 
$\epsilon >0$, there is no $O(n^{2-\epsilon}\cdot \poly{d})$ algorithm for solving OV.
This is also a consequence of the famous Strong Exponential Time Hypothesis (SETH)~\cite{williams2005}.
We next show that detecting deadlock patterns of size $2$ is at least as hard as solving OV.

\begin{restatable}{theorem}{patternovhardness}
\thmlabel{pattern-ov-hardness}
Given a trace $\tr$ of size $\NumEvents$ and $\NumLocks$ locks, 
for any $\epsilon>0$,
there is no algorithm that determines in $O(\NumEvents^{2-\epsilon}\cdot \poly{\NumLocks})$ time
whether $\tr$ has a deadlock pattern of size $2$, under the OV hypothesis.
\end{restatable}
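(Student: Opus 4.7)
The plan is to establish the lower bound via a fine-grained reduction from the Orthogonal Vectors (OV) problem. Given an OV instance with sets $A, B \subseteq \{0,1\}^d$ of cardinality $n$ each, I would construct a trace $\tr$ over two threads $t_A, t_B$ using $d+2$ locks: two \emph{witness locks} $\lk_A, \lk_B$ and $d$ \emph{coordinate locks} $m_1, \ldots, m_d$. All of $t_A$'s events come before all of $t_B$'s events, which makes well-formedness automatic. For each $a \in A$, thread $t_A$ runs a block that first executes $\acq(m_p)$ in increasing order of $p$ for every $p$ with $a[p]=1$, then $\acq(\lk_A)$, then the \emph{witness acquire} $\acq(\lk_B)$, and finally performs the matching releases in reverse order; thread $t_B$ runs the symmetric block for each $b \in B$, with the roles of $\lk_A$ and $\lk_B$ swapped.

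At the witness $e_a$ inside the $a$-block we have $\lheld{\tr}(e_a) = \{\lk_A\} \cup \{m_p : a[p]=1\}$, and symmetrically $\lheld{\tr}(e_b) = \{\lk_B\} \cup \{m_p : b[p]=1\}$, so the size-$2$ pattern condition on $(e_a, e_b)$ reduces exactly to $\{p : a[p]=1\} \cap \{p : b[p]=1\} = \emptyset$, i.e., $a \cdot b = 0$. The main obstacle is ruling out \emph{spurious} patterns formed from intermediate $\acq(m_p)$ events. My key design choice is to acquire $\lk_A$ immediately before the witness and release it immediately after, so that $\lk_A$ is not in $\lheld{\tr}(e')$ for \emph{any} acquire event $e' \in t_B$, and symmetrically for $\lk_B$ in $t_A$. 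Consequently, in any candidate pattern $(e, e')$ with $e \in t_A, e' \in t_B$, the lock acquired at $e$ must lie in $\{\lk_B\} \cup \{m_p\}$ and the lock acquired at $e'$ in $\{\lk_A\} \cup \{m_p\}$. A short case analysis then eliminates every combination other than the witness pair: whenever either event is an intermediate $\acq(m_p)$, the cyclic-hold condition either directly fails (because $\lk_A, \lk_B$ are absent from its held set) or it forces contradictory orderings $p<q$ and $q<p$ on two coordinate indices in the two threads' acquire orders.

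Finally, size accounting gives $\NumEvents = O(nd)$ and $\NumLocks = O(d)$, so an algorithm deciding the existence of size-$2$ deadlock patterns in time $O(\NumEvents^{2-\epsilon}\cdot\poly{\NumLocks})$ would decide OV in $O(n^{2-\epsilon}\cdot\poly{d})$ time, contradicting the OV hypothesis (and, by implication, SETH). I expect the case analysis eliminating non-witness events to be the most delicate point, but the careful placement of the witness-lock acquires just \emph{after} all coordinate acquires inside each block is precisely what makes that analysis short.
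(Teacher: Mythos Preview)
Your proposal is correct and follows essentially the same approach as the paper: the paper also builds a two-thread trace with $d+2$ locks, nesting the coordinate locks on the outside and two ``witness'' locks innermost in opposite order for $t_A$ and $t_B$, and rules out spurious patterns via the same two observations (coordinate locks are acquired in a fixed global order, and witness locks are never held at any coordinate acquire). The only differences are cosmetic naming choices and that the paper states the elimination of non-witness acquires more tersely while you spell out the case analysis.
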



\begin{proof}
	We show a fine-grained reduction from the Orthogonal Vectors Problem to
	the problem of checking for deadlock patterns of size $2$.
	For this, we start with two sets
	$A, B \subseteq \set{0, 1}^d$
	of $d$-dimensional vectors with $|A| = |B| = n$.
	We write the $i^{th}$ vector in $A$ as $A_i$ and that in $B$ as $B_i$.

	\myparagraph{Construction}{
		We will construct a trace $\tr$ such that $\tr$ has a deadlock
		pattern of length $2$ iff $(A, B)$ is a positive OV instance.
		The trace $\tr$ is of the form $\tr = \tr^A \cdot \tr^B$
		and uses $2$ threads $\set{t_A, t_B}$ and $d+2$ distinct locks $\lk_1, \ldots, \lk_d, m_0, m_1$. 
		Intuitively, $\tr^A$ and $\tr^B$  encode the given sets of vectors $A$ and $B$.
		The sub-traces $\tr^A = \tr^A_1 \cdot \tr^A_2 \cdots \tr^A_n$
		and $\tr^B = \tr^B_1 \cdot \tr^B_2 \cdots \tr^B_n$ are defined as follows.
		For each $i \in \set{1, 2, \ldots, n}$ and $Z \in \set{A, B}$, the
		sub-trace $\tr^Z_i$ is the unique string generated by the
		grammar having $d+1$ non-terminals $S_0, S_1, \ldots, S_d$, start symbol $S_d$
		and the following production rules:
		\begin{itemize}
			\item $S_0 \to \ev{t_Z, \acq(m)} \cdot \ev{t_Z, \acq(m')} \cdot \ev{t_Z, \rel(m')} \cdot \ev{t_Z, \rel(m)}$,
			where $(m, m') = (m_0, m_1)$ if $Z = A$, and $(m, m') = (m_1, m_0)$.
			\item for each $1 \leq j \leq d$, $S_j \to S_{j-1}$ if $Z_i[j] = 0$.
			Otherwise (if $Z_i[j] =1$), $S_j \to \ev{t_Z, \acq(\lk_j)} \cdot S_{j-1} \cdot \ev{t_Z, \rel(\lk_j)}$.
		\end{itemize}
		In words, all events of $\tr^A$ are performed by thread $t_A$ and those
		in $\tr^B$ are performed by $t_B$.
		Next, the $i^{th}$ sub-trace of $\tr^A$, denoted $\tr^A_i$ corresponds to the vector $A_i$
		as follows --- $\tr^A_i$ is a nested block of critical sections,
		with the innermost critical section being on lock $\lk'$,
		which is immediately enclosed in a  critical section on lock $\lk$.
		Further, in the sub-trace $\tr^A_i$, the lock $\lk_j$
		occurs iff $A_i[j] = 1$.
		The sub-traces $\tr^B_i$ is similarly constructed, except that the order
		of the two innermost critical sections is inverted.
		\figref{ov-hardness} illustrates the construction for an OV-instance with $n=2$ and $d=2$.}
\end{proof}

\myparagraph{The complexity of deadlock prediction}{
Finally, we settle the complexity of the prediction problem for deadlocks,
and show that, even for deadlock patterns of size $2$, the problem is $\W{1}$-hard
parameterized by the number of threads.
In contrast, recall that the $\W{1}$-hardness of 
\thmref{pattern-w1-hardness-pattern} concerns deadlock patterns of arbitrary size.
Our result is based on a similar hardness that was established recently for predicting data races~\cite{Mathur2020b}.

\begin{restatable}{theorem}{wonehardness}
\thmlabel{w1-hardness-pattern}
The problem of checking if a trace $\tr$ has a predictable deadlock of size $2$
is $\W{1}$-hard in the number of threads $\NumThreads$ appearing in $\tr$, and thus
is also $\NP$-hard.
\end{restatable}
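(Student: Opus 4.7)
The plan is to reduce from $k$-CLIQUE (or more conveniently MULTICOLORED $k$-CLIQUE), which is $\W{1}$-hard parameterized by $k$, to the problem of determining whether an input trace $\tr$ has a predictable deadlock of size $2$. Given a graph $G=(V,E)$ with $|V|=n$, I would construct a trace $\tr$ using $\Theta(k)$ threads and size polynomial in $n$ such that $G$ has a $k$-clique iff $\tr$ has a predictable deadlock of size $2$. The parameter $k$ of the source problem will correspond to the number of threads $\NumThreads$ in $\tr$, yielding $\W{1}$-hardness in $\NumThreads$. The construction follows the style of the data race prediction reduction of \cite{Mathur2020b}, adapted to the more constrained setting of deadlocks.

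The construction has three layers. First, two \emph{focal} threads $t^\star$ and $t^{\star\star}$ each produce a nested critical section of the form $\acq(\lk_1)\cdot\acq(\lk_2)\cdot\rel(\lk_2)\cdot\rel(\lk_1)$ respectively $\acq(\lk_2)\cdot\acq(\lk_1)\cdot\rel(\lk_1)\cdot\rel(\lk_2)$, so that the two inner acquires form the single deadlock pattern $\pattern{e^\star,e^{\star\star}}$ we ask the predictor to realize. Second, I introduce $k$ \emph{selector} threads, where each selector must commit, via a chain of reads and writes on per-vertex variables, to one of the $n$ vertices of $G$. Third, I introduce \emph{edge gadgets} built from shared reads whose required $\rf{}$-sources differ depending on the vertex choices of pairs of selector threads: any correct reordering $\rho$ that brings all local predecessors of the focal events (but neither focal event itself) into $\rho$ must fix a consistent vertex for each selector, and the reads-from values can be preserved only when the chosen pair is adjacent in $G$.

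The key steps, in order, are: (i) design the per-vertex commit gadgets so that each selector thread contributes, in any witness $\rho$, exactly one vertex choice, with all other choices blocked by reads-from obligations; (ii) design the pairwise edge gadgets so that simultaneously realizing two selectors' choices inside $\rho$ requires the corresponding vertices to be adjacent; (iii) show that the focal events $e^\star, e^{\star\star}$ can be $\tr$-enabled in $\rho$ iff all $\binom{k}{2}$ edge gadgets are satisfied, i.e., the $k$ chosen vertices form a clique; and finally (iv) bound the lock nesting depth and size of $\tr$ so the reduction is polynomial and preserves the parameter.

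The main obstacle I expect is the two-sided enabling requirement specific to deadlock prediction: unlike data race prediction where it suffices to make a single pair of events concurrent, here the witness $\rho$ must enable both $e^\star$ and $e^{\star\star}$ \emph{at the same time}, while respecting \emph{all} reads-from constraints of the selector and edge gadgets and keeping the locks held by their local predecessors available. In particular, I must carefully engineer the locks used inside the gadgets so that (a) no gadget locks clash with $\lk_1,\lk_2$ or with each other in a way that would force $\match{\tr}$-partners of held locks into $\rho$ and accidentally enable extra events in the focal threads, and (b) yet the gadget critical sections serialize the selector commits enough that no two conflicting vertex selections can coexist in $\rho$. Discharging this requires a layered lock-and-value scheme where each gadget's internal critical sections are protected by a unique auxiliary lock that is only used within that gadget, and the focal threads hold no gadget locks at $e^\star, e^{\star\star}$.
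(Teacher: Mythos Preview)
Your plan is sound, but you are rebuilding the entire clique-to-prediction construction of \cite{Mathur2020b} from scratch when the paper takes a much shorter route: it reduces directly \emph{from} data race prediction (already $\W{1}$-hard in $\NumThreads$ by \cite{Mathur2020b}) \emph{to} size-$2$ deadlock prediction. Given a race instance $(\tr', e_1, e_2)$, one introduces two fresh locks $\lk_1,\lk_2$ and replaces $e_1$ by the nested block $\acq(\lk_1)\cdot\acq(\lk_2)\cdot\rel(\lk_2)\cdot\rel(\lk_1)$ and $e_2$ by $\acq(\lk_2)\cdot\acq(\lk_1)\cdot\rel(\lk_1)\cdot\rel(\lk_2)$. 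The two inner acquires form the unique deadlock pattern, and because $\lk_1,\lk_2$ are fresh, a correct reordering of the new trace enables both inner acquires iff a correct reordering of $\tr'$ enables both $e_1,e_2$, i.e., witnesses the race. The thread count is preserved, so $\W{1}$-hardness transfers immediately.

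What your approach buys is self-containment and explicit control over every gadget; what the paper's buys is brevity, since all the delicate engineering you anticipate (selector commits, edge gadgets, lock disjointness) is already discharged inside the cited race result. Your stated worry that deadlock prediction imposes a stricter ``two-sided enabling'' requirement than race prediction is unfounded: the standard notion of a predictable race already asks for a correct reordering in which both conflicting events are simultaneously enabled, which is exactly the condition needed for the two focal acquires. So there is no additional obstacle specific to deadlocks; the race hardness carries over verbatim under the trivial substitution.
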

}

\section{Synchronization-Preserving Deadlocks and their Prediction}
\seclabel{syncp}
Having established the intractability of general deadlock prediction in \secref{lower-bounds}, we now define the subclass of predictable deadlocks called synchronization-preserving (\emph{sync-preserving}, for short) in \secref{syncp-def}.
The key benefit of sync-preserving deadlocks is that, unlike arbitrary deadlocks, they can be detected efficiently; we develop our algorithm $\SyncPDOffline$ for this task in Sections 4.2-4.5.
Our experiments later indicate that most predictable deadlocks are actually sync-preserving, hence the benefit of fast detection comes at the cost of little-to-no precision loss in practice.

\myparagraph{Overview of the algorithm}{
	There are several insights behind our algorithm.
	First, given a deadlock pattern, one can verify 
	if it is a sync-preserving deadlock in linear time (\secref{verify-patterns});
	this is based on our sound and complete characterization of sync-preserving deadlocks (\secref{characterize-patterns}).
	Next, instead of verifying single deadlock patterns one-by-one, 
	we consider \emph{abstract deadlock patterns},
	which are essentially collections of deadlock patterns
	that share the same signature;
	the formal definition is given in \secref{verify-abstract-patterns}.
	We show that our basic algorithm can
	be extended to \emph{incrementally}
	verify \emph{all} the concretizations of an abstract deadlock pattern in linear time (\secref{verify-abstract-patterns}), in a single pass (\lemref{abstract-pattern-linear-time}).
	Finally, we feed this algorithm all the abstract deadlock patterns of the input trace,
    by constructing an \emph{abstract lock graph} and enumerating cycles in it (\secref{enumerate-patterns}).
	
}

\subsection{Synchronization-Preserving Deadlocks}
\seclabel{syncp-def}

Our notion of sync-preserving deadlocks builds on the recently introduced concept of sync-preserving correct reorderings~\cite{Mathur2021}.

\begin{definition}[Sync-preserving Correct Reordering]
\deflabel{syncp-correct-reordering}
A correct reordering $\rho$ of a trace $\tr$ is
\emph{sync-preserving} if for every lock $\lk \in \locks{\rho}$
and every two acquire events $e_1 \neq e_2 \in \events{\rho}$
with $\OpOf{e_1} = \OpOf{e_2} = \acq(\lk)$, the order of $e_1$ and
$e_2$ is the same in $\tr$ and $\rho$, i.e., 
$e_1 \trord{\rho} e_2$ iff $e_1 \trord{\tr} e_2$.
\end{definition} 
A sync-preserving correct reordering
preserves the order of those critical sections (on the same lock) that actually appear in 
the reordering, but allows 
intermediate critical sections to be dropped completely.
%
This style of reasoning is more permissive than
the space of reorderings 
induced by the Happens-Before partial order~\cite{Lamport78},
that implicitly enforces that all intermediate critical sections on a lock
be present.
Sync-preserving deadlocks can now be defined naturally.
\begin{definition}[Sync-preserving Deadlocks]
\deflabel{syncp-deadlock}
Let $\tr$ be a trace and $D = \pattern{e_0, e_1, \ldots, e_{k-1}}$
be a deadlock pattern.
We say that $D$ is a sync-preserving deadlock of $\tr$
if there is a sync-preserving correct reordering $\rho$ of $\tr$
such that each of $e_0, \ldots, e_{k-1}$ is $\tr$-enabled in $\rho$.
\end{definition}


\begin{figure}
	\centering
	\begin{subfigure}[t]{0.17\textwidth}
		\scalebox{0.9}{
			\execution{3}{
				\figev{1}{$\acq(\LockColorOne{\lk_1})$}
				\figev{1}{$\acq(\LockColorTwo{\lk_2})$}
				\figev{1}{$\rel(\LockColorTwo{\lk_2})$}
				\figev{1}{$\acq(\LockColorTwo{\lk_2})$}
				\figev{1}{$\wt(y)$}
				\figev{1}{$\rel(\LockColorTwo{\lk_2})$}
				\figev{1}{$\rel(\LockColorOne{\lk_1})$}
				\figev{2}{$\acq(\LockColorThree{\lk_3})$}
				\figev{2}{$\wt(x)$}
				\figev{2}{$\rd(y)$}
				\figev{2}{$\rel(\LockColorThree{\lk_3})$}
				\figev{3}{$\acq(\LockColorTwo{\lk_2})$}
				\figev{3}{$\acq(\LockColorThree{\lk_3})$}
				\figev{3}{$\rd(x)$}
				\figev{3}{$\rel(\LockColorThree{\lk_3})$}
				\figev{3}{$\acq(\LockColorOne{\lk_1})$}
			}
		}
	\end{subfigure}
	\hspace*{\fill}
	\begin{subfigure}[t]{0.17\textwidth}
		\scalebox{0.9}{
			\execution{3}{
				\figevoffset{16}{3}{$\wt(v)$}
				\figevoffset{16}{3}{$\rel(\LockColorOne{\lk_1})$}
				\figevoffset{16}{3}{$\acq(\LockColorOne{\lk_1})$}
				\figevoffset{16}{3}{$\rel(\LockColorOne{\lk_1})$}
				\figevoffset{16}{3}{$\rel(\LockColorTwo{\lk_2})$}
				\figevoffset{16}{2}{$\acq(\LockColorFour{\lk_4})$}
				\figevoffset{16}{2}{$\acq(\LockColorOne{\lk_1})$}
				\figevoffset{16}{2}{$\wt(z)$}
				\figevoffset{16}{2}{$\rd(v)$}
				\figevoffset{16}{2}{$\rel(\LockColorOne{\lk_1})$}
				\figevoffset{16}{2}{$\rel(\LockColorFour{\lk_4})$}
				\figevoffset{16}{1}{$\acq(\LockColorOne{\lk_1})$}
				\figevoffset{16}{1}{$\acq(\LockColorTwo{\lk_2})$}
				\figevoffset{16}{1}{$\rd(z)$}
				\figevoffset{16}{1}{$\rel(\LockColorTwo{\lk_2})$}
				\figevoffset{16}{1}{$\rel(\LockColorOne{\lk_1})$}
			}
		}
	\end{subfigure}
	\hspace*{\fill}
	\begin{subfigure}[t]{0.35\textwidth}
		\begin{tikzpicture}[baseline=-180pt]
			\tikzset{rectangle/.append style={draw=black}}
			\scalebox{0.84}{
				\def\xstep{2.2}
				\def\ystep{0.5}
				\node[align=left] (tt1) at (0*\xstep, -3.5){
					$
					\eta_1=
					\tuple{t_1, \LockColorTwo{\lk_2}, \{ \LockColorOne{\lk_1} \},
						\sequence{e_2, e_4, e_{29}}}
					$\\
					$
					\eta_2=
					\tuple{
						t_2, \LockColorOne{\lk_1}, \{ \LockColorFour{\lk_4} \},
						\sequence{e_{23}}}
					$\\
					$
					\eta_3=
					\tuple{
						t_3, \LockColorOne{\lk_1}, \{ \LockColorTwo{\lk_2} \},
						\sequence{e_{16}, e_{19}}}
					$\\
					$
					\eta_4=
					\tuple{
						t_3, \LockColorThree{\lk_3}, \{ \LockColorTwo{\lk_2} \},
						\sequence{e_{13}}}
					$
				};
				
				\node[align=left] (tt2) at (0*\xstep, -4.8){
					$\abst{D}=\pattern{\eta_1, \eta_3}$
				};
				
				\node[align=left] (tt3) at (-0.5*\xstep, -6){
					$D_1=\pattern{e_2, e_{16}}$\\
					$D_2=\pattern{e_2, e_{19}}$\\
					$D_3=\pattern{e_4, e_{16}}$\\
				};
				\node[align=left] (tt4) at (0.5*\xstep, -6){
					$D_4=\pattern{e_4, e_{19}}$\\
					$D_5=\pattern{e_{29}, e_{16}}$\\
					$D_6=\pattern{e_{29}, e_{19}}$\\
				};
			}
		\end{tikzpicture}
	\end{subfigure}
	\caption{
		A trace $\tr_3$, its abstract acquires $\eta_i$, unique abstract deadlock pattern $\abst{D}$, concrete patterns $D_i\in \abst{D}$. 
	}
	\figlabel{syncp_example}
\end{figure}

\begin{example}
\exlabel{sp-deadlocks}
Consider the trace $\tr_2$ in \figref{motivating-dl}.
The deadlock pattern $D = \pattern{e_4, e_{18}}$ is a sync-preserving deadlock, witnessed by the sync-preserving correct reordering
$\rho_3 = e_1e_2e_3 e_8 e_9 \,e_{12}..e_{15}\,e_{16} e_{17}$.
Now consider the trace $\tr_3$ from \figref{syncp_example}
and the deadlock pattern $D_5 = \pattern{e_{29}, e_{16}}$.
This is a predictable deadlock, witnessed by the correct reordering
$\rho_5 = e_1..e_7e_8..e_{11}e_{12}..e_{15}\,e_{28}$.
Observe that $\rho_5$ is a sync-preserving reordering, which makes $D_5$ a sync-preserving deadlock.
A key aspect in $\rho_5$ is that the events $e_{22}..e_{27}$ are dropped, as otherwise $e_{16}$ cannot be $\sigma_3$-enabled.
A similar reasoning applies for the deadlock pattern $D_6$, and it is also a sync-preserving deadlock.
The other deadlock patterns ($D_1, D_2, D_3, D_4$)
are not predictable deadlocks.
Intuitively, the reason for this is that realizing these deadlock patterns require executing the read event $e_{14}$, which then enforces to execute the events $e_8..e_{11}$ and $e_1..e_6$.
This prevents the deadlocks from becoming realizable as the events 
$e_2$ or $e_4$ that appear in these deadlock patterns are no longer $\sigma_3$-enabled.
This point is detailed in \exref{syncp-ex}.
\end{example}


\subsection{Characterizing Sync-Preserving Deadlocks}
\seclabel{characterize-patterns}

There are two fundamental tasks in 
searching for a correct reordering that witnesses a deadlock --- 
(i)~determining the set of events in the correct reordering, and 
(ii)~identifying a total order on such events --- both of which are intractable~\cite{Mathur2020b}.
On the contrary, for sync-preserving deadlocks, we show that
(a)~the search for a correct reordering can be reduced
to the problem of checking if some well-defined set of events (\defref{spclosure})
does not contain the events appearing in the deadlock pattern (\lemref{spclosure-deadlock}),
and that (b)~this set can be constructed efficiently.
\begin{definition}[Sync-Preserving Closure]
\deflabel{spclosure}
Let $\tr$ be a trace and $S \subseteq \events{\tr}$.
The sync-preserving closure of $S$, denoted
$\SPClosure{\tr}(S)$ is the smallest set $S'$ such that
\begin{enumerate*}[label=(\alph*)]
	\item \itmlabel{def-item-spclosure-a} $S \subseteq S'$,
	\item \itmlabel{def-item-spclosure-b} for every $e, e' \in \events{\tr}$ such that 
	$e \stricttho{\tr} e'$ or $e = \rf{\tr}(e')$, if $e' \in S'$, then $e \in S'$, and
	\item\itmlabel{acq-order-closure} for every lock $\lk$ and every two distinct events $e, e'\in S'$ with $\OpOf{e} = \OpOf{e'} = \acq(\lk)$,
	if $e \trord{\tr} e'$ then $\match{\tr}(e) \in S'$.
\end{enumerate*}
\end{definition}

\defref{spclosure} resembles 
the notion of correct reorderings (\defref{syncp-correct-reordering}). 
Indeed, \lemref{spclosure-necessary-sufficient} justifies using this set --- it is
both a necessary and a sufficient set for sync-preserving correct reorderings.

\begin{restatable}{lemma}{spclosureNecessarySufficient}
\lemlabel{spclosure-necessary-sufficient}
Let $\tr$ be a trace and let $S \subseteq \events{\tr}$.
For any sync-preserving correct reordering $\rho$ of $\tr$, 
	if $S \subseteq \events{\rho}$, then $\SPClosure{\tr}(S) \subseteq \events{\rho}$.
	Further, there is a sync-preserving correct reordering $\rho$ of 
	$\tr$ such that $\events{\rho} = \SPClosure{\tr}(S)$.
\end{restatable}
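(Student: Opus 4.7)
The proof naturally splits into two directions corresponding to the two clauses of the lemma, and I plan to handle each by a direct appeal to the definitions.

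For the \emph{necessity} direction, my plan is to argue by minimality: since $\SPClosure{\tr}(S)$ is defined as the smallest set satisfying the three closure properties of \defref{spclosure}, it suffices to verify that $\events{\rho}$ itself is already closed under (a), (b), (c). Clause (a) is immediate from the hypothesis $S \subseteq \events{\rho}$. Clause (b) follows directly from $\rho$ being a correct reordering of $\tr$, which by definition closes $\events{\rho}$ downward under both the thread order and the inherited $\rf{\tr}$ map. Clause (c) is the crucial one: given two acquires $e \neq e'$ of a common lock $\lk$ in $\events{\rho}$ with $e \stricttrord{\tr} e'$, sync-preservation of $\rho$ forces $e \stricttrord{\rho} e'$; then well-formedness of $\rho$ as a lock-respecting trace forces a release of $\lk$ by $\ThreadOf{e}$ to appear between $e$ and $e'$ in $\rho$. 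Because matching acquire/release pairs are determined thread-locally by the ``next release'' in that thread, this release must coincide with $\match{\tr}(e)$, placing it in $\events{\rho}$ as required.

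For the \emph{sufficiency} direction, my plan is to exhibit the witness explicitly by taking $\rho$ to be the projection of $\tr$ onto $\SPClosure{\tr}(S)$, inheriting its relative order from $\trord{\tr}$. The sync-preservation condition of \defref{syncp-correct-reordering} and the thread-order preservation of the correct-reordering definition are both immediate, since projecting from a linear order preserves relative orderings. For the reads-from condition $\rf{\rho}(r) = \rf{\tr}(r)$, I will observe that $\rf{\tr}(r) \in \events{\rho}$ by closure property (b), and that no write to the variable of $r$ can lie strictly between $\rf{\tr}(r)$ and $r$ in $\rho$, simply because no such write exists in $\tr$ (by the definition of $\rf{\tr}(r)$ as the \emph{latest} preceding write) and projection cannot manufacture one.

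The main obstacle is verifying that $\rho$, thus constructed, is itself a well-formed trace, i.e., that its acquire/release events form properly nested and matched critical sections. Here I expect property (c) to carry the weight: whenever two acquires $e \trord{\tr} e'$ of the same lock both survive in $\rho$, (c) ensures that $\match{\tr}(e)$ also survives in $\rho$ and, being between $e$ and $e'$ in $\tr$, is between them in $\rho$. Combined with the thread-local correspondence between matching pairs, this prevents any ``orphan'' open critical section from being followed by a conflicting acquire. Once well-formedness is established, $\rho$ meets every requirement of a sync-preserving correct reordering with $\events{\rho} = \SPClosure{\tr}(S)$, completing the argument.
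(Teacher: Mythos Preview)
Your proposal is correct and follows essentially the same route as the paper: for necessity you verify that $\events{\rho}$ satisfies all three closure conditions of \defref{spclosure} and invoke minimality, and for sufficiency you take $\rho$ to be the projection of $\tr$ onto $\SPClosure{\tr}(S)$ and check well-formedness, the correct-reordering conditions, and sync-preservation. Your treatment is in fact slightly more explicit than the paper's in a couple of places (the well-formedness argument for clause (c) in the necessity direction, and the $\rf{\rho}=\rf{\tr}$ check in the sufficiency direction), but the underlying argument is the same.
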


For an intuition, consider again \cref{fig:syncp_example} and the sync-preserving correct reordering $\rho_5 = e_1..e_7e_8..e_{11}e_{12}..e_{15}\,e_{28}$ computed in \cref{ex:sp-deadlocks}.
According to \cref{lem:spclosure-necessary-sufficient}, $\SPClosure{\tr_3}(S) \subseteq \events{\rho_5}$ holds for all $S$ such that $S \subseteq \events{\rho_5}$.
For example, if we take $S=\set{e_1, e_{15}}$ then observe that $S \subseteq \events{\rho_5} $ and  $\SPClosure{\tr_3}(S)=\set{e_1, \ldots, e_6, \, e_8, \ldots e_{15}} \subseteq \events{\rho_5}$ holds.

Based on~\lemref{spclosure-necessary-sufficient}, we 
present a sound and complete characterization of sync-preserving deadlocks
(\lemref{spclosure-deadlock}).
For a set $S \subseteq \events{\tr}$, 
we let $\prev{\tr}(S)$ denote the set of immediate thread
predecessors of events in $S$. 
That is,
$\prev{\tr}(S) = \setpred{e \in \events{\tr}}{\exists f \in S, e \stricttrord{\tr} f \text{ and } \forall e' \stricttrord{\tr} f, e' \tho{\tr} e}$.

\begin{restatable}{lemma}{spclosureDeadlock}
\lemlabel{spclosure-deadlock}
Let $\tr$ be a trace and let $D = \pattern{e_0, \ldots, e_{k-1}}$
be a deadlock pattern of size $k$ in $\tr$.
$D$ is a sync-preserving deadlock of $\tr$ iff
$\SPClosure{\tr}(\prev{\tr}(S)) \cap S = \emptyset$, where
$S = \set{e_0, \ldots, e_{k-1}}$.
\end{restatable}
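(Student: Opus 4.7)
The plan is to prove both directions by leveraging Lemma~\ref{lem:spclosure-necessary-sufficient}, which already establishes that the sync-preserving closure is both necessary and sufficient for the event set of a sync-preserving correct reordering. Let $S = \{e_0, \ldots, e_{k-1}\}$ and let $P = \prev{\tr}(S)$ denote the (at most $k$) immediate thread predecessors of the acquire events in $D$.

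\textbf{Forward direction.} Suppose $D$ is a sync-preserving deadlock, witnessed by a sync-preserving correct reordering $\rho$ in which each $e_i$ is $\tr$-enabled. By the definition of $\tr$-enabled, first $e_i \notin \events{\rho}$ for every $i$, so $S \cap \events{\rho} = \emptyset$. Second, for every $f \stricttho{\tr} e_i$ we have $f \in \events{\rho}$; in particular, the immediate thread predecessor of each $e_i$ (when it exists) lies in $\events{\rho}$, giving $P \subseteq \events{\rho}$. Applying the necessity half of Lemma~\ref{lem:spclosure-necessary-sufficient} to $S := P$ and $\rho$ yields $\SPClosure{\tr}(P) \subseteq \events{\rho}$, and intersecting with $S$ gives $\SPClosure{\tr}(P) \cap S \subseteq \events{\rho} \cap S = \emptyset$, as desired.

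\textbf{Backward direction.} Suppose $\SPClosure{\tr}(P) \cap S = \emptyset$. The sufficiency half of Lemma~\ref{lem:spclosure-necessary-sufficient} yields a sync-preserving correct reordering $\rho$ of $\tr$ with $\events{\rho} = \SPClosure{\tr}(P)$. It remains to show each $e_i$ is $\tr$-enabled in $\rho$. Membership: $e_i \in \events{\tr}$ by assumption, and $e_i \notin \events{\rho}$ follows directly from $\SPClosure{\tr}(P) \cap S = \emptyset$. Predecessors: take any $f \stricttho{\tr} e_i$. Let $f_i$ be the immediate thread predecessor of $e_i$; then $f_i \in P \subseteq \SPClosure{\tr}(P) = \events{\rho}$, and because $f \tho{\tr} f_i$, closure property~(b) of Definition~\ref{def:spclosure} (closure under thread-order predecessors) implies $f \in \SPClosure{\tr}(P) = \events{\rho}$. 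Hence every $e_i$ is simultaneously $\tr$-enabled in $\rho$, and $D$ is a sync-preserving deadlock.

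\textbf{Where the subtleties lie.} The proof itself is short once Lemma~\ref{lem:spclosure-necessary-sufficient} is in hand; the main things to be careful about are (i)~remembering that $\prev{\tr}(\cdot)$ records only the \emph{immediate} thread predecessor, so the transitive reach to all thread predecessors of each $e_i$ must be justified via closure property~(b), and (ii)~the corner case where some $e_i$ is the first event of its thread, in which case it contributes nothing to $P$ and the quantifier over predecessors is vacuous. The deadlock-blocking behaviour (that $e_i$ cannot actually execute after $\rho$ because some other thread in the pattern still holds $\lk_i$) is not something we need to re-prove here: it is baked into the definitions of deadlock pattern and $\tr$-enabledness, since in $\rho$ the acquire of $\lk_i$ by the preceding thread in the cycle is included while its matching release is kept out by construction of the closure.
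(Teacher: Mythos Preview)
Your proof is correct and follows essentially the same approach as the paper's: both directions are obtained directly from the two halves of \lemref{spclosure-necessary-sufficient}, using that $\tr$-enabledness of each $e_i$ is equivalent to $\prev{\tr}(S)\subseteq\events{\rho}$ together with $S\cap\events{\rho}=\emptyset$. Your backward direction is slightly more explicit than the paper's (you spell out why every strict thread predecessor of $e_i$, not just the immediate one, lies in $\events{\rho}$ via closure property~(b), and you note the vacuous case when $e_i$ has no predecessor), but the underlying argument is identical.
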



\begin{example}
\exlabel{syncp-ex}
%
Consider the trace $\tr_2$ in \figref{motivating-dl}, 
and the deadlock pattern $D=\pattern{e_{4}, e_{18}}$.
We have 
$\SPClosure{\tr_2}(\prev{\tr_2}(\{ e_{4}, e_{18} \})) = 
\set{e_1, e_2, e_3, e_8, e_9, e_{12}, \ldots, e_{17}}$.
Since we have that $e_{4}, e_{18} \not \in$ $\SPClosure{\tr_2}( \prev{\tr_2}(\set{e_{4},  e_{18}}))$, $D$ is a sync-preserving deadlock. 
Now consider the trace $\tr_3$ in \figref{syncp_example},
and the deadlock patterns
$D_1=\allowbreak\pattern{e_2, e_{16}}$, 
$D_5=\allowbreak\pattern{e_{29}, e_{16}}$, and
$D_6=\allowbreak\pattern{e_{29}, e_{19}}$.
We have
$\SPClosure{\tr_3}(\prev{\tr_3}(\{ e_{2},e_{16}\}))\allowbreak=
\set{e_1, \ldots, e_6, e_8, \ldots, e_{15}}$, 
$\SPClosure{\tr_3}(\prev{\tr_3}(\allowbreak\{ e_{29},e_{16}\}))=
\set{
e_1, \ldots, e_{15}, e_{28}}$, and $\SPClosure{\tr_3}(\prev{\tr_3}(\{ e_{29},e_{19}\}))=
\set{
e_1, \ldots, e_{18}, e_{28}}$.
Since $e_2 \in \allowbreak \SPClosure{\tr_3}(\allowbreak\prev{\tr_3}(\{ e_{2},e_{16}\}))$,
$D_1$ is not a sync-preserving deadlock.
However, $e_{29}, e_{16} \not \in \SPClosure{\tr_3}(\prev{\tr_3}(\{ e_{29},e_{16}\}))$, and $e_{29}, e_{19} \not \in \SPClosure{\tr_3}(\prev{\tr_3}(\{ e_{29},e_{19}\}))$, thus $D_5$ and $D_6$ are sync-preserving deadlocks (as we also concluded in \exref{sp-deadlocks}).
\end{example}


\subsection{Verifying Deadlock Patterns}
\seclabel{verify-patterns}

Given a deadlock pattern, we check if it constitutes a sync-preserving deadlock
by constructing the sync-preserving closure (\lemref{spclosure-deadlock}) in linear time.
Based on \defref{spclosure}, this can be done in 
an iterative manner. 
We 
(i)~start with the set of $\tho{}$ predecessors of the events in the deadlock pattern, and
(ii)~iteratively  add
$\tho{}$ and $\rf{}$ predecessors of the current set of events.
Additionally, we identify and add the release events that must be included in the set.
%
We utilize \emph{timestamps} to ensure that the \emph{entire} fixpoint computation is
performed in linear time.

\newcommand{\TS}[1]{\mathsf{TS}_{\tr}}

\myparagraph{Thread-read-from timestamps}{
	Given a set $\threads{}$ of threads, a timestamp 
	is simply a mapping $T : \threads{} \to \nats$.
	Given timestamps $T_1, T_2$, we 
	use the notations 
	$T_1 \cle T_2$ and $T_1 \mx T_2$ for pointwise comparison
	and pointwise maximum, respectively.
	For a set $U$ of timestamps, we write $\bigsqcup U$ to denote the
	pointwise maximum over all elements of $U$.
	Let $\trf{\tr}$ be the reflexive transitive closure
	of the relation $(\tho{\tr} \cup \setpred{(\rf{\tr}(e), e)}{\exists x \in \vars{\tr}, \OpOf{e} = \rd(x)})$;
	observe that
	$\trf{\tr}$ is a partial order.
	We define the 
	 timestamp $\TS{\tr}^e$ of an event $e$ in $\tr$
	to be a $\threads{\tr}$-indexed timestamp as follows: $\TS{\tr}^e(t) = |\setpred{f}{f \trf{\tr} e}|$.
	This ensures that
	for two events $e, e' \in \events{\tr}$,
	$e \trf{\tr} e'$ iff $\TS{\tr}^e \cle \TS{\tr}^{e'}$.
	For a set $S \subseteq \events{\tr}$, we overload the notation
	and say the timestamp of $S$ is $\TS{\tr}^S = \bigsqcup \set{\TS{\tr}^e}_{e \in S}$.
	Given a trace $\tr$ with $\NumEvents$ events and
	$\NumThreads$ threads we can compute these timestamps for all the events in $O(\NumEvents \cdot \NumThreads)$ time, using a simple vector clock algorithm~\cite{Mattern89,Fidge91}.
}

\myparagraph{Computing sync-preserving closures}{
	Recall the basic template of the fixpoint computation.
	In each iteration, we identify the set of release events
	that must be included in the set, together with their $\trf{\tr}$-closure.
	In order to identify such events efficiently, for every
	thread $t$ and lock $\lk$, we maintain
	a FIFO queue $\AcqLst_{t, \lk}$ (\emph{critical section history} of $t$ and $\lk$) 
	to store 
	the sequence of events that acquire $\lk$ in thread $t$.
	In each iteration, we traverse each list
	to determine the last acquire event that belongs to the current set.
	For a given lock, we need to add the matching release events of
	all thus identified events to the closure, 
	except possibly the matching release event of the latest acquire event (see~\defref{spclosure}).
	This computation is performed using timestamps, as shown in \algoref{compute-closure}.
	Starting with a set $S$, the algorithm
	runs in time $O(|S|\cdot\NumThreads + \NumThreads\cdot\NumAcquires)$,
	where $\NumThreads$ and $\NumAcquires$ are respectively
	the number of threads and acquire events in $\tr$.
}

\begin{minipage}{0.45\textwidth}



\small
\begin{algorithm}[H]
\Input{Trace $\tr$, Timestamp $T_0$}
\BlankLine
	\Let $\set{\AcqLst_{t,\lk}}_{\lk\in \locks{\tr}, t \in \threads{\tr}}$ be the lock-acquisition histories in $\tr$ \;
	$T \gets T_0$ \;
	\Repeat{$T$ does not change}{
		\For{$\lk \in \locks{}$}{
			\ForEach{$t \in \threads{}$}{
				\Let $e_t$ be the last event in $\AcqLst_{t,\lk}$ with $\TS{\tr}^{e_t} \cle T$ \;
				Remove all earlier events in $\AcqLst_{t, \lk}$ 
			}
			\Let $e_{t*}$ be the last event in $\set{e_t}_{t\in \threads{\tr}}$ according to $\trord{\tr}$ \;
			$T$ := $T \mx \bigsqcup \setpred{\TS{\tr}^{\match{\tr}(e_t)}}{e_t \neq e_{t*}}$	\;
		}
	}
	\Return $T$
\caption{CompSPClosure:\\ Computing sync-preserving closure.}
\algolabel{compute-closure}
\end{algorithm}
\normalsize

\end{minipage}
\hfill ~ 
\begin{minipage}{0.48\textwidth}

\small
\begin{algorithm}[H]
\Input{Trace $\tr$, $\abst{D}$ of length $k$}
\BlankLine
	\Let $F_0, \ldots, F_{k-1}$ be the sequences of acquires in $\abst{D}$ \;
	\Let $n_0, \ldots, n_{k-1}$ be the lengths of $F_0, \ldots, F_{k-1}$ \;
	\lForEach{$j \in \set{0, \ldots, k-1}$}{
		$i_j \gets 1$
	}
	$T \gets \lambda t, 0$ 
	\While{$\bigwedge\limits_{j=0}^{k-1} i_j < n_j$}{
		\Let $e_0 = F_0[i_0], \ldots, e_{k-1} = F_{k-1}[i_{k-1}]$ \;
		$S \gets \prev{\tr}{\set{e_0, \ldots, e_{k-1}}}$\;
		$T \gets$ \fixpoint{$\tr, T \mx \TS{\tr}^S$}\\   \linelabel{call-comp-closure}
		\If{$\forall j < k, \TS{\tr}^{e_{j}} \cle T$}{
			\report pattern $D = e_0, \ldots, e_{k-1}$ and \exit
		}
		\ForEach{$j \in \set{0, \ldots, k-1}$}{
			$i_j = \min\setpred{l \leq n_j}{\TS{\tr}^{F_j[l]} \not\cle T}$ 
		}
	}
\caption{CheckAbsDdlck:\\ Checking an abstract deadlock pattern.}
\algolabel{abstract-pattern}
\end{algorithm}
\normalsize

\end{minipage}

\myparagraph{Checking a deadlock pattern}{
	After computing the timestamp $T$ of
	the closure (output of \algoref{compute-closure}, starting with the set of events in the given deadlock pattern),
	determining whether a given deadlock pattern
	$D = e_0, \ldots, e_{k-1}$ is a sync-preserving deadlock 
	can be performed in time $O(k\cdot\NumThreads)$ --- simply check if  $\forall i, \TS{\tr}(e_i) \not\cle T$.
	This gives an algorithm for checking if a deadlock pattern
	of length $k$ is sync-preserving that runs in time 
	$O(\NumThreads\cdot \NumEvents + k\cdot\NumThreads + \NumThreads\cdot \NumAcquires) = O(\NumEvents\cdot \NumThreads)$.
}

\subsection{Verifying Abstract Deadlock Patterns}
\seclabel{verify-abstract-patterns}

\myparagraph{Abstract acquires and abstract deadlock patterns}{
Given a thread $t$, a lock $\lk$ and a set of locks 
$L\subseteq \locks{\tr}\neq \emptyset$ with $\lk\not \in L$,
we define the \emph{abstract acquire} $\eta =\tuple{t, \lk, L, F}$, 
where $F=\sequence{e_1,\dots, e_n}$ is the sequence of all events
$e_i\in \events{\tr}$ (in trace-order) such that for each $i$, we have 
(i)~$\ThreadOf{e_i}=t$,
(ii)~$\OpOf{e_i} = \acq(\lk)$, and
(iii)~$\lheld{\tr}(e_i) = L$.
In words, the abstract acquire $\eta$ contains the sequence of all acquire events 
of a specific thread, that access a specific lock and hold the same set of locks when executed, ordered as per thread order.
An \emph{abstract deadlock pattern} of size $k$ in a trace $\tr$ is a sequence
$
\abst{D} = \eta_0,\dots,\eta_{k-1}
$
of abstract acquires $\eta_i=\tuple{t_i, \lk_i, L_i, F_i}$ such that
$t_0, \ldots, t_{k-1}$ are distinct threads, 
$\lk_0,  \ldots, \lk_{k-1}$ are distinct locks,
and $L_0, L_1, \ldots, L_{k-1} \subseteq \locks{\tr}$ are
such that $\lk_i \not\in L_i$, $\lk_i \in L_{(i+1)\% k}$ for every $i$,
and $L_i \cap L_j= \emptyset$ for every $i\neq j$.
Thus, an abstract deadlock pattern $\abst{D}$ succinctly encodes all 
concrete deadlock patterns $F_0\times F_1\times \dots \times F_{k-1}$,
called \emph{instantiations} of $\abst{D}$.
We also write $D\in \abst{D}$ to denote that 
$D\in F_0\times F_1\times \dots \times F_{k-1}$.
We say that $\abst{D}$ contains a sync-preserving deadlock if there 
exists some instantiation $D\in \abst{D}$ that is a sync-preserving deadlock.
See \figref{syncp_example} for an example.
}
Our next result is stated below, followed by its proof idea.
\begin{restatable}{lemma}{abstractlineartime}
\lemlabel{abstract-pattern-linear-time}
Consider a trace $\tr$ with $\NumEvents$ events and $\NumThreads$ threads, 
and an abstract deadlock pattern $\abst{D}$ of $\tr$.
We can determine if $\abst{D}$ contains a sync-preserving deadlock in $O(\NumThreads\cdot \NumEvents)$ time.
\end{restatable}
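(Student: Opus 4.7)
The plan is to analyze Algorithm~\algoref{abstract-pattern}, which maintains indices $i_0,\ldots,i_{k-1}$ into the sequences $F_0,\ldots,F_{k-1}$ of $\abst{D}$ and a running timestamp $T$. Each iteration treats the tuple $(F_0[i_0],\ldots,F_{k-1}[i_{k-1}])$ as a putative deadlock, inflates $T$ with the timestamps of the thread-predecessors, calls Algorithm~\algoref{compute-closure} to saturate $T$ to the timestamp of a superset of the sync-preserving closure, and either reports the tuple as a sync-preserving deadlock (if every candidate's timestamp fails to be $\cle T$) or advances each $i_j$ past the candidates now dominated by $T$.

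\textbf{Soundness.} When a tuple $(e_0,\ldots,e_{k-1})$ is reported, the accumulated $T$ is, by an inductive unrolling of the loop, the timestamp of $\SPClosure{\tr}(C)$ for some set $C$ containing $\prev{\tr}(\{e_0,\ldots,e_{k-1}\})$. Monotonicity of $\SPClosure{\tr}$ in its input, immediate from \defref{spclosure}, yields the inclusion $\SPClosure{\tr}(\prev{\tr}(\{e_0,\ldots,e_{k-1}\})) \subseteq \SPClosure{\tr}(C)$. Combined with the guard $\TS{\tr}^{e_j}\not\cle T$, this gives $e_j \notin \SPClosure{\tr}(\prev{\tr}(\{e_0,\ldots,e_{k-1}\}))$ for every $j$, and \lemref{spclosure-deadlock} certifies $(e_0,\ldots,e_{k-1})$ as a sync-preserving deadlock.

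\textbf{Completeness.} Assume $\abst{D}$ contains a sync-preserving deadlock, and let $D^* = (F_0[l_0^*],\ldots,F_{k-1}[l_{k-1}^*])$ be one such instantiation, chosen to minimize $\sum_j l_j^*$. I would prove the loop invariant $i_j \le l_j^*$ for every $j$ by induction on iterations. For the step, whenever the algorithm tries to advance $i_j$ past a position $l$, the event $F_j[l]$ satisfies $\TS{\tr}^{F_j[l]}\cle T$, and by the inductive hypothesis $T$ is the timestamp of $\SPClosure{\tr}(C_m)$ for a set $C_m \subseteq \SPClosure{\tr}(\prev{\tr}(\{F_0[l_0^*],\ldots,F_{k-1}[l_{k-1}^*]\}))$, since every prior candidate tuple is componentwise $\le$ the witness and thread-predecessors respect this order. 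Idempotence and monotonicity of $\SPClosure{\tr}$ then give $F_j[l] \in \SPClosure{\tr}(\prev{\tr}(\{F_0[l_0^*],\ldots,F_{k-1}[l_{k-1}^*]\}))$, which by \lemref{spclosure-deadlock} forces $l \neq l_j^*$. Hence $i_j$ never overshoots $l_j^*$, and since each non-reporting iteration strictly advances some $i_j$, the algorithm must eventually reach the reporting branch with some sync-preserving deadlock (possibly componentwise earlier than $D^*$).

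\textbf{Running time.} Two monotonicities drive the amortization: the indices $i_j$ and the timestamp $T$ are each non-decreasing across iterations. Every non-reporting outer iteration strictly increases some $i_j$, so the number of outer iterations is at most $\sum_j |F_j| \le \NumEvents$. The per-iteration work outside Algorithm~\algoref{compute-closure} --- predecessor lookups, timestamp joins, the report-check, and the advancement loop over $j$ --- can be charged $O(\NumThreads)$ per index advancement, yielding $O(\NumThreads \cdot \NumEvents)$ globally. For the closure calls, I would exploit that $T$ is never reset between invocations and that processed entries of each $\AcqLst_{t,\lk}$ are removed: each acquire event therefore participates in a constant number of productive operations across all invocations, each costing an $O(\NumThreads)$-sized timestamp join, giving an aggregate $O(\NumThreads \cdot \NumAcquires) = O(\NumThreads \cdot \NumEvents)$ bound for the closure computations. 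Summing both contributions yields $O(\NumThreads \cdot \NumEvents)$.

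\textbf{Main obstacle.} The delicate step is completeness, specifically arguing that the greedy advancement never skips a valid witness. This hinges on the fact that $\SPClosure{\tr} \circ \prev{\tr}$ is monotone with respect to the componentwise ordering on candidate tuples: events ruled into the closure by a componentwise-smaller seed must remain ruled in under the witness seed, so if the algorithm's running closure ever swallowed $F_j[l_j^*]$, the witness closure would too, contradicting $D^*$ being sync-preserving.
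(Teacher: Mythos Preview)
Your proposal is correct and follows essentially the same approach as the paper: you analyze \algoref{abstract-pattern}, invoke the monotonicity of $\SPClosure{\tr}$ (\propref{spclosure-monotone}) for both soundness and the safe reuse of $T$ and $\AcqLst$ across iterations, use \corref{pattern-monotone} to justify the greedy index advancement for completeness, and amortize the running time over the at-most-once traversal of each $F_j$ and each $\AcqLst_{t,\lk}$. Your write-up is in fact more explicit than the paper's on the completeness invariant $i_j\le l_j^*$ and on why a non-reporting iteration must strictly advance some index; you also correctly use the guard $\TS{\tr}^{e_j}\not\cle T$ for reporting, which is the intended semantics (the printed pseudocode has the comparison inverted).
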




An abstract deadlock pattern of length $k \geq 2$ can have $\NumEvents^k$ instantiations,
giving a naive enumerate-and-check algorithm running 
in time $O(\NumThreads\cdot\NumEvents^{k+1})$, which is prohibitively large.
Instead, we exploit 
(i)~the monotonicity properties of 
the sync-preserving closure (\propref{spclosure-monotone}) and
(ii)~instantiations of an abstract pattern (\corref{pattern-monotone}) 
that allow for an \emph{incremental}
algorithm that iteratively checks successive instantiations
of a given abstract deadlock pattern, while spending total $O(\NumEvents\cdot \NumThreads)$ time.
The first observation allows us to re-use a prior
computation when checking later deadlock patterns.

\begin{restatable}{proposition}{spclosureMonotone}
	\proplabel{spclosure-monotone}
	For a trace $\tr$ and sets $S, S' \subseteq \events{\tr}$.
	If for every event $e \in S$, there is an event $e' \in S'$
	such that $e \tho{\tr} e'$, then
	$\SPClosure{\tr}(S) \subseteq \SPClosure{\tr}(S')$.
\end{restatable}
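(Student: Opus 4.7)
The plan is to exploit the fact that $\SPClosure{\tr}(S')$ is, by \defref{spclosure}, the \emph{smallest} subset of $\events{\tr}$ that contains $S'$ and is closed under items \itmref{def-item-spclosure-b} and \itmref{acq-order-closure}. Consequently, to conclude $\SPClosure{\tr}(S) \subseteq \SPClosure{\tr}(S')$, it suffices to verify that $\SPClosure{\tr}(S')$ is itself a valid closure set for $S$, i.e., that $S \subseteq \SPClosure{\tr}(S')$; minimality of $\SPClosure{\tr}(S)$ then immediately gives the desired inclusion.

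The main step is therefore to establish $S \subseteq \SPClosure{\tr}(S')$. Fix an arbitrary $e \in S$. By hypothesis, there exists $e' \in S'$ with $e \tho{\tr} e'$. Since $S' \subseteq \SPClosure{\tr}(S')$ by item \itmref{def-item-spclosure-a} of \defref{spclosure}, we have $e' \in \SPClosure{\tr}(S')$. If $e = e'$, we are done. Otherwise, $e \stricttho{\tr} e'$, and by closure under thread-order predecessors (item \itmref{def-item-spclosure-b} of \defref{spclosure}), $e \in \SPClosure{\tr}(S')$ as well. This gives $S \subseteq \SPClosure{\tr}(S')$, completing the argument.

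There is really no obstacle here beyond making the two cases $e = e'$ and $e \stricttho{\tr} e'$ explicit, since the two closure rules \itmref{def-item-spclosure-b} and \itmref{acq-order-closure} are used by $\SPClosure{\tr}(S')$ already to extend $S'$, and the seed $S$ needs only rule \itmref{def-item-spclosure-b} to be absorbed. No inductive fixpoint argument is required; the proof is a single invocation of the minimality characterization of the closure operator.
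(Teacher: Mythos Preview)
Your proof is correct and follows the same approach as the paper, which simply records that the statement ``follows from the definition of $\SPClosure{\tr}(S)$.'' You have merely made explicit what that means: $\SPClosure{\tr}(S')$ already satisfies conditions \itmref{def-item-spclosure-b} and \itmref{acq-order-closure}, and the hypothesis together with \itmref{def-item-spclosure-b} gives $S \subseteq \SPClosure{\tr}(S')$, so minimality of $\SPClosure{\tr}(S)$ yields the inclusion.
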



Consider $\tr_3$ in~\cref{fig:syncp_example} and let $S=\prev{\tr_3}{(\set{e_{29}, e_{16}})}$, and $S'=\prev{\tr_3}{(\set{e_{29}, e_{19}})}$.
The sets $S, S'$ satisfy the conditions of~\cref{prop:spclosure-monotone}, hence $\SPClosure{\tr_3}(S) \subseteq \SPClosure{\tr_3}(S')$, as computed in~\cref{ex:syncp-ex}.
Next, we extend \propref{spclosure-monotone} to avoid redundant
computations when a sync-preserving deadlock is not found and later deadlock patterns must be checked.
Given two deadlock patterns
$D_1 = e_0, \ldots, e_{k-1}$ and $D_2 = f_0, \ldots, f_{k-1}$
of the same length $k$, 
we say $D_1 \prec D_2$ if
they are instantiations of a common abstract pattern $\abst{D}$
(i.e., $D_1, D_2 \in \abst{D}$) and 
for every $i<k$, $e_i \tho{\tr} f_i$.
\begin{restatable}{corollary}{pattern-monotone}
	\corlabel{pattern-monotone}
	Let $\tr$ be a trace and let $D_1 = e_0, \ldots, e_{k-1}$ and 
	$D_2 = f_0, \ldots f_{k-1}$ be deadlock
	patterns of size $k$ in $\tr$ such that $D_1 \prec D_2$.
	Let $S_1 = \set{e_0, \ldots, e_{k-1}}$ and $S_2 = \set{f_0, \ldots, f_{k-1}}$.
	If $\SPClosure{\tr}(\prev{\tr}(S_1))\cap S_2 \neq \emptyset$,
	then $\SPClosure{\tr}(\prev{\tr}(S_2))\cap S_2 \neq \emptyset$.
\end{restatable}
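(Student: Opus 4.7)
My plan is to derive the corollary directly from the monotonicity given by Proposition~\ref{prop:spclosure-monotone}. Concretely, I aim to establish the inclusion $\SPClosure{\tr}(\prev{\tr}(S_1)) \subseteq \SPClosure{\tr}(\prev{\tr}(S_2))$, from which the conclusion is immediate: any $f_j \in S_2 \cap \SPClosure{\tr}(\prev{\tr}(S_1))$ guaranteed by the hypothesis also lies in $S_2 \cap \SPClosure{\tr}(\prev{\tr}(S_2))$, so the intersection on the right is nonempty.

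To apply Proposition~\ref{prop:spclosure-monotone} with $S := \prev{\tr}(S_1)$ and $S' := \prev{\tr}(S_2)$, I need to show that each element of $\prev{\tr}(S_1)$ is thread-dominated by some element of $\prev{\tr}(S_2)$. Unpacking $D_1 \prec D_2$: both patterns are instantiations of a common abstract pattern $\abst{D}$, so for every index $i$ the events $e_i$ and $f_i$ are performed by the same thread $t_i$ and hold the same set $L_i$ of locks, and moreover $e_i \tho{\tr} f_i$. Because the cyclic structure of any deadlock pattern of size $k \geq 2$ forces each $L_i$ to be nonempty (since $L_i$ contains the lock acquired at the preceding position), neither $e_i$ nor $f_i$ can be the first event of $t_i$, so each admits a well-defined immediate thread predecessor. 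Call them $p_i$ and $p'_i$, respectively; then $p_i \in \prev{\tr}(S_1)$ and $p'_i \in \prev{\tr}(S_2)$.

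I would finish with a short case split on whether $e_i = f_i$. If equal, then $p_i = p'_i$ trivially satisfies $p_i \tho{\tr} p'_i$. Otherwise, $p_i \stricttho{\tr} e_i \stricttho{\tr} f_i$ all within $t_i$, and since $p'_i$ is by definition the event immediately preceding $f_i$ in $t_i$, we obtain $p_i \tho{\tr} p'_i$. Ranging over $i$ covers every element of $\prev{\tr}(S_1)$, supplying the hypothesis of Proposition~\ref{prop:spclosure-monotone} and closing the argument. There is no substantive obstacle here — the proof is essentially a careful unpacking of $\prec$, $\prev{\tr}$, and monotonicity — but mild care is needed to handle the degenerate case $e_i = f_i$ uniformly and to justify the existence of the immediate predecessors, which ultimately reduces to the nonemptiness of each $L_i$.
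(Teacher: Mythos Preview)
Your proposal is correct and follows exactly the route the paper intends: the corollary is stated as an immediate consequence of \propref{spclosure-monotone}, and the paper provides no further proof. Your careful unpacking of why $\prev{\tr}(S_1)$ is pointwise thread-dominated by $\prev{\tr}(S_2)$ (including the existence of predecessors via nonempty $L_i$ and the $e_i = f_i$ case) simply spells out details the paper leaves implicit.
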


We now describe how \cref{prop:spclosure-monotone} and \cref{cor:pattern-monotone} are used in our algorithms, 
and illustrate them later in  \cref{ex:ex4}. 
\algoref{abstract-pattern} checks if an abstract deadlock 
pattern contains a sync-preserving deadlock.
The algorithm iterates over the sequences $F_0, \ldots, F_{k-1}$ of 
acquires (one for each abstract acquire) in trace order.
For this, it maintains
indices $i_0, \ldots, i_{k-1}$ that point to entries in $F_0, \ldots, F_{k-1}$.
At each step, it determines whether the current deadlock pattern $D = e_0, \ldots, e_{k-1}$
constitutes a sync-preserving deadlock by computing the sync-preserving closure
of the thread-local predecessors of the events of the deadlock pattern.
The algorithm reports a deadlock if the sync-preserving closure does not contain any of $e_0, \ldots, e_{k-1}$.
Otherwise, it looks for the next eligible deadlock pattern,
which it determines based on \corref{pattern-monotone}.
In particular, it advances the pointer $i_j$
all the way until an entry which is outside of the
closure computed so far.
Observe that the timestamp $T$ of the closure
computed in an iteration is being used in later iterations;
this is a consequence of \propref{spclosure-monotone}.
Furthermore, in the call to the \algoref{compute-closure} at \lineref{call-comp-closure}, we ensure that the list of acquires $\AcqLst_{t, \lk}$,
used in the function \fixpoint
is reused across iterations, and not
re-assigned to the original list of all acquire events.
The correctness of this optimization follows from~\propref{spclosure-monotone}.
Let us now calculate the running time of \algoref{abstract-pattern}.
Each of the $\AcqLst_{t, \lk}$ in \fixpoint is traversed
at most once. 
Next, each element of the sequences $F_0, \ldots, F_{k-1}$
is also traversed at most once.
For each of these acquires, the algorithm spends $O(\NumThreads)$
time for vector clock updates.
The total time required is thus $O(\NumEvents \cdot \NumThreads)$.
This concludes the proof of \lemref{abstract-pattern-linear-time}.


\subsection{The Algorithm \SyncPDOffline}
\seclabel{enumerate-patterns}

We now present the final ingredients of \SyncPDOffline.
We construct the \emph{abstract lock graph},
enumerate cycles in it, check whether
any cycle is an abstract deadlock pattern,
and if so, whether it contains sync-preserving deadlocks.


\setlength{\textfloatsep}{10pt}
\begin{figure}[t]
\begin{subfigure}[t]{0.2\textwidth}
\scalebox{0.84}{
\begin{tikzpicture}[yscale=0.9]
	\tikzset{rectangle/.append style={draw=black}}
		\def\xstep{2.2}
		\def\ystep{0.5}
		\node[rectangle, align=center] (t1) at (-1, -7){
			$t_1, \LockColorTwo{\lk_2}, \{ \LockColorOne{\lk_1}\}, \sequence{e_{2}}$
		};
		\node[rectangle, align=center] (t31) at (-1, -8.7){
			$t_2, \LockColorOne{\lk_1}, \{ \LockColorTwo{\lk_2} \}, \sequence{e_{8}}$
		};
		\draw[->, thick, -Latex, bend right=10] ([xshift=-3mm]t1.south) to ([xshift=-3mm]t31.north);
		\draw[->, thick, -Latex, bend right=10] (t31) to (t1);
\end{tikzpicture}
}
\end{subfigure}
~~~~
\begin{subfigure}[t]{0.2\textwidth}
\scalebox{0.84}{
\begin{tikzpicture}[yscale=0.9]
	\tikzset{rectangle/.append style={draw=black}}
		\def\xstep{2.2}
		\def\ystep{0.5}
		\node[rectangle, align=center] (t1) at (-1, -7){
			$t_2, \LockColorThree{\lk_3}, \{ \LockColorTwo{\lk_2}\}, \sequence{e_{4}}$
		};
		\node[rectangle, align=center] (t31) at (-1, -8.7){
			$t_3, \LockColorTwo{\lk_2}, \{ \LockColorThree{\lk_3} \}, \sequence{e_{18}}$
		};
		\draw[->, thick, -Latex, bend right=10] ([xshift=-3mm]t1.south) to ([xshift=-3mm]t31.north);
		\draw[->, thick, -Latex, bend right=10] (t31) to (t1);
\end{tikzpicture}
}
\end{subfigure}
~~~~
\begin{subfigure}[t]{0.4\textwidth}
\scalebox{0.84}{
\begin{tikzpicture}[yscale=0.9]
	\tikzset{rectangle/.append style={draw=black}}
		\node[rectangle, align=center] (t1) at (-1, -7){
					$t_1, \LockColorTwo{\lk_2}, \{ \LockColorOne{\lk_1} \}$,
					$\sequence{e_2, e_4, e_{29}}$
				};
				\node[rectangle,align=center] (t2) at (2.5, -7){
					$t_2, \LockColorOne{\lk_1}, \{ \LockColorFour{\lk_4} \}$,
					$\sequence{e_{23}}$
				};
				\node[rectangle, align=center] (t31) at (-1, -8.7){
					$t_3, \LockColorOne{\lk_1}, \{ \LockColorTwo{\lk_2} \}$,
					$\sequence{e_{16}, e_{19}}$
				};
				\node[rectangle,align=center] (t32) at (2.5, -8.7){
					$t_3, \LockColorThree{\lk_3}, \{ \LockColorTwo{\lk_2} \}$,
					$\sequence{e_{13}}$
				};
				\draw[->, thick, -Latex] (t2) -- (t1);
				\draw[->, thick, -Latex] ([xshift=6mm]t1.south) to (t32.west);
				\draw[->, thick, -Latex, bend right=10] ([xshift=-3mm]t1.south) to ([xshift=-3mm]t31.north);
				\draw[->, thick, -Latex, bend right=10] (t31) to (t1);
\end{tikzpicture}
}
\end{subfigure}
%
\caption{
Abstract lock graphs of the traces from~\figref{motivating-no-dl} (left), \figref{motivating-dl} (middle) and \figref{syncp_example} (right).
\figlabel{motivating-graph}
}
 \end{figure}
\myparagraph{Abstract lock graph}{
The abstract lock graph of $\tr$ is a directed graph
$
\lkevgraph{\tr} = (V_\tr, E_\tr)
$,
where
\begin{itemize}
	\item $V_\tr=\{\tuple{t_1, \lk_1, L_1, F_1},\dots, \tuple{t_k, \lk_k, L_k, F_k}\}$ is the set of abstract acquires of $\tr$, and
	\item for every $\eta_1 {=} \tuple{t_1, \lk_1, L_1, F_1}, 
	\eta_2 {=} \tuple{t_2, \lk_2, L_2, F_2} \in V_\tr$, we have  $(\eta_1, \eta_2) \in E_\tr$ iff 
	$t_1 \neq t_2$, $\lk_1 \in L_2$, and $L_1 \cap L_2 = \emptyset$.
\end{itemize}
A node $\tuple{t_1, \lk_1, L_1, F_1}$ signifies that there is an event $\acq_1(\lk_1)$ performed by thread $t_1$ while holding the locks in $L_1$.
The last component $F_1$ is a list which contains all such events $\acq_1$ in order of appearance in $\tr$.
An edge $(\eta_1, \eta_2)$ signifies that the lock $\lk_1$ acquired by each of the
events $\acq_1 \in F_1$ was held by $t_2$ when it executed each of $\acq_2\in F_2$ while not holding a common lock.
The abstract lock graph can be constructed incrementally
as new events appear in $\tr$.
For $\NumEvents$ events, $\NumLocks$ locks and nesting depth $\NestingDepth$,
the graph has
$|V_\tr| = O\big(\NumThreads \cdot \NumLocks^{\NestingDepth}\big)$ vertices,
$|E_\tr| = O(|V_\tr|\cdot \NumLocks^{\NestingDepth - 1})$ edges and can be
constructed in $O(\NumEvents \cdot \NestingDepth)$ time.
See \figref{motivating-graph} for examples.
In the left graph, the cycle marks an abstract deadlock pattern and its single concrete deadlock pattern
$\abst{D}=\{ e_2 \} \times \{ e_{8} \}$, and similarly for the middle graph where $\abst{D}=\{ e_4 \} \times \{ e_{18} \}$.
In the right graph, there is a unique cycle which marks an abstract deadlock pattern of $6$ concrete deadlock patterns
$\abst{D}=\{ e_2, e_4, e_{29} \} \times \{ e_{16}, e_{19} \}$.

}
\setlength{\textfloatsep}{10pt}
\begin{algorithm}[h]
\small
\DontPrintSemicolon
\SetInd{0.4em}{0.4em}
\KwIn{
A trace $\tr$.
}
\KwOut{All abstract deadlock patterns of $\tr$ that contain a sync-preserving deadlock.}
\BlankLine
Construct the abstract lock graph $\lkevgraph{\tr}$\\
\ForEach{cycle $C=\tuple{\eta_0,\dots, \eta_{k-1}}$ in $G$}{
Let $\eta_i=\tuple{t_i, \lk_i, L_i, F_i}$\\
\uIf(\tcp*[h]{$C$ is an abstract deadlock pattern}){$\forall i\neq j$ we have $t_i\neq t_j$ and $\lk_i\neq \lk_j$ and $L_i\cap L_j=\emptyset$ }{
\lIf{$\checkAbsDeadP(C)$}{
Report that $C$ contains a sync-preserving deadlock
}
}
}
\caption{
Algorithm $\SyncPDOffline$.
\algolabel{offline}
}
\end{algorithm}
\normalsize

\myparagraph{Algorithm $\SyncPDOffline$}{
	It is straightforward to verify that every abstract deadlock pattern of $\tr$ appears as a (simple) cycle in $\lkevgraph{\tr}$.
	However, the opposite is not true.
	A cycle $C = \eta_0, \eta_1, \ldots, \eta_{k-1}$ of $\lkevgraph{\tr}$,
	where $\eta_i = \tuple{t_i, \lk_i, L_i, F_i}$ defines an abstract deadlock pattern
	if additionally every thread $t_i$ is distinct, all every lock $\lk_i$ is distinct, and all sets $L_i$ are pairwise disjoint.
	This gives us a simple recipe for enumerating all abstract deadlock patterns, by using
	Johnson's algorithm~\cite{Johnson1975} to enumerate
	every simple cycle $C$ in $\lkevgraph{\tr}$, and check whether $C$ is an abstract deadlock pattern.
	We thus arrived at our offline algorithm $\SyncPDOffline$ (\algoref{offline}).
	The running time depends linearly on the length of $\tr$
	and the number of cycles in $\lkevgraph{\tr}$.
}

\begin{restatable}{theorem}{syncpoffline}
\thmlabel{syncp_offline}
Consider a trace $\tr$ of $\NumEvents$ events, $\NumThreads$ threads and $\textsf{Cyc}_\tr$ cycles in $\lkevgraph{\tr}$.
The algorithm $\SyncPDOffline$ reports all sync-preserving deadlocks of $\tr$ in time $O(\NumEvents \cdot \NumThreads\cdot \textsf{Cyc}_\tr)$.
\end{restatable}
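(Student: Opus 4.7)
The plan is to split the proof into a correctness argument and a running-time argument, and then combine the two.

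For correctness I would argue soundness and completeness separately. Soundness is immediate: whenever $\SyncPDOffline$ reports a cycle $C = \eta_0, \ldots, \eta_{k-1}$, the algorithm has explicitly verified both the distinctness/disjointness conditions that make $C$ an abstract deadlock pattern and that $\texttt{CheckAbsDdlck}(C)$ returned true; by \lemref{abstract-pattern-linear-time} the latter guarantees the existence of an instantiation $D \in C$ that is a genuine sync-preserving deadlock, so the report is sound. For completeness, take any sync-preserving deadlock $D = \pattern{e_0,\ldots,e_{k-1}}$ of $\tr$ and associate to each $e_i$ the abstract acquire $\eta_i = \tuple{\ThreadOf{e_i}, \lk_i, \lheld{\tr}(e_i), F_i}$, where $F_i$ is the trace-ordered sequence of all acquires fitting this signature. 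The conditions defining a deadlock pattern (distinct threads, distinct locks, pairwise disjoint held-lock sets, and $\lk_i \in \lheld{\tr}(e_{(i+1)\%k})$) imply that $\abst{D} = \eta_0\cdots\eta_{k-1}$ is an abstract deadlock pattern with $D \in \abst{D}$, and that the sequence $\eta_0, \ldots, \eta_{k-1}$ forms a simple cycle in $\lkevgraph{\tr}$ by the definition of its edge set. Hence Johnson's enumeration will visit $C = \abst{D}$, the conditional in \algoref{offline} will succeed, and $\texttt{CheckAbsDdlck}(C)$ will return true, again by \lemref{abstract-pattern-linear-time}.

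For the running time, I would simply aggregate the costs of each stage. The abstract lock graph is built in $O(\NumEvents \cdot \NestingDepth)$ time in a single streaming pass; Johnson's algorithm then enumerates all simple cycles in time $O((|V_\tr| + |E_\tr|)(\textsf{Cyc}_\tr + 1))$, which is absorbed into the final bound. For each enumerated cycle $C$ of length $k \le \NumThreads$, the abstract-pattern test costs $O(k^2)$ time and is dominated by the subsequent call to $\texttt{CheckAbsDdlck}(C)$, which by \lemref{abstract-pattern-linear-time} runs in $O(\NumEvents \cdot \NumThreads)$ regardless of how many concretizations the abstract pattern represents. Summing over all $\textsf{Cyc}_\tr$ cycles yields the claimed total of $O(\NumEvents \cdot \NumThreads \cdot \textsf{Cyc}_\tr)$.

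The main subtlety I expect to spell out carefully is in the completeness direction: I need to check that the abstract acquires $\eta_i$ attached to distinct $e_i$ of a predictable deadlock are themselves pairwise distinct (so that the cycle in $\lkevgraph{\tr}$ is simple), and that the disjointness/distinctness predicates in the conditional of \algoref{offline} match exactly the conditions in the definition of an abstract deadlock pattern, so that no such cycle is filtered out. Once that correspondence is established, the per-cycle budget and the aggregation to the global bound follow directly from \lemref{abstract-pattern-linear-time}, and no further heavy lifting is needed.
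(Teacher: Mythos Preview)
Your proposal is correct and follows the paper's approach exactly: the paper does not provide a standalone proof of this theorem but lets it follow from the discussion preceding the statement (every abstract deadlock pattern is a simple cycle in $\lkevgraph{\tr}$, Johnson's algorithm enumerates all such cycles, and \lemref{abstract-pattern-linear-time} dispatches each one in $O(\NumEvents\cdot\NumThreads)$ time), which is precisely the soundness/completeness/running-time decomposition you spell out. The subtlety you flag about the $\eta_i$ being pairwise distinct is handled by the distinctness of the threads $t_i$ in any deadlock pattern, and the paper likewise leaves the absorption of Johnson's enumeration cost into the stated bound implicit.
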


Although, in principle, we can have exponentially many cycles in $\lkevgraph{\tr}$,
because the nodes of $\lkevgraph{\tr}$ are \emph{abstract} acquire events (as opposed to \emph{concrete}), we expect that the number of cycles (and thus abstract deadlock patterns) in $\lkevgraph{\tr}$ remains small, even though the number of \emph{concrete} deadlock patterns can grow exponentially.
Since $\SyncPDOffline$ spends linear time per abstract deadlock pattern, we  have an efficient procedure overall 
for constant $\NumThreads$ and $\NumLocks$.
We evaluate $\textsf{Cyc}_\tr$ experimentally in \secref{experiments}, and confirm that it is very small compared to the number of concrete deadlock patterns in $\tr$.
Nevertheless, $\textsf{Cyc}_\tr$ can become exponential when $\NumThreads$ and $\NumLocks$ are large, making~\cref{algo:offline} run in exponential time.
Note that this barrier is unavoidable in general, as proven in \thmref{pattern-w1-hardness-pattern}.


\begin{example}
    \exlabel{ex4}
    We illustrate how the lock graph is integrated inside $\SyncPDOffline$.	
    Consider the trace $\tr_3$ in \figref{syncp_example}.
    It contains $6$ concrete deadlock patterns $D_1 \ldots D_6$.
    A naive algorithm would enumerate each pattern explicitly until it finds a deadlock. 
    However, the tight interplay between the abstract lock graph and sync-preservation enables a more efficient procedure.
    $\SyncPDOffline$ starts by computing the sync-preserving closure of $D_1$,
    $\SPClosure{\tr_3}(\prev{\tr_3}(\{ e_{2},e_{16}\}))=\set{e_1, \ldots, e_6, \, e_8, \ldots, e_{15}}$.
    As $e_2 \in \SPClosure{\tr_3}(\prev{\tr_3}(\{ e_{2},e_{16}\}))$, we conclude that $D_1$ is not a sync-preserving deadlock.
    The algorithm further deduces that the deadlock patterns $D_2$, $D_3$ and $D_4$ are also not sync-preserving deadlocks, as follows.
    $D_2=\pattern{e_2, e_{19}}$ shares a common event $e_2$ with $D_1$ but contains the event $e_{19}$ instead of $e_{16}$, while $e_5 \in \SPClosure{\tr_3}(\prev{\tr_3}(\{ e_{2},e_{16}\}))$.
    Since $e_{16} \tho{\tr_3} e_{19}$, and the sync-preserving closure grows monotonically (\propref{spclosure-monotone}), the sync-preserving closure of $e_2$ and $e_{19}$ will also contain $e_5$ (and thus $e_2$).
    Therefore, $D_2$ cannot be a sync-preserving deadlock.
    This reasoning is formalized in \corref{pattern-monotone}, and also applies to $D_3$ and $D_4$.
    Next, the algorithm proceeds with $D_5$.
    The above reasoning does not hold for $D_5$ as $\SPClosure{\tr_3}(\prev{\tr_3}(\{ e_{2},e_{16}\})) \cap S_5 = \emptyset$ 
    where $S_5=\set{e_{29}, e_{16}}$.
    The algorithm then computes the sync-preserving closure of $D_5$, reports a deadlock (\exref{syncp-ex}) and stops analyzing this abstract deadlock pattern.
    In the end, we have only explicitly enumerated the deadlock patterns $D_1$ and $D_5$.
\end{example}

\begin{remark}
Although the concept of lock graphs exists in the literature~\cite{Havelund2000,Bensalem2005,Cai2020,cai14magiclock},
our notion of \emph{abstract} lock graphs is novel and tailored to sync-preserving deadlocks.
The closest concept to abstract lock graphs is that of equivalent cycles~\cite{cai14magiclock}. 
However, equivalent cycles unify all the concrete patterns of a given abstract pattern and lead to unsound deadlock detection, which was indeed their use. 
\end{remark}


\section{On-the-fly Deadlock Prediction}
\seclabel{otf}

Although $\SyncPDOffline$ is efficient, both theoretically (\thmref{syncp_offline})
and in practice (Table~\ref{tab:expr-results}),
it runs in two passes, akin to other predictive deadlock-detection methods~\cite{Kalhauge2018,Cai2021}.
In a runtime monitoring setting, it is desirable to operate in an \emph{online} fashion.
Recall that $\checkAbsDeadP(\cdot)$ indeed operates online (\secref{verify-abstract-patterns}),
while the offline nature of $\SyncPDOffline$
is tied to the offline construction of the abstract lock graph $\lkevgraph{\tr}$.
To achieve the golden standard of online, linear-time, 
sound deadlock prediction, we focus on deadlocks of size $2$.
This focus is barely restrictive as most deadlocks in the wild have size $2$~\cite{Lu08}.
Further, deadlocks of size $2$ enjoy the following computational benefits:
\begin{enumerate*}[label=(\alph*)]
\item cycles of length $2$ can be detected instantaneously without performing graph traversals, and
\item \emph{every} cycle of length $2$ is an abstract deadlock pattern.
\end{enumerate*}

\myparagraph{Algorithm $\SyncPDOnline$}{
The algorithm $\SyncPDOnline$ maintains all abstract acquires of the form $\eta=\tuple{t, \lk_1, \{\lk_2\}, F}$,
i.e., we only focus on one lock $\lk_2$ that is protecting each such acquire.
When a new acquire event $e=\ev{t,\acq(\lk_1)}$ is encountered, 
the algorithm iterates over all the locks $\lk_2\in\lheld{\tr}(e)$ that are held in $e$,
and append the event $e$ to the sequence $F$ of the corresponding abstract acquire $\eta=\tuple{t, \lk_1, \{ \lk_2 \}, F}$; $F$ is maintained as FIFO queue.
Recall that we use timestamps on the acquire events in $F$ to determine membership in our closure computation.
Our online algorithm computes these timestamps on-the-fly and stores them in these queues together with the events.
Then the algorithm calls $\checkAbsDeadP(C)$ on the abstract deadlock pattern $C$ formed between $\eta$ and $\eta'=\tuple{t'\neq t, \lk_2, \{ \lk_1 \}, F'}$, in order to check for sync-preserving deadlocks between the deadlock patterns in $F\times F'$.
If $\checkAbsDeadP(C)$ reports no deadlock, the contents of $F'$ are emptied
as we are guaranteed that $F'$ will not cause a sync-preserving deadlock with any further acquire of thread $t$ on lock $\lk_1$.
For a trace with $\NumEvents$ events, $\NumThreads$ threads and $\NumLocks$ locks.
The algorithm calls $\checkAbsDeadP$ for each of the $O(\NumThreads^2\cdot \NumLocks^2)$ abstract
deadlock patterns of size $2$, each call taking $O(\NumEvents\cdot \NumThreads)$ time.


\small
\begin{algorithm*}[t!]
 \vspace*{-0.35cm}
\begin{multicols}{2}

\myfun{\checkDeadlock{$Lst$, $I$, $\tuple{t_1, \lk_1, t_2, \lk_2}$}}{
	\While{$\NOT\, \isEmpty{Lst}$}{
		($C_{\prev{}}, C$) := $\first{Lst}$ \;
		$I$ := \fixpoint{$I \mx C_{\prev{}}$, $\tuple{t_1, \lk_1, t_2, \lk_2}$} \;
		\lIf{$C \not\cle I$}{ \linelabel{check-containment}
			\declare `Deadlock' and \Break
		}
		$\removeFirst{Lst}$
	}
	\Return $I$
}

\myhandler{\wthandler{$t$, $x$}}{
	$\LW_x$ := $\Cc_t$ \linelabel{clkwt}\\
	$\Cc_t$ := $\Cc_t[t \mapsto \Cc_t(t)+1]$ \linelabel{clkctw}
}

\myhandler{\relhandler{$t$, $\lk$}}{
	$\Cc_t$ := $\Cc_t[t \mapsto \Cc_t(t)+1]$ \linelabel{clkrel} \;
	\ForEach{$t_1, t_2 \in \threads{}, \lk_1, \lk_2 \in  \locks{}$}{
		$\updateRel{\lst{\view{\AcqLst}{t, \lk}{\tuple{t_1, \lk_1, t_2, \lk_2}}}}{$\Cc_t$}$
	}
}

\myhandler{\acqhandler{$t$, $\lk$}}{
	$C_{\prev{}}$ := $\Cc_t$ \;
	$\Cc_t$ := $\Cc_t[t \mapsto \Cc_t(t)+1]$\linelabel{clkacq}\\ $\gId_\lk$ := $\gId_\lk + 1$ \linelabel{gidlk}\;
	\ForEach{$t_1, t_2 \in \threads{}, \lk_1, \lk_2 \in \locks{}$}{
		$\addLst{\view{\AcqLst}{t, \lk}{\tuple{t_1, \lk_1, t_2, \lk_2}}}{$(\gId_\lk, \Cc_t, \bot)$}$
	}
	\ForEach{$u\in \threads{}$, $\lk' \in \lheld{}$}{
		$\addLst{\view{\DPLst}{t, \lk, \lk'}{\tuple{u}}}{($C_{\prev{}}$, $\Cc_t$)}$ \linelabel{acqhist}
	}
	\ForEach{$u \neq t \in \threads{}$, $\lk' \in \lheld{}$}{
		$I$ := $\view{\Ii}{}{\tuple{u, \lk' t, \lk}} \mx C_{\prev{}}$ \;
		$\view{\Ii}{}{\tuple{u, \lk' t, \lk}}$ := \checkDeadlock{$\view{\DPLst}{u, \lk', \lk\,}{\tuple{t}}$, $I$, $\tuple{u, \lk' t, \lk}$} \linelabel{checkdlk}
	}

}

\myhandler{\rdhandler{$t$, $x$}}{
	$\Cc_t$ := $\Cc_t \mx \LW_x$ \linelabel{clkrd}
}

\BlankLine

\end{multicols}
 \vspace*{-0.25cm}
\normalsize
\caption{$\SyncPDOnline$.}
\algolabel{online}
\end{algorithm*}
\normalsize

$\SyncPDOnline$ is shown in detail in \algoref{online}.
The pseudocode contains handlers for processing the different events of $\tr$ in a streaming fashion, as well as a helper function for checking deadlocks.
The main data structures of the algorithm are
(i)~vector clocks $\Cc_t$, $\LW_x$, and $\view{\Ii}{}{\tuple{t_1, \lk_1, t_2, \lk_2}}$,
(ii)~scalar $\gId_\lk$, and
(iii)~FIFO queues of vector clocks $\view{\AcqLst}{t, \lk}{\tuple{t_1, \lk_1, t_2, \lk_2}}$ and $\view{\DPLst}{t, \lk_1, \lk_2}{\tuple{u}}$,
where $t, t_1, t_2, u$ range over threads, $x$ ranges over variables, and $\lk, \lk_1, \lk_2$ range over locks.
$\Cc_t$ stores the timestamp $\TS{\tr}^e$ where $e$ is the last event in thread $t$.
$\LW_x$ keeps track of the $\TS{\tr}^e$ where $e$ is the last event such that $\OpOf{e} = \wt(x)$. 
$\view{\Ii}{}{\tuple{t_1, \lk_1, t_2, \lk_2}}$ stores the computed sync-preserving closures for every tuple $\tuple{t_1, \lk_1, t_2, \lk_2}$.
The scalar variable $\gId_\lk$ keeps track of the index of the last acquire event on lock $\lk$. 
Similar to \cref{algo:compute-closure}, the FIFO queue $\view{\AcqLst}{t, \lk}{\tuple{t_1, \lk_1, t_2, \lk_2}}$ is maintained to keep track of the critical section history of thread $t$ and lock $\lk$.
Lastly, for an acquire event $e$, $\view{\DPLst}{t, \lk_1, \lk_2}{\tuple{u}}$ maintains a queue of tuples of the form $\tuple{C_{\prev{}}, \Cc_t}$ where $C_{\prev{}}$ and $\Cc_t$ are the timestamps of $\prev{\sigma}(e)$ and $e$, respectively
These tuples are utilized when checking for deadlocks (\cref{line:checkdlk}).
}

\begin{restatable}{theorem}{syncponline}
\thmlabel{syncp_online}
Consider a trace $\tr$ of $\NumEvents$ events, $\NumThreads$ threads and $\NumLocks$ locks.
The online $\SyncPDOnline$ algorithm reports all sync-preserving deadlocks of size $2$ of $\tr$ in $O(\NumEvents\cdot  \NumThreads^3\cdot \NumLocks^2)$ time.
\end{restatable}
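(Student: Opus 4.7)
The plan is to reduce the behaviour of $\SyncPDOnline$ on each tuple $\tuple{t_1,\lk_1,t_2,\lk_2}$ to a single execution of the inner routine of $\SyncPDOffline$ (\algoref{abstract-pattern}) on the corresponding size-two abstract deadlock pattern. The key observation is that for $k=2$, the defining conditions of an abstract deadlock pattern only constrain a single lock from each held-set; consequently, tracking singleton held-lock sets $\{\lk\}$ on the queues $\view{\AcqLst}{t,\lk}{\tuple{t_1,\lk_1,t_2,\lk_2}}$ and $\view{\DPLst}{t,\lk_1,\lk_2}{\tuple{u}}$ is already sufficient to witness every size-two sync-preserving deadlock. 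Under this correspondence, the tuples indexing the data structures are in bijection with the size-two abstract deadlock patterns that must ever be examined, which bounds their count by $O(\NumThreads^2\cdot\NumLocks^2)$.

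For correctness I would argue as follows. Any size-two predictable deadlock pattern $\pattern{e_0,e_1}$ with $\ThreadOf{e_i}=t_i$ and $\OpOf{e_i}=\acq(\lk_i)$ satisfies $\lk_{1-i}\in\lheld{\tr}(e_i)$ and $\lheld{\tr}(e_0)\cap\lheld{\tr}(e_1)=\emptyset$, so both $e_0$ and $e_1$ appear on their respective per-tuple queues by the time the later of the two is processed by the acquire handler; at that point the handler fires a deadlock check for the tuple $\tuple{t_0,\lk_0,t_1,\lk_1}$. By a simple induction on events, I would show that on entry to each such check the clock $\view{\Ii}{}{\tuple{t_1,\lk_1,t_2,\lk_2}}$ and the residual queues reproduce exactly the loop state that \algoref{abstract-pattern} would have on the matching abstract pattern. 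Soundness and completeness for size-two patterns then follow from \lemref{spclosure-deadlock} and \lemref{abstract-pattern-linear-time}.

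For the time bound, for each of the $O(\NumThreads^2\cdot\NumLocks^2)$ tuples I would amortize the total cost of all online deadlock-checking calls against a single offline execution of \algoref{abstract-pattern}, which by \lemref{abstract-pattern-linear-time} is $O(\NumEvents\cdot\NumThreads)$. Two monotonicity invariants drive the argument: the queues $\view{\AcqLst}{\cdot}{\cdot}$ and $\view{\DPLst}{\cdot}{\cdot}$ only shrink (elements are never revisited once removed), and the saved clock $\view{\Ii}{}{\cdot}$ is non-decreasing across successive invocations, which is the streaming analogue of the clock reuse exploited in \algoref{abstract-pattern} and whose soundness is exactly \propref{spclosure-monotone} combined with \corref{pattern-monotone}. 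Summing over tuples gives $O(\NumEvents\cdot\NumThreads^3\cdot\NumLocks^2)$; the per-event handler bookkeeping, dominated by the $O(\NumThreads^2\cdot\NumLocks^2)$ queue appends, contributes a strictly smaller additive term.

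The hard part will be making the amortization airtight despite the interleaved, online nature of the pass. In particular, I need to rule out that any queue entry is rescanned or that any sync-preserving closure is recomputed from scratch across the many re-entries of the deadlock checker for a fixed tuple. Concretely, this reduces to establishing the invariant that, at each entry of the checker for $\tuple{t_1,\lk_1,t_2,\lk_2}$, the clock $\view{\Ii}{}{\tuple{t_1,\lk_1,t_2,\lk_2}}$ already equals the sync-preserving closure clock of the union of $\prev{\tr}$ over all previously considered candidate sets for that tuple; once this invariant is in place the online cost per tuple matches the offline $O(\NumEvents\cdot\NumThreads)$ bound, and the theorem follows.
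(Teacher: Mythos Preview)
Your proposal is correct and follows essentially the same approach as the paper: count the $O(\NumThreads^2\cdot\NumLocks^2)$ size-two abstract patterns (indexed by tuples $\tuple{t_1,\lk_1,t_2,\lk_2}$), amortize the online work for each tuple against a single run of \algoref{abstract-pattern} at cost $O(\NumEvents\cdot\NumThreads)$ via \lemref{abstract-pattern-linear-time}, and multiply. You supply considerably more detail than the paper does---in particular the explicit monotonicity invariants on $\view{\Ii}{}{\cdot}$ and the queues, and the induction linking the online state to the offline loop state---whereas the paper simply asserts the per-tuple bound in one sentence; your added rigor is welcome and does not deviate from the intended argument.
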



\section{Experimental Evaluation}
\seclabel{experiments}

We first evaluated our algorithms
in an offline setting~(\secref{offline-expr}), where we record execution traces and evaluate different approaches on the \emph{same} input.
This eliminates biases due to non-deterministic thread scheduling.
Next, we consider an online setting~(\secref{online-expr}),
where we instrument programs and perform the analyses during runtime.
We conducted all our experiments on a standard laptop with \SI{1.8}{GHz} Intel Core i7 processor and \SI{16}{GB} RAM.

\subsection{Offline Experiments}
\seclabel{offline-expr}

\Paragraph{Experimental setup}
The goal of the first set of experiments is to evaluate 
$\SyncPDOffline$, and compare it
against prior algorithms for dynamic deadlock prediction.
In order for our evaluation to be precise we evaluate all algorithms on the \emph{same} execution trace.
We implemented $\SyncPDOffline$ in Java inside the \toolname analysis tool~\cite{rapid}, 
following closely the pseudocode in \algoref{offline}.
\toolname takes as input execution traces, as defined in \secref{prelim}.
These also include fork, join, and lock-request events.
We compare $\SyncPDOffline$ with two state-of-the-art, 
theoretically-sound albeit computationally more expensive, deadlock predictors,
\seqc~\cite{Cai2021} and \dirk~\cite{Kalhauge2018}, both of which also work on execution traces.

On the theoretical side, the complexity of \seqc is $\Otilde(\NumEvents^4)$, 
as opposed to the $\Otilde(\NumEvents)$ complexity of $\SyncPDOffline$. 
Moreover, \seqc only predicts deadlocks of size $2$, and though it could be extended to handle deadlocks of any size, this would degrade performance further.
\seqc may miss sync-preserving deadlocks even of size $2$, 
but can  detect deadlocks that are not sync-preserving.
Thus \seqc and $\SyncPDOffline$ are theoretically incomparable in their detection capability.
We refer to\begin{pldi}~\cite{arxiv}\end{pldi}\begin{arxiv}~\appref{incomp}\end{arxiv} for examples.
We noticed that \seqc fails on traces with non-well-nested locks --- we encountered one such case in our dataset.
\dirk's algorithm is theoretically complete, i.e., it can find all predictable deadlocks in a trace.
In addition, it can find deadlocks beyond the predictable ones, by reasoning about event values.
However, \dirk relies on heavyweight SMT-solving and
employs windowing techniques to scale to large traces. 
Due to windowing, it can miss deadlocks between events that are outside the given window. 
As with previous works~\cite{Cai2021, Kalhauge2018}, we set a window size of $10$K for \dirk.

Our dataset consists of several benchmarks 
from standard benchmark suites --- IBM Contest suite~\cite{Farchi03}, Java Grande suite~\cite{Smith01},
DaCapo~\cite{Blackburn06}, and
SIR~\cite{doESE05} ---
and recent literature~\cite{Kalhauge2018, Cai2021, jula2008deadlock, Joshi2009}.
Each benchmark was instrumented with RV-Predict~\cite{rvpredict} or Wiretap~\cite{Kalhauge2018} and
executed in order to log a single execution trace.


\begin{table}[h!]
\caption{
Trace characteristics, abstract lock graph statistics and performance comparison.
Columns 2-6 show the number of events, threads, variables, locks
and total number of lock acquire and request events.
Columns 7-9 show the number of cycles, abstract and concrete deadlock patterns in the abstract lock graph.
Columns 10 - 15 show the number of deadlocks reported and the times (in seconds) taken. 
by \dirk, \seqc, and \SyncPDOffline.
Time out (T.O) was set to $3$h.
F stands for technical failure.
\label{tab:expr-results}
}
\vspace{-0.17cm}
\setlength\tabcolsep{3pt}
\renewcommand{\arraystretch}{0.91}
\centering
\scalebox{0.86}{
\begin{tabular}{|r|c|c|c|c|c||c|c|c||c|c||c|c||c|c|}
\hline
1 & 2 & 3 & 4 & 5 & 6 & 7 & 8 & 9 & 10 & 11 & 12 & 13 & 14 & 15 \\
\hline
\multirow{2}{*}{\textbf{Benchmark}}& 
\multirow{2}{*}{$\mathcal{N}$} & \multirow{2}{*}{$\mathcal{T}$} & \multirow{2}{*}{$\mathcal{V}$} & \multirow{2}{*}{$\mathcal{L}$} & \multirow{2}{*}{$\NumAcquires/\mathcal{R}$} 
& \multicolumn{3}{c||}{ \textsf{A. Lock Graph }}
& \multicolumn{2}{c||}{ {\dirk} } 
& \multicolumn{2}{c||}{ {\seqc}} 
& \multicolumn{2}{c|}{ {\textsf{$\SyncPDOffline$}}} \\
\cline{7-15}
& & & & & 
& \textsf{|$\textsf{Cyc}$|}
& \textsf{\textsf{A. P.}}
& \textsf{\textsf{C. P.}}
& \textbf{Dlk} 
& {\textbf{Time}} 
&{\textbf{Dlk}} 
& {\textbf{Time}}
&{\textbf{Dlk}} 
&{\textbf{Time}} \\
\hline
Deadlock & 39 & 3 & 4 & 3 & 8 & 1 & 1 & 1 & 1 & 0.02 & 0 & 0.09 & 0 & 0.16\\
NotADeadlock & 60 & 3 & 4 & 5 & 16 & 1 & 1 & 1 & 0 & 0.02 & 0 & 0.09 & 0 & 0.16\\
Picklock & 66 & 3 & 6 & 6 & 20 & 2 & 2 & 2 & 1 & 0.02 & 1 & 0.10 & 1 & 0.18\\
Bensalem & 68 & 4 & 5 & 5 & 22 & 2 & 2 & 2 & 1 & 0.02 & 1 & 0.12 & 1 & 0.16\\
Transfer & 72 & 3 & 11 & 4 & 12 & 1 & 1 & 1 & 1 & 0.02 & 0 & 0.09 & 0 & 0.15\\
Test-Dimmunix & 73 & 3 & 9 & 7 & 26 & 2 & 2 & 2 & 2 & 0.02 & 2 & 0.10 & 2 & 0.17\\
StringBuffer & 74 & 3 & 14 & 4 & 16 & 1 & 3 & 6 & 2 & 0.02 & 2 & 0.12 & 2 & 0.19\\
Test-Calfuzzer & 168 & 5 & 16 & 6 & 48 & 2 & 1 & 1 & 1 & 0.02 & 1 & 0.12 & 1 & 0.17\\
DiningPhil & 277 & 6 & 21 & 6 & 100 & 1 & 1 & 3K & 1 & 1.60 & 0 & 0.09 & 1 & 0.17\\
HashTable & 318 & 3 & 5 & 3 & 174 & 1 & 2 & 43 & 2 & 0.19 & 2 & 0.12 & 2 & 0.19\\
Account & 706 & 6 & 47 & 7 & 134 & 3 & 1 & 12 & 0 & 0.19 & 0 & 0.09 & 0 & 0.18\\
Log4j2 & 1K & 4 & 334 & 11 & 43 & 1 & 1 & 1 & 1 & 0.65 & 1 & 0.11 & 1 & 0.20\\
Dbcp1 & 2K & 3 & 768 & 5 & 56 & 2 & 2 & 3 & - & F & 2 & 0.11 & 2 & 0.19\\
Dbcp2 & 2K & 3 & 592 & 10 & 76 & 1 & 2 & 4 & - & F & 0 & 0.10 & 0 & 0.18\\
Derby2 & 3K & 3 & 1K & 4 & 16 & 1 & 1 & 1 & 1 & 0.23 & 1 & 0.10 & 1 & 0.17\\
RayTracer & 31K & 5 & 5K & 15 & 976 & 0 & 0 & 0 & - & F & 0 & 0.15 & 0 & 0.19\\
jigsaw & 143K & 21 & 8K & 2K & 67K & 172 & 12 & 70 & - & F & 2 & 0.36 & 1 & 1.55\\
elevator & 246K & 5 & 727 & 52 & 48K & 0 & 0 & 0 & 0 & 1.65 & 0 & 0.33 & 0 & 0.27\\
hedc & 410K & 7 & 109K & 8 & 32 & 0 & 0 & 0 & 0 & 2.09 & 0 & 0.50 & 0 & 0.24\\
JDBCMySQL-1 & 442K & 3 & 73K & 11 & 13K & 2 & 4 & 6 & 2 & 28.45 & 2 & 0.24 & 2 & 0.48\\
JDBCMySQL-2 & 442K & 3 & 73K & 11 & 13K & 4 & 4 & 9 & 1 & 3.37 & 1 & 0.22 & 1 & 0.33\\
JDBCMySQL-3 & 443K & 3 & 73K & 13 & 13K & 5 & 8 & 16 & 1 & 31.23 & 1 & 0.25 & 1 & 0.45\\
JDBCMySQL-4 & 443K & 3 & 73K & 14 & 13K & 5 & 10 & 18 & 2 & 5.51 & 2 & 0.28 & 2 & 0.49\\
cache4j & 775K & 2 & 46K & 20 & 35K & 0 & 0 & 0 & 0 & 5.86 & 0 & 0.46 & 0 & 0.39\\
ArrayList & 3M & 801 & 121K & 802 & 176K & 9 & 3 & 672 & 3 & 8.7K & 3 & 21.98 & 3 & 1.68\\
IdentityHashMap & 3M & 801 & 496K & 802 & 162K & 1 & 3 & 4 & 1 & 443.93 & 1 & 8.51 & 1 & 1.45\\
Stack & 3M & 801 & 118K & 2K & 405K & 9 & 3 & 481 & 1 & T.O & 3 & 25.34 & 3 & 2.94\\
Sor & 3M & 301 & 2K & 3 & 719K & 0 & 0 & 0 & 0 & 15.89 & 0 & 44.12 & 0 & 0.61\\
LinkedList & 3M & 801 & 290K & 802 & 176K & 9 & 3 & 10K & 3 & 4.7K & 3 & 48.02 & 3 & 2.06\\
HashMap & 3M & 801 & 555K & 802 & 169K & 1 & 3 & 10K & 3 & 4.4K & 2 & 504.36 & 2 & 1.65\\
WeakHashMap & 3M & 801 & 540K & 802 & 169K & 1 & 3 & 10K & - & T.O & 2 & 499.68 & 2 & 1.70\\
Swing & 4M & 8 & 31K & 739 & 2M & 0 & 0 & 0 & - & F & 0 & 0.72 & 0 & 0.88\\
Vector & 4M & 3 & 15 & 4 & 800K & 1 & 1 & 1B & - & T.O & 1 & 1.52 & 1 & 1.90\\
LinkedHashMap & 4M & 801 & 617K & 802 & 169K & 1 & 3 & 10K & 2 & 40.74 & 2 & 492.87 & 2 & 1.69\\
montecarlo & 8M & 3 & 850K & 3 & 26 & 0 & 0 & 0 & 0 & 2.6K & 0 & 1.81 & 0 & 0.79\\
TreeMap & 9M & 801 & 493K & 802 & 169K & 1 & 3 & 10K & 2 & 105.45 & 2 & 480.11 & 2 & 1.92\\
hsqldb & 20M & 46 & 945K & 403 & 419K & 0 & 0 & 0 & - & F & - & - & 0 & 2.38\\
sunflow & 21M & 16 & 2M & 12 & 1K & 0 & 0 & 0 & - & F & 0 & 8.35 & 0 & 1.62\\
jspider & 22M & 11 & 5M & 15 & 10K & 0 & 0 & 0 & - & F & 0 & 8.49 & 0 & 1.95\\
tradesoap & 42M & 236 & 3M & 6K & 245K & 2 & 1 & 4 & - & F & 0 & 108.16 & 0 & 7.06\\
tradebeans & 42M & 236 & 3M & 6K & 245K & 2 & 1 & 4 & - & F & 0 & 116.23 & 0 & 7.26\\
eclipse & 64M & 15 & 10M & 5K & 377K & 9 & 5 & 280 & - & F & 0 & 26.67 & 0 & 9.90\\
TestPerf & 80M & 50 & 599 & 9 & 197K & 0 & 0 & 0 & 0 & 795.04 & 0 & 47.56 & 0 & 4.30\\
Groovy2 & 120M & 13 & 13M & 10K & 69K & 0 & 0 & 0 & 0 & 1.7K & 0 & 38.06 & 0 & 8.92\\
Tsp & 200M & 6 & 24K & 3 & 882 & 0 & 0 & 0 & 0 & 7.6K & 0 & 72.62 & 0 & 12.70\\
lusearch & 203M & 7 & 3M & 98 & 273K & 0 & 0 & 0 & 0 & 1.3K & 0 & 75.88 & 0 & 14.44\\
biojava & 221M & 6 & 121K & 79 & 16K & 0 & 0 & 0 & - & F & 0 & 63.79 & 0 & 12.65\\
graphchi & 241M & 20 & 25M & 61 & 1K & 0 & 0 & 0 & - & F & 0 & 102.05 & 0 & 25.25\\
\hline\hline\textbf{Totals} & \textbf{1B} & \textbf{7K} & \textbf{70M} & \textbf{37K} & \textbf{8M} & \textbf{256} & \textbf{93} & \textbf{1B} & \textbf{35} & \textbf{>18h} & \textbf{40} & \textbf{2801} & \textbf{40} & \textbf{135}\\\hline
\end{tabular}
}
\end{table}

\Paragraph{Evaluation}
\cref{tab:expr-results} presents our results.
A bug identifies a unique tuple of source
code locations corresponding to events participating in the deadlock.
Trace lengths vary vastly from $39$ to about $241$M, while the number of threads ranges from $3$ to about $800$,
which are representative features of real-world settings.
\texttt{Hsqldb} contains critical sections that are not well nested, 
and \seqc was not able to handle this benchmark;
our algorithm does not have such a restriction.

\vspace{-0.1cm}
\SubParagraph{\underline{Abstract vs Concrete Patterns}}
Columns 7-9 present statistics on the 
abstract lock graph $\lkevgraph{\tr}$ of each trace $\tr$.
Many traces have a large number of concrete deadlock patterns 
but much fewer abstract deadlock patterns;
a single abstract deadlock pattern can 
comprise up to an order of $10^4$ more concrete patterns (Column $8$ v/s Column $9$).
Unlike all prior sound techniques, 
our algorithms analyze 
abstract deadlock patterns, instead of concrete ones. 
We thus expect our algorithms to be much more scalable in practice.

\SubParagraph{\underline{Deadlock-detection capability}}
In total, both \seqc and $\SyncPDOffline$ reported 40 deadlocks.
\seqc misses a deadlock of size $5$ in \texttt{DiningPhil},
which is detected by $\SyncPDOffline$,
and $\SyncPDOffline$ misses a deadlock in \texttt{jigsaw} which is detected by \seqc.
As $\SyncPDOffline$ is complete for sync-preserving deadlocks, we conclude that there are no more such deadlocks in our dataset.
Overall, $\SyncPDOffline$ and \seqc miss only three deadlocks reported by \dirk. 
On closer inspection, we found that these deadlocks are not witnessed by correct reorderings, and require reasoning about event values.
On the other hand, \dirk struggles to analyze even moderately-sized benchmarks and times out in $3$ of them. 
This results in \dirk failing to report 5 deadlocks after $9$ hours, all of which are reported by $\SyncPDOffline$ in under a minute.
Similar conclusions were recently made in~\cite{Cai2021}.  
Overall, our results strongly indicate that the notion of sync-preservation characterizes most deadlocks that other tools are able to predict.

\SubParagraph{\underline{Unsoundness of \dirk}}
In our evaluation, we discovered that the soundness guarantee 
underlying \dirk~\cite{Kalhauge2018} is broken, resulting in it reporting false positives.
First, its constraint formulation~\cite{Kalhauge2018} 
does not rule out deadlock patterns when acquire events in the pattern hold common locks, 
in which case mutual exclusion disallows such a pattern to be a real predictable deadlock.
Second, \dirk also models conditional statements, allowing it to reason about witnesses beyond correct reorderings.
While this relaxation allows \dirk to predict additional deadlocks in \texttt{Transfer}, \texttt{Deadlock} and \texttt{HashMap}, 
its formalization is not precise and its implementation is erroneous.
We elaborate these aspects further in\begin{pldi}~\cite{arxiv}. \end{pldi}\begin{arxiv}~\appref{unsound-dirk}.\end{arxiv}


\SubParagraph{\underline{Running time}}
Our experimental results indicate that \dirk, backed by SMT solving, 
is the least efficient technique in terms of running time ---
it takes considerably longer or times out on large benchmark instances.
$\SyncPDOffline$ analyzed the entire set of traces $\sim\!\!\!21\times$ faster than \seqc.
On the most demanding benchmarks, such as 
HashMap and TreeMap, $\SyncPDOffline$ is more than $200\times$ faster than \seqc.
Although \seqc employs a polynomial-time algorithm for deadlock prediction, 
and thus significantly faster than the SMT-based \dirk,
the large polynomial complexity in its running time hinders scalability on 
execution traces coming from benchmarks that are more representative of realistic workloads.
In contrast, the linear time guarantees of $\SyncPDOffline$ are realized in practice, 
allowing it to scale on even the most challenging inputs.
More importantly, the improved performance comes while preserving essentially the same precision.



\SubParagraph{\underline{False negatives}}
Our benchmark set contains $93$ abstract deadlock patterns, $40$ of which are confirmed sync-preserving deadlocks.
We inspected the remaining $53$ abstract patterns to see if any of them are predictable deadlocks
missed by our sync-preserving criterion, independently of the compared tools.
$48$ of these $53$ patterns are in fact not predictable deadlocks ---
for every such pattern $D$, 
the set $S_D$ of events in the downward-closure of $\prev{}(D)$ with respect to $\tho{}$ and $\rf{}$,
already contains an event from $D$, disallowing any correct reordering
(sync-preserving or not) in which $D$ can be enabled.
Of the remaining, $4$ deadlock patterns obey the following scheme:
there are two acquire events $\acq_1, \acq_2$ participating in the deadlock pattern, 
each $\acq_i$ is preceded by a critical section on a lock that appears in 
$\lheld{}(\acq_{3-i})$, again disallowing a correct reordering that witnesses the pattern.
Thus, \emph{only one} predictable deadlock is not sync-preserving in our whole dataset.
This analysis supports that the notion of sync-preservation is not overly conservative in practice.

The above analysis concerns false negatives wrt. predictable deadlocks.
Some deadlocks are beyond the common notion of predictability we have adopted here, as they can only be exposed by reasoning about event values and control-flow dependencies, a problem that is $\NP$-hard even for 3 threads~\cite{Gibbons1997}.
We noticed $3$ such deadlocks in our dataset, found by \dirk,
though, as mentioned above, \dirk's reasoning for capturing such deadlocks is unsound in practice.

\subsection{Online Experiments}
\seclabel{online-expr}

\Paragraph{Experimental setup}
The objective of our second set of experiments is to evaluate 
the performance 
of our proposed algorithms in an \emph{online} setting.
For this, we implemented our $\SyncPDOnline$ algorithm inside the
framework of \dlfuzzer~\cite{Joshi2009} following closely the pseudocode in \algoref{online}. 
This framework instruments a concurrent program so that it can
perform analysis on-the-fly while executing it.
If a deadlock occurs during execution, it is reported and the execution halts.
However, if a deadlock is predicted in an alternate interleaving, 
then this deadlock is reported and the execution continues to search further deadlocks.
We used the same dataset as in \secref{offline-expr}, 
after discarding some benchmarks that could not be instrumented by \dlfuzzer.

To the best of our knowledge, all prior deadlock prediction techniques work offline.
For this reason, we only compared our online tool with the randomized 
scheduling technique of~\cite{Joshi2009} already
implemented inside the same \dlfuzzer framework.	
At a high level, this random scheduling technique works as follows.
Initially, it
(i)~executes the input program with a random scheduler, 
(ii)~constructs a \emph{lock dependency relation}, and 
(iii)~runs a cycle detection algorithm to discover deadlock patterns. 
For each deadlock pattern thus found, it spawns new executions that attempt 
to realize it as an actual deadlock.
To increase the likelihood of hitting the deadlock,
\dlfuzzer biases the random scheduler by pausing threads at specific locations.

The second, confirmation phase of~\cite{Joshi2009}
acts as a best-effort proxy for sound deadlock prediction.
On the other hand, $\SyncPDOnline$ is already sound and predictive, and thus does not require
additional confirmation runs, making it more efficient.
Towards effective prediction, we also implemented a simple bias to the scheduler.
If a thread $t$ attempts to write on a shared variable $x$ while holding a lock, then 
our procedure randomly decides to pause this operation for a short duration.
This effectively explores race conditions in different orders.
Overall, implementing $\SyncPDOnline$ inside \dlfuzzer provided the added advantage of supplementing a powerful prediction technique with a biased randomized scheduler.
To our knowledge, our work is the first to effectively 
combine these two orthogonal techniques.
We also remark that such a bias is of no benefit to \dlfuzzer itself
since it does not employ any predictive reasoning.


For this experiment, we run \dlfuzzer on each benchmark, and for each deadlock pattern found in the initial execution, 
we let it spawn $3$ new executions trying to realize the deadlock, 
as per standard (\href{https://github.com/ksen007/calfuzzer}{https://github.com/ksen007/calfuzzer}).
We repeated this process $50$ times and recorded the total time taken.
Then, we allocated the same time for $\SyncPDOnline$ to repeatedly execute the same program and perform deadlock prediction.
We measured all deadlocks found by each technique.


\begin{table*}
\caption{
Performance comparison of $\SyncPDOnline$ ($\syncpdshort$) and $\dlfuzzer$ ($\dlfshort$).
Columns 2-3 show the total number of bug reports. 
Columns 4-6 show the total number of unique bugs found by each tool, and their union. 
Columns 7-12 show the hit rate on each bug.
Columns 13-16 show the runtime overhead of the tools.
\label{tab:expr-dlf-results}
}
\vspace{-0.15cm}
\setlength\tabcolsep{3pt}
\renewcommand{\arraystretch}{0.9}
\small
\centering
\scalebox{1}{
\begin{tabular}{|r|c|c|c|c|c|c|c|c|c|c|c|c|c|c|c|c|c|c|c|c|c|c|c|c|c|c|c|}
\hline
1 & 2 & 3 & 4 & 5 & 6 & 7 & 8 & 9 & 10 & 11 & 12 & 13 & 14 & 15 & 16 \\
\hline
\multirow{2}{*}{\textbf{Benchmark}}& 
\multicolumn{2}{c|}{ {\textsf{Bug Hits}}} & 
\multicolumn{3}{c|}{ {\textsf{Unique Bugs}}} 
& \multicolumn{2}{c|}{ \textsf{Bug 1}} &
\multicolumn{2}{c|}{ \textsf{Bug 2}} &
\multicolumn{2}{c|}{ \textsf{Bug 3}} & \multicolumn{4}{c|}{ {\textsf{Runtime Overhead}}}  \\
\cline{7-16}
\cline{2-6}
& \textsf{$\syncpdshort$} & \textsf{$\dlfshort$} & \textsf{$\syncpdshort$} & \textsf{$\dlfshort$}  & All 
& \textsf{\textsf{$\syncpdshort$}} 
& \textbf{$\dlfshort$} 
& {\textbf{$\syncpdshort$}} 
&{\textbf{$\dlfshort$}} 
& {\textbf{$\syncpdshort$}}
&{\textbf{$\dlfshort$}} & {\textbf{$\syncpdshortinstr$}}  & {\textbf{$\syncpdshort$}}  & 
{\textbf{$\dlfshortinstr$}}  &
 {\textbf{$\dlfshort$}} \\
\hline
\hline
Deadlock & 50 & 50 & 1 & 1 & 1 & 50 & 50 & - & - & - & - & 2$\times$ & 3$\times$ & 2$\times$ & 4$\times$\\
Picklock & 227 & 97 & 2 & 1 & 2 & 226 & 97 & 1 & 0 & - & - & 2$\times$ & 2$\times$ & 2$\times$ & 5$\times$\\
Bensalem & 355 & 32 & 2 & 1 & 2 & 8 & 0 & 347 & 32 & - & - & 2$\times$ & 2$\times$ & 2$\times$ & 6$\times$\\
Transfer & 54 & 50 & 1 & 1 & 1 & 54 & 50 & - & - & - & - & 2$\times$ & 2$\times$ & 1$\times$ & 4$\times$\\
Test-Dimmunix & 702 & 0 & 2 & 0 & 2 & 351 & 0 & 351 & 0 & - & - & 2$\times$ & 2$\times$ & 2$\times$ & 4$\times$\\
StringBuffer & 153 & 131 & 2 & 2 & 2 & 128 & 118 & 25 & 13 & - & - & - & - & - & -\\
Test-Calfuzzer & 177 & 44 & 1 & 1 & 1 & 177 & 44 & - & - & - & - & 2$\times$ & 2$\times$ & 2$\times$ & 4$\times$\\
DiningPhil & 162 & 100 & 1 & 1 & 1 & 162 & 100 & - & - & - & - & - & - & - & -\\
HashTable & 169 & 120 & 2 & 2 & 2 & 82 & 21 & 87 & 99 & - & - & - & - & - & -\\
Account & 19 & 188 & 1 & 1 & 1 & 19 & 188 & - & - & - & - & 2$\times$ & 8$\times$ & 2$\times$ & 16$\times$\\
Log4j2 & 290 & 100 & 2 & 1 & 2 & 145 & 100 & 145 & 0 & - & - & - & - & - & -\\
Dbcp1 & 265 & 138 & 2 & 2 & 2 & 264 & 61 & 1 & 77 & - & - & - & - & - & -\\
Dbcp2 & 129 & 126 & 2 & 2 & 2 & 86 & 99 & 43 & 27 & - & - & - & - & - & -\\
RayTracer & 0 & 0 & 0 & 0 & 0 & - & - & - & - & - & - & 122$\times$ & 124$\times$ & 109$\times$ & 111$\times$\\
Tsp & 0 & 0 & 0 & 0 & 0 & - & - & - & - & - & - & 47$\times$ & 60$\times$ & 37$\times$ & 40$\times$\\
jigsaw & 1189 & 1 & 1 & 1 & 2 & 1189 & 0 & 0 & 1 & - & - & - & - & - & -\\
elevator & 0 & 0 & 0 & 0 & 0 & - & - & - & - & - & - & 2$\times$ & 2$\times$ & 2$\times$ & 2$\times$\\
JDBCMySQL-1 & 349 & 117 & 2 & 3 & 3 & 1 & 21 & 0 & 4 & 348 & 92 & 3$\times$ & 4$\times$ & 2$\times$ & 13$\times$\\
JDBCMySQL-2 & 559 & 73 & 1 & 1 & 1 & 559 & 73 & - & - & - & - & 2$\times$ & 4$\times$ & 2$\times$ & 18$\times$\\
JDBCMySQL-3 & 560 & 224 & 1 & 1 & 1 & 560 & 224 & - & - & - & - & 2$\times$ & 5$\times$ & 2$\times$ & 24$\times$\\
JDBCMySQL-4 & 1717 & 101 & 3 & 1 & 3 & 95 & 0 & 834 & 0 & 788 & 101 & 3$\times$ & 5$\times$ & 2$\times$ & 31$\times$\\
hedc & 0 & 0 & 0 & 0 & 0 & - & - & - & - & - & - & 2$\times$ & 2$\times$ & 1$\times$ & 2$\times$\\
cache4j & 0 & 0 & 0 & 0 & 0 & - & - & - & - & - & - & 2$\times$ & 2$\times$ & 2$\times$ & 2$\times$\\
lusearch & 0 & 0 & 0 & 0 & 0 & - & - & - & - & - & - & 16$\times$ & 17$\times$ & 13$\times$ & 16$\times$\\
ArrayList & 47 & 45 & 3 & 3 & 3 & 20 & 22 & 3 & 5 & 24 & 18 & 50$\times$ & 69$\times$ & 18$\times$ & 79$\times$\\
Stack & 44 & 27 & 3 & 3 & 3 & 18 & 13 & 8 & 4 & 18 & 10 & 69$\times$ & 91$\times$ & 64$\times$ & 86$\times$\\
IdentityHashMap & 68 & 62 & 2 & 2 & 2 & 13 & 47 & 55 & 15 & - & - & 4$\times$ & 8$\times$ & 3$\times$ & 10$\times$\\
LinkedList & 48 & 26 & 3 & 2 & 3 & 21 & 17 & 7 & 0 & 20 & 9 & 16$\times$ & 28$\times$ & 14$\times$ & 32$\times$\\
Swing & 0 & 0 & 0 & 0 & 0 & - & - & - & - & - & - & 5$\times$ & 6$\times$ & 4$\times$ & 6$\times$\\
Sor & 0 & 0 & 0 & 0 & 0 & - & - & - & - & - & - & 2$\times$ & 7$\times$ & 2$\times$ & 2$\times$\\
HashMap & 46 & 44 & 2 & 2 & 2 & 18 & 11 & 28 & 33 & - & - & 7$\times$ & 11$\times$ & 4$\times$ & 8$\times$\\
Vector & 126 & 50 & 1 & 1 & 1 & 126 & 50 & - & - & - & - & 2$\times$ & 2$\times$ & 2$\times$ & 3$\times$\\
LinkedHashMap & 57 & 43 & 2 & 2 & 2 & 22 & 10 & 35 & 33 & - & - & 10$\times$ & 10$\times$ & 4$\times$ & 8$\times$\\
WeakHashMap & 29 & 40 & 2 & 2 & 2 & 6 & 11 & 23 & 29 & - & - & 7$\times$ & 12$\times$ & 4$\times$ & 8$\times$\\
montecarlo & 0 & 0 & 0 & 0 & 0 & - & - & - & - & - & - & 16$\times$ & 100$\times$ & 13$\times$ & 126$\times$\\
TreeMap & 42 & 47 & 2 & 2 & 2 & 16 & 15 & 26 & 32 & - & - & 9$\times$ & 12$\times$ & 5$\times$ & 9$\times$\\
eclipse & 0 & 0 & 0 & 0 & 0 & - & - & - & - & - & - & 2$\times$ & 2$\times$ & 2$\times$ & 2$\times$\\
TestPerf & 0 & 0 & 0 & 0 & 0 & - & - & - & - & - & - & 2$\times$ & 2$\times$ & 2$\times$ & 2$\times$\\
\hline\hline\textbf{Total} & \textbf{7633} & \textbf{2076} & \textbf{49} & \textbf{42} & \textbf{51} & - & - & - & - & - & - & - & - & - & - \\ 
\hline
\end{tabular}
}
\end{table*}

\Paragraph{Evaluation}
\cref{tab:expr-dlf-results} presents our experimental results.
Columns $2$-$3$ of the table display the total number of bug hits,
which is the total number of times a bug was predicted by $\SyncPDOnline$ in the entire duration,
or was confirmed in any trial of \dlfuzzer.
Columns $4$-$6$ display the unique bugs (i.e., unique tuples of source code locations leading to a deadlock) 
found by the techniques.
The employed techniques are able to find a maximum of $3$ unique bugs for each benchmark
in our benchmark set. 
Respectively, columns $7$-$12$ display the detailed information on the number 
of times a particular bug was found by each technique.
Runtime overheads are displayed in the columns $13$-$16$, with $\mathsf{\tt -I}$ denoting the instrumentation phase only.

\SubParagraph{\underline{Deadlock-detection capability}}
\dlfuzzer had $2076$ bug reports in total, and it found $42$ unique bugs.
In contrast, $\SyncPDOnline$ flagged $7633$ bug reports, corresponding to $49$ unique bugs.
In more detail, \dlfuzzer missed $9$ bugs reported by \SyncPDOnline whereas 
$\SyncPDOnline$ missed $2$ bugs reported by \dlfuzzer.
Also, \SyncPDOnline significantly outperformed \dlfuzzer in total number of bugs hits.
Our experiments again support that the notion of sync-preservation  captures most deadlocks that occur in practice, to the extent that other state-of-the-art techniques can capture.
A further observation is that in the offline experiments, \SyncPDOffline  was not able to find deadlocks in \texttt{Transfer} and \texttt{Deadlock}. 
However, the random scheduling procedure allowed \SyncPDOnline 
to navigate to executions from which deadlocks can be predicted.
This demonstrates the potential of combining predictive dynamic
techniques with controlled concurrency testing.

\SubParagraph{\underline{Runtime overhead}}
We have also measured the runtime overhead of both \SyncPDOnline and \dlfuzzer,
both as incurred by instrumentation, as well as by the deadlock analysis.
The latter is the time taken by \algoref{online} for the case of $\SyncPDOnline$,
and the overhead introduced due to the new executions in the second confirmation phase for the case of \dlfuzzer.
Our results show that the instrumentation overhead of \SyncPDOnline is, in fact, comparable to that of \dlfuzzer, though somewhat larger. 
This is expected, as \SyncPDOnline needs to also instrument memory access events, while \dlfuzzer only instruments lock events, but at the same time surprising because the number of memory access events
is typically much larger than the number of lock events.
On the other hand, the analysis overhead is often larger for \dlfuzzer, 
even though it reports fewer bugs.
It was not possible to measure the runtime overhead in certain benchmarks as 
either they were always deadlocking or the computation was running indefinitely.


\section{Related Work}
\seclabel{related_work}





Dynamic techniques for detecting deadlock patterns,
like the GoodLock algorithm~\cite{Havelund2000}
have been improved in performance~\cite{Cai2020,zhou2017undead}
and precision~\cite{Bensalem2005}, sometimes using re-executions to verify potential deadlocks~\cite{Samak2014,Samak2014b,Sorrentino2015,Joshi2009,Bensalem2006}.
%
Predictive analyses directly infer concurrency bugs in
alternate executions~\cite{serbanuta2013} and are typically
\emph{sound} (no false positives).
This approach has been successfully applied for
detecting bugs 
such as data races~\cite{Said11,Huang14,Smaragdakis12,Kini2017,Pavlogiannis2020,Mathur2021,Roemer20}, use-after-free vulnerabilities~\cite{Huang2018}, and more recently for deadlocks~\cite{Eslamimehr2014,Kalhauge2018,Cai2021}.



The notion of sync-preserving deadlocks has been inspired by 
a similar notion pertaining to data races~\cite{Mathur2021}.
However, sync-preserving deadlock prediction rests on some further novelties.
First, unlike data races, deadlocks can involve more than $2$ events.
Generalizing sync-preserving ideals of sets of events of arbitrary size, as well as 
establishing the monotonicity properties (\propref{spclosure-monotone} and \corref{pattern-monotone}) for arbitrarily many events is non-trivial.
Second, our notions of abstract deadlock patterns (\secref{verify-abstract-patterns}) 
and abstract lock graphs (\secref{enumerate-patterns}) are novel and carefully crafted
to leverage these monotonicity properties in the deadlock setting.
Indeed, the linear-time sync-preserving verification of each abstract deadlock pattern is the cornerstone of our approach, for the first linear-time, sound and precise deadlock predictor.

Although the basic principles of data-race and deadlock prediction are similar, there are notable differences.
First, identifying potential deadlocks is theoretically intractable, whereas, potential races are identified easily. 
Second, popular partial-order based techniques~\cite{Flanagan09,Kini2017} for data races are likely to require non-trivial modifications for deadlocks, as they typically order critical sections, which may hide a deadlock.
Nevertheless, bridging prediction techniques between data races and deadlocks is an interesting and relatively open direction.

Predicting deadlocks is an intractable problem, the complexity of which we have characterized in this work. 
Prior works have also focused on the complexity of predicting data races~\cite{Mathur2020b,Kulkarni2021} and atomicity violations~\cite{Farzan2099}.

\section{Conclusion}
\seclabel{conclusions}
We have studied the complexity of deadlock prediction and
introduced the new tractable notion of sync-preserving deadlocks,
along with sound, complete and efficient algorithms
for detecting them. 
Our experiments show that the majority of deadlocks occurring in practice are indeed sync-preserving, and our algorithm $\SyncPDOffline$ is the first deadlock predictor that achieves 
sound and high coverage, while also spending only linear time to process its input.
Our online algorithm $\SyncPDOnline$ enhances the bug detection
capability of controlled concurrency testing techniques like~\cite{Joshi2009},
at close runtime overheads.
Interesting future work includes incorporating static checks~\cite{bigfoot2017}
and
exploring ways for deeper integration of controlled concurrency testing
with predictive techniques.
Another step is to extend the coverage of sync-preserving deadlocks while maintaining efficiency, for example, by reasoning about program control flow.



\begin{acks}
Andreas Pavlogiannis was partially supported by a research grant (VIL42117) from VILLUM FONDEN.
Umang Mathur was partially supported by the Simons Institute for the Theory of Computing, and by a Singapore Ministry of Education (MoE) Academic Research Fund (AcRF) Tier 1 grant.
Mahesh Viswanathan was partially supported by NSF SHF 1901069 and NSF CCF 2007428.
\end{acks}

\section*{Data and Software Availability Statement}

The artifact developed for this work is available~\cite{artifact}, which contains all source codes and experimental data necessary to
reproduce our evaluation in~\cref{sec:experiments}, excluding the results of \seqc.


\bibliography{references}



\begin{arxiv}
\pagebreak
\appendix

\section{Proofs of \secref{lower-bounds}}\seclabel{sec:app_proofs_lower_bounds}

In this section we present rigorous proofs of our hardness results in \thmref{pattern-w1-hardness-pattern}, \thmref{pattern-ov-hardness}, and \thmref{w1-hardness-pattern}.


\subsection{Proof of \thmref{pattern-w1-hardness-pattern}}
\seclabel{w1-hardness-pattern}

Checking whether a trace of length
$\NumEvents$ has a deadlock pattern of size $k$
can be solved using a simple enumeration based algorithm
in time $O(\poly{\NumThreads} \cdot \NumEvents^\NumThreads)$.
In this section, we show that this problem is, in fact,
$\W{1}$-hard
and thus is unlikely to be FPT (fixed parameter tractable).
In other words, an algorithm with running time $O(f(\NumThreads) \cdot \poly{\NumEvents})$ 
is unlikely to exist for this problem, for any computable function $f$.

\patternwonehardness*
\begin{proof}
	We show that there is a polynomial-time FPT-reduction from
	INDEPENDENT-SET(c) to the problem of checking the existence of deadlock-patterns
	of size $c$.
	Our reduction takes as input an undirected graph $G$ and outputs a trace $\tr$
	such that $G$ has an independent set of size $c$ iff $\tr$
	has a deadlock pattern of size $c$.
	

	\myparagraph{Construction}{
		We assume  that the vertices are indexed from $1$ through $n = |V|$: $V = \set{v_1, v_2, \ldots, v_n}$.
		We also assume a total ordering $<_E$ on the set of edges $E$.
		The trace $\tr$ we construct is a concatenation of $c$ sub-traces: 
		$
		\tr = \tr^{(1)} \cdot \tr^{(2)} \cdots \tr^{(c)}
		$
		and uses $c$ threads $\set{t_1, t_2, \ldots t_c}$ and
		$|E| + c$ locks $\set{\lk_{\set{u, v}}}_{\set{u, v} \in E} \uplus \set{\lk_0, \lk_1 \ldots, \lk_{c-1}}$.
		The $i^\text{th}$ sub-trace $\tr^{(i)}$ is a sequence of events performed by thread $t_i$, and
		is obtained by concatenation of $n = |V|$ sub-traces:
		$
		\tr^{(i)} = \tr^{(i)}_1 \cdot \tr^{(i)}_2 \cdots \tr^{(i)}_n
		$.
		Each sub-trace $\tr^{(i)}_j$ with $(i \leq c, j \leq n)$ 
		comprises of nested critical sections over locks of the 
		form $\lk_{\set{v_j, u}}$, where $u$ is a neighbor of $v_j$.
		Inside the nested block we have critical 
		sections on locks $\lk_{i \% c}$ and $\lk_{(i+1) \% c}$.
		Formally, let $\set{v_j, v_{k_1}}, \ldots, \set{v_j, v_{k_d}}$
		be the neighboring edges of $v_j$ (ordered according to $<_E$).
		Then, $\tr^{(i)}_j$ is the unique string generated by the
		grammar having $d+1$ non-terminals $S_0, S_1, \ldots, S_d$, start symbol $S_d$
		and the following production rules:
		\begin{itemize}
			\item $S_0 \to \ev{t_i, \acq(\lk_{i \% c})} \cdot \ev{t_i, \acq(\lk_{(i+1) \% c})} \cdot \ev{t_i, \rel(\lk_{(i+1) \% c})} \cdot \ev{t_i, \rel(\lk_{i \% c})}$.
			\item for each $1 \leq r \leq d$, $S_r \to \ev{t_i, \acq(\lk_{\set{v_j, v_{k_r}}})} \cdot S_{r-1} \cdot \ev{t_i, \rel(\lk_{\set{v_j, v_{k_r}}})}$.
		\end{itemize}
		\figref{w1-hardness} illustrates this construction for a graph with $3$ nodes
		and parameter $c = 3$.
		Finally, observe that the lock-nesting depth in $\tr$ is bounded by 2 + the degree of $G$.
	}
	
\myparagraph{Correctness}{
	We now argue for the correctness of the construction.
	First, consider the case when $G$ has an independent set $I = \set{v_{j_1}, \ldots v_{j_c}}$ of size $\geq c$.
	Let  be $c$ distinct vertices from the independent set.
	Let $e^{(i)}_{j}$ be the innermost acquire (on lock $\lk_{(i+1) \% c}$) in the sub-trace 
	$\tr^{(i)}_{j}$.
	We show that the sequence $D = \pattern{e^{(1)}_{j_1}, e^{(2)}_{j_2} \ldots, e^{(c)}_{j_c}}$ 
	is a deadlock pattern.
	Observe that $\ThreadOf{e^{(i)}_{j_i}} = t_i \neq t_{i'} = \ThreadOf{e^{(i')}_{j_{i'}}}$ for every $i \neq i'$.
	Similarly, the locks acquired in $e^{(i)}_{j_i}$ and $e^{(i')}_{j_{i'}}$ are also distinct
	when $i \neq i'$.
	Finally, we want to show that $\lheld{\tr}(e^{(i)}_{j_i}) \cap \lheld{\tr}(e^{(i')}_{j_{i'}}) = \emptyset$ for every $i \neq i'$.
	Assume on the contrary that there is a lock $\lk \in \lheld{\tr}(e^{(i)}_{j_i}) \cap \lheld{\tr}(e^{(i')}_{j_{i'}})$.
	Clearly, $\lk$ cannot be of the form $\lk_m$ for some $0 \leq m < c$ as the only such lock held
	at $e^{(i)}_{j_i}$ is $\lk_{i \% c}$ and the only such lock held at $e^{(i')}_{j_{i'}}$ is $\lk_{i' \% c}$
	which are different.
	Thus, it must be of the form $\lk_{\set{u, v}}$ for some $\set{u, v} \in E$.
	Since it is acquired in both sub-traces  $\tr^{(i)}_{j_i}$ and $\tr^{(i')}_{j_{i'}}$,
	we have $\lk = \lk_{\set{v_i, v_{i'}}}$.
	This means that $\set{v_i, v_{i'}} \in E$ contradicting that $I$ is an independent set.

	Now, consider the case when there is a deadlock pattern of size $c$:
	$D = \pattern{e_0, e_1, \ldots, e_{c-1}}$.
	Since there are exactly $c$ threads in $\tr$, there must be one event in each thread.
	We first argue that if there is one $e_i$ that acquires a lock of the form $\lk_{\set{u, v}}$ then all events in the deadlock pattern do so.
	This follows from a simple inductive argument --- $e_{i+1}$ must acquire a lock held at $e_i$
	and thus must be of the form $\lk_{\set{u', v'}}$, and so on.
	Next, all locks of the form $\lk_{\set{u, v}}$ are acquired in the order consistent with
	the total order $<_E$.
	Hence, they cannot form a cyclic dependency.
	Thus, the locks acquired at $e_0, \ldots e_{c-1}$ are $\lk_0, \ldots, \lk_{c-1}$.
	Symmetry ensures that, you can assume, w.l.o.g that $\OpOf{e_i} = \acq(\lk_{i \% c})$.
	Let $f(i)$ be the index such that $e_i$ is in the subtrace $e^{(i)}_{f(i)}$.
	Then, we show that the set $A = \set{v_{f(1)}, v_{f(2)}, \ldots, v_{f(c)}}$ is an independent set of size $c$.
	Assume on the contrary that there are $i \neq i'$ such that $\set{v_{f(i)}, v_{f(i')}} \in E$.
	Then, our construction ensures that
	the lock $\lk_\set{v_{f(i)}, v_{f(i')}}$ will be held at both $e_i$ and $e_{i'}$
	contradicting that $D$ is a deadlock pattern.
}	

\myparagraph{Time complexity}{
		The size of the trace is $O\big(c \cdot (2 + \sum_{v \in V} \mathsf{degree}(v)) \big)$
		$ = O(c \cdot (|V| + |E|))$
		and the time taken is also the same.
	}
\end{proof}


\subsection{Proof of \thmref{pattern-ov-hardness}}
\seclabel{ov-hardness}

\patternovhardness*
\begin{proof}
	We show a fine-grained reduction from the Orthogonal Vectors Problem to
	the problem of checking for deadlock patterns of size $2$.
	For this, we start with two sets
	$A, B \subseteq \set{0, 1}^d$
	of $d$-dimensional vectors with $|A| = |B| = n$.
	We write the $i^{th}$ vector in $A$ as $A_i$ and that in $B$ as $B_i$.

	\myparagraph{Construction}{
		We will construct a trace $\tr$ such that $\tr$ has a deadlock
		pattern of length $2$ iff $(A, B)$ is a positive OV instance.
		The trace $\tr$ is of the form $\tr = \tr^A \cdot \tr^B$
		and uses $2$ threads $\set{t_A, t_B}$ and $d+2$ distinct locks $\lk_1, \ldots, \lk_d, \lk, \lk'$.
		$\tr^A$ and $\tr^B$, intuitively encode the given sets of vectors $A$ and $B$.
		The sub-traces $\tr^A = \tr^A_1 \cdot \tr^A_2 \cdots \tr^A_n$
		and $\tr^B = \tr^B_1 \cdot \tr^B_2 \cdots \tr^B_n$ are defined as follows.
		For each $i \in \set{1, 2, \ldots, n}$ and $Z \in \set{A, B}$, the
		sub-trace $\tr^Z_i$ is the unique string generated by the
		grammar having $d+1$ non-terminals $S_0, S_1, \ldots, S_d$, start symbol $S_d$
		and the following production rules:
		\begin{itemize}
			\item $S_0 \to \ev{t_Z, \acq(m)} \cdot \ev{t_Z, \acq(m')} \cdot \ev{t_Z, \rel(m')} \cdot \ev{t_Z, \rel(m)}$,
			where $(m, m') = (\lk, \lk')$ if $Z = A$, and $(m, m') = (\lk', \lk)$ otherwise.
			\item for each $1 \leq j \leq d$, $S_j \to S_{j-1}$ if $Z_i[j] = 0$.
			Otherwise (if $Z_i[j] =1$), $S_j \to \ev{t_Z, \acq(\lk_j)} \cdot S_{j-1} \cdot \ev{t_Z, \rel(\lk_j)}$.
		\end{itemize}
		In words, all events of $\tr^A$ are performed by thread $t_A$ and those
		in $\tr^B$ are performed by $t_B$.
		Next, the $i^{th}$ sub-trace of $\tr^A$, denoted $\tr^A_i$ corresponds to the vector $A_i$
		as follows --- $\tr^A_i$ is a nested block of critical sections,
		with the inner most critical section being on lock $\lk'$,
		which is immediately enclosed in a  critical section on lock $\lk$.
		Further, in the sub-trace $\tr^A_i$, the lock $\lk_j$
		occurs iff $A_i[j] = 1$.
		The sub-traces $\tr^B_i$ is similarly constructed, except that the order
		of the two inner most critical sections is inverted.
		\figref{ov-hardness} illustrates the construction for an OV-instance with $n=2$ and $d=2$.
	}

\myparagraph{Correctness}{
We now argue for the correctness of the construction.
First, if $(A, B)$ is a positive $OV$ instance, then
there are two indices $\alpha, \beta \in \set{1 ,\ldots, n}$
such that $A_\alpha$ and $B_\beta$ are orthogonal.
Thus, for every $1 \leq j \leq d$, we have $A_\alpha[j] = 0$
or $B_\beta[j] = 0$ (possibly both).
Based on the construction, this implies that for every lock $\lk_j$
(with $1 \leq j \leq d$),
$\lk_j$ occurs in at most one of the sub-traces $\tr^A_\alpha$ and $\tr^B_\beta$.
This gives a deadlock pattern $D = \pattern{e, e'}$,
where $e$ is the unique event in the sub-trace $\tr^A_\alpha$ with $\OpOf{e} = \acq(\lk')$
and $e'$ is the unique event in $\tr^B_\beta$ with $\OpOf{e'} = \acq(\lk)$.
This is because $\lk \in \lheld{\tr}(e)$, $\lk' \in \lheld{\tr}(e')$,
and each lock $\lk_j$ exists in at most one of $\lheld{\tr}(e)$
or $\lheld{\tr}(e')$ (and thus $\lheld{\tr}(e) \cap \lheld{\tr}(e') = \emptyset$).

Now, consider the case when there is a deadlock pattern $D = \pattern{e, e'}$ in $\tr$.
W.l.o.g, we can assume that
$\ThreadOf{e} = t_A$ and $\ThreadOf{e'} = t_B$.
This means there is a $\alpha$ and $\beta$ such that
$e$ occurs in $\tr^A_\alpha$ and $e'$ occurs in $\tr^B_\beta$.
Next, observe that $e$ and $e'$ can only be acquire events on locks $\set{\lk, \lk'}$
and not on the locks $\set{\lk_1, \ldots, \lk_d}$.
This is because the order of acquisition amongst $\lk_1, \ldots, \lk_d$
is always fixed and further, each of these are never acquired inside of critical sections
of $\lk$ or $\lk'$.
By the choice of threads, $\OpOf{e} = \acq(\lk')$
and $\OpOf{e'} = \acq(\lk)$.
Since $\lheld{\tr}(e) \cap \lheld{\tr}(e') = \emptyset$, 
for every $j$, $\lk_i \not\in \lheld{\tr}(e) \cap \lheld{\tr}(e')$.
This means that at most one of $A_\alpha[j] = 1$ an $B_\beta[j] = 1$ is true.
This means that the vectors $A_\alpha$ and $B_\beta$ are orthogonal.
}

	\myparagraph{Time complexity}{
		The total size of the trace is $\NumEvents = O(n\cdot d)$
		and it has $\NumLocks = d + 2 = O(d)$ locks.
		The construction can be performed in time $O(n\cdot d)$.
		Thus, if for some $\epsilon > 0$, 
		there is a $O(\poly{\NumLocks} \cdot \NumEvents^{2-\epsilon}) = O(\poly{d} \cdot n^{2-\epsilon})$ 
		algorithm for the problem of detecting deadlock patterns of length $2$,
		we would get a $O(\poly{d} \cdot n^{2-\epsilon} + n\cdot d) = O(\poly{d} \cdot n^{2-\epsilon})$
		time algorithm for OV, falsifying the OV hypothesis.
	}
\end{proof}


\subsection{$\W{1}$-hardness for Deadlock Prediction}
\seclabel{w1-hardness}

Finally, we prove that predicting deadlocks is a $\W{1}$-hard problem parameterized by the number of threads $\NumThreads$.

\wonehardness*
\begin{proof}
    We establish a simple reduction from the problem of data-race prediction to the problem of predicting deadlocks of size $2$.
	Given a trace $\tr$ and a deadlock pattern $D=\pattern{\acq_1, \acq_2}$ of size $2$, we choose a fresh variable $x\not \in \vars{\tr}$, and construct two events $\wt_1=\ev{\ThreadOf{\acq_1}, \wt(x)}$ and $\wt_2=\ev{\ThreadOf{\acq_2}, \wt(x)}$, where $x$.
	We construct a trace $\tr'$ by replacing $\acq_1$ and $\acq_2$ in $\tr$ with $\wt_1$ and $\wt_2$, respectively.
	It is straightforward to verify that any correct $\rho$ reordering of $\tr'$ that witnesses $(\wt_1, \wt_2)$ as a data race is also a correct reordering of $\tr$ that witnesses the deadlock $(\acq_1, \acq_2)$.
	Since predicting data races is $\W{1}$-hard on $\NumThreads$\cite{Mathur2020b}, the desired result follows.
\end{proof}


\section{Proofs of \secref{syncp}}
\seclabel{sec:app_proofs_characterize_patterns}

We first prove \lemref{spclosure-necessary-sufficient}.

\spclosureNecessarySufficient*


\begin{proof}
Let $\rho$ be a sync-preserving correct reordering of $\tr$ (and thus $\events{\rho} \subseteq \events{\tr}$)
such that $S \subseteq  \events{\rho}$.
Since, $\rho$ is a correct-reordering of $\tr$, for every event $e \in \events{\rho}$,
we must have $f \in \events{\rho}$ for every $f \tho{\tr} e$.
Likewise, for every read event $e \in \events{\rho}$, we must also have $\rf{\tr}(e) \in \events{\rho}$,
again because $\rho$ is a correct reordering of $\tr$.
Finally, let $e \in \events{\rho}$ be an acquire event (with $\OpOf{e} = \acq(\lk)$), and let
$f \neq e$ be another acquire event (with $\OpOf{e} = \acq(\lk)$)
with $f \trord{\tr} e$.
Since $\rho$ is a sync-preserving correct reordering of $\rho$, we must also have
$\match{\tr}(f) \in \events{\rho}$.
Thus, $\rho$ must contain $\SPClosure{\tr}(S)$.

Let us now argue that we can construct a sync-preserving correct reordering whose events are precisely $\SPClosure{\tr}(S)$.
Consider the sequence of events $\rho$ obtained by projecting $\tr$
to $\SPClosure{\tr}(S)$.
We will argue that (a) $\rho$ is a well-formed trace, 
(b) $\rho$ is a correct reordering of $\tr$, and 
(c) $\rho$ preserves the order of critical section on the same lock as in $\tr$.
For (a), we need to argue that critical sections on the same lock do not overlap in $\rho$.
This follows because if they did, then we have two acquire events $e_1 \neq e_2 \in \SPClosure{\tr}(S)$
such that $\OpOf{e_1} = \OpOf{e_2} = \acq(\lk)$ (for some $\lk$),
$e_1 \trord{\tr} e_2$ (and thus $e_1$ also appears before $e_2$ in $\rho$ by construction),
but $\match{\tr}(e_1)$ either does not appear in $\rho$ or appears later than $e_2$.
In the first case, we arrive at a contradiction to the definition of $\SPClosure{\tr}(S)$,
and in the second case, we arrive at a contradiction that $\rho$ is obtained without changing the relative order of the residual events.
Arguing (b) is straightforward: $\SPClosure{\tr}(S)$ is closed under both $\tho{\tr}$ as well as $\rf{\tr}$,
and further $\rho$ preserves the order of events as in $\tr$.
Likewise,  (c) is easy to argue because $\events{\rho} = \SPClosure{\tr}(S)$ are closed
under sync-preservation and the order of events in $\rho$ does not change with respect to $\tr$. 
\end{proof}

Let us now turn our attention to \lemref{spclosure-deadlock}.

\spclosureDeadlock*


\begin{proof}
($\Rightarrow$). Assume $D$ is a sync-preserving deadlock of $\tr$.
Let $\rho$ be the sync-preserving correct reordering of $\tr$ that witnesses $D$.
Thus, each of $e_0, \ldots, e_{k-1}$ are enabled in $\rho$.
This means that $\prev{\tr}(S) \subseteq \events{\rho}$, and also $S \cap \events{\rho} = \emptyset$.
By \lemref{spclosure-necessary-sufficient}, we have
$\SPClosure{\tr}(\prev{\tr}(S)) \subseteq \events{\rho}$, giving us
$\SPClosure{\tr}(\prev{\tr}(S)) \cap S = \emptyset$.

($\Leftarrow$). Assume $\SPClosure{\tr}(\prev{\tr}(S)) \cap S = \emptyset$.
From \lemref{spclosure-necessary-sufficient}, there exists a sync-preserving correct reordering
$\rho$ with $\prev{\tr}(S) \subseteq \SPClosure{\tr}(\prev{\tr}(S)) = \events{\rho}$.
Since $S \cap \events{\rho} = \emptyset$, the entire deadlock pattern $D$ is also enabled in $\rho$.
Thus, $D$ is a sync-preserving deadlock of $\tr$.
\end{proof}

The proof of the following proposition follows directly from the definition of sync-preserving closure.

\spclosureMonotone*

\begin{proof}
Follows from the definition of $\SPClosure{\tr}(S)$.
\end{proof}
\newpage

\section{Incomparability of $\SyncPDOffline$ and \seqc}
\applabel{incomp}
In this section we provide two example traces and discuss the deadlocks found and missed on them by \seqc and \SyncPDOffline.
\exref{incomp-ex1} discusses a trace which contains a deadlock found by \SyncPDOffline but missed by \seqc.
\exref{incomp-ex2} discusses a trace which contains a deadlock found by \seqc but missed by \SyncPDOffline.
These examples showcase the incomparability of the techniques \SyncPDOffline and \seqc.

\begin{example}
\exlabel{incomp-ex1}
\figref{incomp1} displays a trace $\sigma$ in which \SyncPDOffline reports a deadlock whereas \seqc misses the deadlock.
In this trace, there is a single deadlock pattern $\pattern{e_4, e_{14}}$.
This deadlock pattern can be realized by the sync-preserving correct reordering $\rho = e_3, e_{12}, e_{13}, e_8, e_9$.
However, this deadlock is missed by \seqc.
The reason for this can be explained as follows.
First, observe that a correct reordering of $\sigma$ would need to include the event $e_{13}$ and consequently $e_9 = \rf{\sigma}(e_{13})$.
A strategy that \seqc employs when it is producing correct reorderings is that it aims to close all the critical sections that appear in the reordering.
In order to fulfill this constraint, \seqc closes the critical section on lock $\LockColorOne{\lk_1}$ in thread $t_1$.
This requires \seqc to execute the read event $e_{10}$ as well as the event $e_7 = \rf{\sigma}(e_{10})$.
This in turn causes \seqc to execute the events $e_3\ldots e_6$.
Observe that the first event that appears in the deadlock pattern, $e_4$, can no longer be scheduled concurrently with $e_{14}$.
Hence, \seqc is unable to report a deadlock in this trace.
\end{example}

\begin{example}
\exlabel{incomp-ex2}
\figref{incomp2} display a trace $\sigma$ in which \seqc reports one more deadlock than \SyncPDOffline.
In this trace there are two deadlock patterns $\pattern{e_2, e_{6}}$ and $\pattern{e_2, e_{8}}$.
The first deadlock pattern $\pattern{e_2, e_{6}}$ can be realized to a real deadlock by both \seqc and \SyncPDOffline. 
The second deadlock pattern  $\pattern{e_2, e_{8}}$ cannot be realized to a real deadlock by \SyncPDOffline.
Whereas, \seqc is able to report this as a real deadlock.
The reason why \SyncPDOffline misses the second deadlock can be explained as follows.
The computed sync-preserving closure for the second deadlock pattern is $\SPClosure{\tr}(\prev{\tr}(\{ e_{2},e_{8}\})) = \set{e_1,\ldots, e_4,e_5,\ldots,e_7}$.
Since the \SyncPDOffline algorithm cannot reverse the order of critical sections on a given lock, 
it always enforces to close the critical section on lock $\LockColorOne{\lk_1}$ in thread $t_2$ after the one in $t_1$.
This brings the event $e_2$ into the sync-preserving closure of $\pattern{e_2, e_{8}}$
and disables \SyncPDOffline from realizing this deadlock pattern.
\end{example}


\begin{figure}[t]
\centering
\scalebox{1}{
\execution{4}{
\figev{4}{$\acq(\LockColorOne{\lk_1})$}
\figev{4}{$\rel(\LockColorOne{\lk_1})$}
\figev{3}{$\acq(\LockColorTwo{\lk_2})$}
\figev{3}{$\mathbf{\Bacq(\LockColorThree{\lk_3})}$}
\figev{3}{$\rel(\LockColorThree{\lk_3})$}
\figev{3}{$\rel(\LockColorTwo{\lk_2})$}
\figev{3}{$\wt(y)$}
\figev{1}{$\acq(\LockColorOne{\lk_1})$}
\figev{1}{$\wt(x)$}
\figev{1}{$\rd(y)$}
\figev{1}{$\rel(\LockColorOne{\lk_1})$}
\figev{2}{${\acq(\LockColorThree{\lk_3})}$}
\figev{2}{$\rd(x)$}
\figev{2}{$\mathbf{\Bacq(\LockColorTwo{\lk_2})}$}
\figev{2}{$\rel(\LockColorTwo{\lk_2})$}
\figev{2}{$\rel(\LockColorThree{\lk_3})$}
}
}
\caption{
A trace $\tr$ with a sync-preserving deadlock, stalling $t_2$ on $e_{14}$ and $t_3$ on $e_{4}$.
\figlabel{incomp1}
}
\end{figure}

\begin{figure}[t]
\centering
\scalebox{1}{
\execution{2}{
\figev{1}{$\acq(\LockColorOne{\lk_1})$}
\figev{1}{$\mathbf{\Bacq(\LockColorTwo{\lk_2})}$}
\figev{1}{$\rel(\LockColorTwo{\lk_2})$}
\figev{1}{$\rel(\LockColorOne{\lk_1})$}
\figev{2}{${\acq(\LockColorTwo{\lk_2})}$}
\figev{2}{$\mathbf{\Bacq(\LockColorOne{\lk_1})}$}
\figev{2}{$\rel(\LockColorOne{\lk_1})$}
\figev{2}{$\mathbf{\Bacq(\LockColorOne{\lk_1})}$}
\figev{2}{$\rel(\LockColorOne{\lk_1})$}
\figev{2}{$\rel(\LockColorTwo{\lk_2})$}
}
}
\caption{
A  trace $\tr$ with a sync-preserving deadlock, stalling $t_1$ on $e_{2}$ and $t_2$ on $e_{6}$ and a predictable but not sync-preserving deadlock stalling $t_1$ on $e_{2}$ and $t_2$ on $e_{8}$.
\figlabel{incomp2}
}
\end{figure}

\section{Unsoundness Of \dirk}
\applabel{unsound-dirk}
In this section we provide two example programs that showcase the unsoundness of \dirk.
\exref{dirk-unsound1} discusses a program which is a modified version of the Transfer benchmark.
\dirk reports a false deadlock in this example due to that it is aiming to relax the standard definition of correct
reorderings and allow a read event to read from a different write than in the original trace.
Note that this relaxation enables \dirk to predict deadlocks in the benchmarks Transfer and Deadlock, which are missed by $\SyncPDOffline$ and \seqc. 
However, this relaxation is not fully formalized and the implementation of it contains errors.
\exref{dirk-unsound2} discusses a crafted program in which \dirk is reporting a false deadlock due to missing that a common lock protects an otherwise deadlock pattern.

\begin{example}
\exlabel{dirk-unsound2}
	\figref{dirk-unsound-2} displays a program in which \dirk makes a false deadlock report. 
	The program spawns three threads \texttt{T1}, \texttt{T2}, and \texttt{T3} and  
	there exist three locks \texttt{L1}, \texttt{L2}, and \texttt{L3} which are used in these threads.
	In the program, \texttt{T1} acquires the lock \texttt{L1} and then forks \texttt{T2}, and it holds the lock \texttt{L1} until \texttt{T2} is joined.
	Even though the locks \texttt{L2} and \texttt{L3} are acquired in a cyclic manner 
	in the threads  \texttt{T2} and \texttt{T3}, 
	this program does not exhibit a deadlock due to the common lock \texttt{L1}.
	Therefore, this program does not exhibit a deadlock.
	However, \dirk fails to recognize the common lock \texttt{L1} and reports a false deadlock.
	This observed behavior is not an implementation error, but it is in alignment with the underlying theory of \dirk~\cite{Kalhauge2018}.
	In particular, \dirk makes SMT queries to decide the existence of deadlocks, but its encoding unsoundly neglects the condition that deadlocking events must have disjoint lock sets.  
\end{example}

\begin{example}
\exlabel{dirk-unsound1}
	\figref{dirk-unsound-1} displays a program in which \dirk makes a false deadlock report. 
	This program is a modified version of the Transfer benchmark, which was used as a running example to explain \dirk~\cite{Kalhauge2018}.
	The example contains the classes \texttt{Transfer1}, \texttt{Transfer2}, \texttt{Store} and \texttt{UserObject}. 
	The \texttt{Store} class contains two synchronized methods, \texttt{transferTo} and \texttt{add}.
	The \texttt{transferTo} takes another \texttt{Store} object as argument and calls the \texttt{add} method by using this object.
	Notice that if two different \texttt{Store} objects simultaneously call the \texttt{transferTo} 
	method by passing each other as the argument, then this can potentially lead to a deadlock.
	Observe that in the given program there exists such a scenario where \texttt{transferTo} is called 
	with opposite \texttt{Store} objects in the classes \texttt{Transfer1} and \texttt{Transfer2}.
	However, in this case the program can never actually deadlock.
	The reason is that the \texttt{transferTo} method in \texttt{Transfer2} class can only be executed if the \texttt{UserObject.data} can be cast to an integer value.
	The \texttt{UserObject.data} is a volatile variable that is shared by the classes \texttt{Transfer1} and \texttt{Transfer2}. 
	This variable is initialized to a string value and an integer value is assigned to this variable only after the
	\texttt{transferTo} method has been executed in \texttt{Transfer1}. 
	Hence, this program does not exhibit a deadlock.
	However, \dirk reports a deadlock in this program.
	The reason is that \dirk unsoundly concludes that reading the value of \texttt{UserObject.data} at Line $54$ does not affect the control flow of the program.
	Hence, in case the value of \texttt{UserObject.data} object in Line $54$ is read from the write event on Line $39$, 
	\dirk ignores this dependency and concludes that Lines $38$ and $54$ can execute concurrently and reports a false deadlock. 
\end{example}


\begin{figure}[h!]
	\hspace{1cm}
	\begin{subfigure}[t]{.43\linewidth}
		\lstinputlisting[language = Java,firstline=0,lastline=32]{figures/FalsePositive1.java}
	\end{subfigure}
	\hfill
	\begin{subfigure}[t]{.45\linewidth}
		\lstinputlisting[language = Java,firstline=32,lastline=65,firstnumber=33]{figures/FalsePositive1.java}
	\end{subfigure}
	\caption{FalseDeadlock1.java}
	\figlabel{dirk-unsound-2}
\end{figure}

\begin{figure}[h!]
	\hspace{1cm}
	\begin{subfigure}[t]{.43\linewidth}
		\lstinputlisting[language = Java,firstline=0,lastline=41]{figures/FalsePositive2.java}
	\end{subfigure}
\hfill
	\begin{subfigure}[t]{.43\linewidth}
		\lstinputlisting[language = Java,firstline=42,lastline=82,firstnumber=42]{figures/FalsePositive2.java}
	\end{subfigure}
	\caption{FalseDeadlock2.java}
	\figlabel{dirk-unsound-1}
\end{figure}

\end{arxiv}

\end{document}